%% LyX 1.6.2 created this file.  For more info, see http://www.lyx.org/.
%% Do not edit unless you really know what you are doing.
\documentclass[12pt,letterpaper,twoside,english,draftclsnofoot, onecolumn]{IEEEtran}
\usepackage[T1]{fontenc}
\usepackage[latin9]{inputenc}
\usepackage{array}
\usepackage{amsmath}
\usepackage{graphicx}
\usepackage{amssymb}

%%%%%%%%%%%%%%%%%%%%%%%%%%%%%% LyX specific LaTeX commands.
%% Because html converters don't know tabularnewline
\providecommand{\tabularnewline}{\\}
%% A simple dot to overcome graphicx limitations

%%%%%%%%%%%%%%%%%%%%%%%%%%%%%% User specified LaTeX commands.
\usepackage{color, graphicx}
\usepackage{cite, bbm}
\usepackage{epsfig, psfrag, subfigure, amssymb}

\DeclareFontFamily{OT1}{pzc}{}
% s = silently
% Load font 'pzcmi7t.cfg' at the size magnified by a factor of 1.35
\DeclareFontShape{OT1}{pzc}{m}{it}{<-> s * [1.35] pzcmi7t}{}
\DeclareMathAlphabet{\mathpzc}{OT1}{pzc}{m}{it}

\interdisplaylinepenalty=2500

\newtheorem{definition}{Definition}[section]
\newtheorem{theorem}{Theorem}[section]
\newtheorem{corollary}{Corollary}[section]

\newtheorem{property}{Property}[section]

\usepackage{babel}

\begin{document}

\title{Secure Communication in\\
Stochastic Wireless Networks}

\author{Pedro C.\ Pinto, \emph{Student Member, IEEE}, João Barros,\emph{
Member, IEEE},\\
and Moe Z.\ Win, \emph{Fellow, IEEE}%
\thanks{The present document is a draft submitted for publication on November
24, 2009.%
}%
\thanks{P.~C.~Pinto and M.~Z.~Win are with the Laboratory for Information
and Decision Systems (LIDS), Massachusetts Institute of Technology,
Room~\mbox{32-D674}, 77~Massachusetts Avenue, Cambridge, MA 02139,
USA (\mbox{e-mail}: \texttt{ppinto@mit.edu}, \texttt{moewin@mit.edu}).
J.~Barros is with Departamento de Engenharia Electrotécnica e de
Computadores, Faculdade de Engenharia da Universidade do Porto, Portugal
(\mbox{e-mail}: \texttt{jbarros@fe.up.pt}).%
}%
\thanks{This research was supported, in part, by the Portuguese Science and
Technology Foundation under grant SFRH-BD-17388-2004; the MIT Institute
for Soldier Nanotechnologies; the Office of Naval Research under Presidential
Early Career Award for Scientists and Engineers (PECASE) N00014-09-1-0435;
and the National Science Foundation under grant ECS-0636519.%
}\underbar{}\\
\underbar{}\\
\underbar{}\\
\underbar{}\\
\underbar{Corresponding Address:}\\
Pedro C. Pinto\\
Laboratory for Information and Decision Systems (LIDS)\\
Massachusetts Institute of Technology (MIT)\\
77 Massachusetts Avenue, Room 32-D674\\
Cambridge, MA 02139 USA\\
~\\
Tel.: (857) 928-6444\\
e-mail: \texttt{ppinto@mit.edu}}

\maketitle
%\markboth{IEEE Transactions on Wireless Communications, Vol. XX, No. Y, Month 2006}{Pinto and Win: Communication in a Poisson Field of Interferers}

\thispagestyle{empty}

\newpage

\setcounter{page}{1}

\global\long\def\M{\ell}

\global\long\def\E{\textrm{e}}

\global\long\def\WM{\sigma_{\M}^{2}}

\global\long\def\WE{\sigma_{\E}^{2}}

\global\long\def\W{\sigma^{2}}

\global\long\def\PiM{\Pi_{\M}}

\global\long\def\PiE{\Pi_{\E}}

\global\long\def\LM{\lambda_{\M}}

\global\long\def\LE{\lambda_{\E}}

\global\long\def\PM{P_{\M}}

\global\long\def\PE{P_{\textrm{rx},\E}}

\global\long\def\PET{\widetilde{P}_{\textrm{rx},\E}}

\global\long\def\rM{r_{\M}}

\global\long\def\rate{\mathpzc{R}}

\global\long\def\CM{\rate_{\M}}

\global\long\def\CE{\rate_{\E}}

\global\long\def\Cs{\rate_{\mathrm{s}}}

\global\long\def\cs#1{\rate_{\mathrm{s},#1}}

\global\long\def\cM#1{\rate_{\M,#1}}

\global\long\def\RM#1{R_{\M,#1}}

\global\long\def\RE#1{R_{\E,#1}}

\global\long\def\zM{Z_{\M}}

\global\long\def\LambdaM{\Lambda_{\M}}

\global\long\def\LambdaE{\Lambda_{\E}}

\global\long\def\ZM#1{Z_{\M,#1}}

\global\long\def\ZE#1{Z_{\E,#1}}

\global\long\def\lM#1{L_{\M,#1}}

\global\long\def\lE#1{L_{\E,#1}}

\global\long\def\An{\mathcal{D}}

\global\long\def\th{\varrho}

\global\long\def\ff{\psi}

\global\long\def\is{i\mathcal{S}\textrm{-}}

\newpage{}

\tableofcontents{}

\newpage{}
\begin{abstract}
Information-theoretic security -- widely accepted as the strictest
notion of security -- relies on channel coding techniques that exploit
the inherent randomness of the propagation channels to significantly
strengthen the\emph{ }security of digital communications systems.
Motivated by recent developments in the field, this paper aims at
a characterization of the fundamental secrecy  limits of wireless
networks. Based on a general model in which legitimate nodes and potential
eavesdroppers are randomly scattered in space, the \emph{intrinsically
secure communications graph} ($\is$graph) is defined from the point
of view of information-theoretic security. Conclusive results are
provided for the local connectivity of the Poisson $\is$graph, in
terms of node degrees and isolation probabilities. It is shown how
the secure connectivity of the network varies with the wireless propagation
effects, \textit{\emph{the secrecy rate threshold of each link, and
the noise powers of legitimate nodes and eavesdroppers. }}Sectorized
transmission and eavesdropper neutralization are explored as viable
strategies for improving the secure connectivity. Lastly, the maximum
secrecy rate between a node and each of its neighbours is characterized,
and the case of colluding eavesdroppers is studied. The results help
clarify how the spatial density of eavesdroppers can compromise the
intrinsic security of wireless networks.\end{abstract}
\begin{keywords}
Physical-layer security, wireless networks, stochastic geometry, secure
connectivity, node degree, secrecy capacity, colluding eavesdroppers.
\end{keywords}

\section{Introduction\label{sec:Introduction}}

Contemporary security systems for wireless networks are based on cryptographic
primitives that generally ignore two key factors: (a)~the physical
properties of the wireless medium, and (b)~the spatial configuration
of both the legitimate and malicious nodes. These two factors are
important since they affect the communication channels between the
nodes, which in turn determine the fundamental secrecy limits of a
wireless network. In fact, the inherent randomness of the physics
of the wireless medium and the spatial location of the nodes can be
leveraged to provide \emph{intrinsic security} of the communications
infrastructure at the physical-layer level.%
\footnote{In the literature, the term {}``security'' typically encompasses
3~different characteristics: \emph{secrecy} (or privacy), \emph{integrity},
and \emph{authenticity}. This paper does not consider the issues of
integrity or authenticity, and the terms {}``secrecy and {}``security''
are used interchangeably.%
}

The basis for information-theoretic security, which builds on the
notion of perfect secrecy~\cite{Sha:49}, was laid in \cite{Wyn:75}
and later in \cite{CsiKor:78}. Moreover, almost at the same time,
the basic principles of public-key cryptography, which lead to the
predominance of computational security, were published in \cite{DifHel:76}.
More recently, there has been a renewed interest in information-theoretic
security over wireless channels. Space-time signal processing techniques
for secure communication over wireless links are introduced in \cite{Her:03}.
The secrecy of cooperative relay broadcast channels is considered
in \cite{EkrUlu:08}. The case of a fixed number of colluding eavesdroppers
placed at the same location is analyzed in \cite{GoeNeg:05}. The
scenario of compound wiretap channels is considered in \cite{LiaKraPooSha:08}.
The capacity of cognitive interference channels with secrecy constraints
is analyzed in \cite{LiaSomPooShaVer:09}. The achievable secret communication
rates using multiple-input multiple-output communications are investigated
in \cite{EkrUlu:09,NegGoe:05,LiuSha:09,WeiLiuShaSteVis:09,ZhaZhaLiaXinCui:09}.
The secrecy capacity of various degraded fading channels is established
in \cite{ParBla:05}. A detailed characterization of the outage secrecy
capacity of slow fading channels is provided in \cite{BloBarRodMcl:08}.
The ergodic secrecy capacity of fading channels was derived independently
in \cite{LiaPooSha:08,LiYatTra:06,GopLaiGam:06}. The notion of strong
secrecy for wireless channels is introduced in \cite{BarBlo:08}.
Some secrecy properties of random geometric graphs were presented
in \cite{Hae:08}.

We are interested in the fundamental secrecy limits of large-scale
wireless networks. The spatial location of the nodes can be modeled
either deterministically or stochastically. Deterministic models include
square, triangular, and hexagonal lattices in the two-dimensional
plane~\cite{SilKle:83,MatMat:95,FerTon:04}, which are applicable
when the position of the nodes in the network is known exactly or
is constrained to a regular structure. In contrast, in many important
scenarios, only a statistical description of the node positions is
available, and thus a stochastic spatial model is the natural choice.
In particular, the Poisson point process~\cite{Kin:93} has been
successfully used in the context of wireless networks, most notably
in what concerns connectivity and coverage~\cite{BetHar:05,MioAlt:05,OrrBar:03},
throughput~\cite{Dar:96,ConDar:04}, interference~\cite{WinPinShe:J09,SalZan:07,SalZan:08,ChiGio:09},
environmental monitoring~\cite{DarConBurVer:07}, and sensor cooperation~\cite{QueDarWin:J07},
among other topics.

In this paper, we aim at a mathematical characterization of the secrecy
properties of stochastic wireless networks. The main contributions
are as follows:
\begin{itemize}
\item \textit{Framework for intrinsic security in stochastic networks:}
We introduce an information-theoretic definition of the intrinsically
secure communications graph ($\is$graph), based on the notion of
strong secrecy. Our framework considers spatially scattered users
and eavesdroppers, subject to generic wireless propagation characteristics.
\item \textit{Local connectivity in the $\is$graph:} We provide a complete
probabilistic characterization of both in-degree and out-degree of
a typical node in the Poisson $\is$graph, using fundamental tools
of stochastic geometry.
\item \emph{Techniques for communication with enhanced secrecy: }We proposed
sectorized transmission and eavesdropper neutralization as two techniques
for enhancing the secrecy of communication, and quantify their effectiveness
in terms of the resulting average node degrees.
\item \emph{Maximum secrecy rate (MSR) in the $\is$graph:} We provide a
complete probabilistic characterization of the MSR between a typical
node of the Poisson $\is$graph and each of its neighbors. In addition,
we derive expressions for the probability of existence of a non-zero
MSR, and the probability of secrecy outage\textit{\emph{.}}
\item \emph{The case of colluding eavesdroppers:} We provide a characterization
of the MSR and average node degrees for scenarios in which the eavesdroppers
are allowed to collude, i.e, exchange and combine information. \textit{\emph{We
quantify exactly how eavesdropper collusion degrades the secrecy properties
of the legitimate nodes, in comparison to  a non-colluding scenario.}}
\end{itemize}
This paper is organized as follows. Section~\ref{sec:System-Model}
describes the system model. Section~\ref{sec:Node-Degrees} characterizes
local connectivity in the Poisson $\is$graph. Section~\ref{sec:Techniques}
analyzes two techniques for enhancing the secrecy of communication.
Section~\ref{sec:Secrecy-Capacity} considers the MSR between a node
and its neighbours. Section~\ref{sec:Secrecy-Colluding} characterizes
the case of colluding eavesdroppers. Section~\ref{sec:Conclusion}
concludes the paper and summarizes important findings.

\section{System Model\label{sec:System-Model}}

We start by describing our system model and defining our measures
of secrecy. The notation and symbols used throughout the paper are
summarized in Table~\ref{tab:notation}.

\subsection{Wireless Propagation Characteristics\label{sub:Propagation}}

In a wireless environment, the received power~$P_{\mathrm{rx}}(x_{i},x_{j})$
associated with the link~$\overrightarrow{x_{i}x_{j}}$ can be written
as\begin{equation}
P_{\mathrm{rx}}(x_{i},x_{j})=\PM\cdot g(x_{i},x_{j},Z_{x_{i},x_{j}}),\label{eq:Pr}\end{equation}
where $\PM$ is the (common) transmit power of the legitimate nodes;
and $g(x_{i},x_{j},Z_{x_{i},x_{j}})$ is the power gain of the link~$\overrightarrow{x_{i}x_{j}}$,
where the random variable (RV)~$Z_{x_{i},x_{j}}$ represents the
random propagation effects (such as multipath fading or shadowing)
associated with link~$\overrightarrow{x_{i}x_{j}}$. We consider
that the $Z_{x_{i},x_{j}},x_{i}\neq x_{j}$ are independent identically
distributed (IID) RVs with common probability density function (PDF)~$f_{Z}(z)$,
and that $Z_{x_{i},x_{j}}=Z_{x_{j},x_{i}}$ due to channel reciprocity.
The channel gain~$g(x_{i},x_{j},Z_{x_{i},x_{j}})$ is considered
constant (quasi-static) throughout the use of the communications channel,
which corresponds to channels with a large coherence time. The gain
function is assumed to satisfy the following conditions:
\begin{enumerate}
\item $g(x_{i},x_{j},Z_{x_{i},x_{j}})$ depends on $x_{i}$ and $x_{j}$
only through the link length~$|x_{i}-x_{j}|$; with abuse of notation,
we can write $g(r,z)\triangleq g(x_{i},x_{j},z)|_{|x_{i}-x_{j}|\rightarrow r}$.
\item $g(r,z)$ is continuous and strictly decreasing\emph{ }in $r$.
\item $\lim_{r\rightarrow\infty}g(r,z)=0$.
\end{enumerate}
The proposed model is general enough to account for common choices
of $g$. One example is the unbounded model where $g(r,z)=\frac{z}{r^{2b}}$.
The term~$\frac{1}{r^{2b}}$ accounts for the far-field path loss
with distance, where the amplitude loss exponent~$b$ is environment-dependent
and can approximately range from $0.8$ (e.g.,~hallways inside buildings)
to $4$ (e.g.,~dense urban environments), with $b=1$ corresponding
to free space propagation. This model is analytically convenient~\cite{WinPinShe:J09},
but since the gain becomes unbounded as the distance approaches zero,
it must be used with care for extremely dense networks. Another example
is the bounded model where $g(r,z)=\frac{z}{1+r^{2b}}$. This model
has the same far-field dependence as the unbounded model, but eliminates
the singularity at the origin. Unfortunately, it often leads to intractable
analytical results. The effect of the singularity at $r=0$ on the
performance evaluation of a wireless system is considered in \cite{InaChiPooWic:09}.

Furthermore, by appropriately choosing of the distribution of $Z_{x_{i},x_{j}}$,
both models can account for various random propagation effects~\cite{WinPinShe:J09},
including:
\begin{enumerate}
\item \emph{Path loss only:} $Z_{x_{i},x_{j}}=1$. 
\item \emph{Path loss and Nakagami-$m$ fading:} $Z_{x_{i},x_{j}}=\alpha_{x_{i},x_{j}}^{2}$
, where $\alpha_{x_{i},x_{j}}^{2}\sim\mathcal{G}(m,\frac{1}{m})$.%
\footnote{We use $\mathcal{G}(x,\theta)$ to denote a gamma distribution with
mean~$x\theta$ and variance~$x\theta^{2}$.%
} 
\item \emph{Path loss and log-normal shadowing:} $Z_{x_{i},x_{j}}=\exp(2\sigma_{\mathrm{s}}G_{x_{i},x_{j}})$,
where $G_{x_{i},x_{j}}\sim\mathcal{N}(0,1)$.%
\footnote{We use $\mathcal{N}(\mu,\sigma^{2})$ to denote a Gaussian distribution
with mean~$\mu$ and variance~$\sigma^{2}$.%
} The term~$\exp(2\sigma_{\mathrm{s}}G_{x_{i},x_{j}})$ has a log-normal
distribution, where $\sigma_{\mathrm{s}}$ is the shadowing coefficient. 
\item \emph{Path loss, Nakagami-$m$ fading, and log-normal shadowing:}
$Z_{x_{i},x_{j}}=\alpha_{x_{i},x_{j}}^{2}\exp(2\sigma_{\mathrm{s}}G_{x_{i},x_{j}})$,
where $\alpha_{x_{i},x_{j}}^{2}\sim\mathcal{G}(m,\frac{1}{m})$, $G_{x_{i},x_{j}}\sim\mathcal{N}(0,1)$,
with $\alpha_{x_{i},x_{j}}$ independent of $G_{x_{i},x_{j}}$.
\end{enumerate}

\subsection{Wireless Information-Theoretic Security}

We now define our measure of secrecy more precisely. While our main
interest is targeted towards the behavior of large-scale networks,
we briefly review the setup for a single legitimate link with a single
eavesdropper. The results thereof will serve as basis for the notion
of $\is$graph to be established later.

Consider the model depicted in Fig.~\ref{fig:wiretap-channel}, where
a legitimate user (Alice) wants to send messages to another user (Bob).
Alice encodes a message~$s$, represented by a discrete RV, into
a codeword, represented by the complex random sequence of length~$n$,
$x^{n}=(x(1),\ldots,x(n))\in\mathbb{C}^{n}$, for transmission over
the channel. Bob observes the output of a discrete-time channel (the
\emph{legitimate} \emph{channel}), which at time~$i$ is given by\[
y_{\ell}(i)=h_{\M}\cdot x(i)+w_{\ell}(i),\quad1\leq i\leq n,\]
where $h_{\M}\in\mathbb{C}$ is the quasi-static amplitude gain of
the legitimate channel,%
\footnote{The amplitude gain~$h_{\L}$ can be related to the power gain in
(\ref{eq:Pr}) as $g(\rM,\zM)=|h_{\M}|^{2}$, where $\rM$ and $\zM$
are, respectively, the length and random propagation effects of the
legitimate link. %
} and $w_{\ell}(i)\sim\mathcal{N}_{\mathrm{c}}(0,\WM)$ is AWGN with
power~$\WM$ per complex sample.%
\footnote{We use $\mathcal{N}_{\textrm{c}}(0,\sigma^{2})$ to denote a CS complex
Gaussian distribution, where the real and imaginary parts are IID
$\mathcal{N}(0,\sigma^{2}/2)$.%
} Bob makes a decision~$\hat{s}_{\M}$ on $s$ based on the output~$y_{\ell}$,
incurring in an error probability equal to $\mathbb{P}\{\hat{s}_{\M}\neq s\}.$
A third party (Eve) is also capable of eavesdropping on Alice's transmissions.
Eve observes the output of a discrete-time channel (the \emph{eavesdropper's
channel}), which at time~$i$ is given by\[
y_{\mathrm{e}}(i)=h_{\mathrm{e}}\cdot x(i)+w_{\mathrm{e}}(i),\quad1\leq i\leq n,\]
where $h_{\mathrm{e}}\in\mathbb{C}$ is the quasi-static amplitude
gain of the eavesdropper channel, and $w_{\mathrm{e}}(i)\sim\mathcal{N}_{\mathrm{c}}(0,\WE)$
is AWGN with power~$\WE$ per complex sample. It is assumed that
the signals~$x$, $h_{\M}$, $h_{\mathrm{e}}$, $w_{\M}$, and $w_{\mathrm{e}}$
are mutually independent. Each codeword transmitted by Alice is subject
to the average power constraint of $\PM$ per complex symbol, i.e.,
\begin{equation}
\frac{1}{n}\sum_{i=1}^{n}\mathbb{E}\{|x(i)|^{2}\}\leq\PM.\label{eq:power-constraint}\end{equation}
We define the transmission rate between Alice and Bob as\[
\rate\triangleq\frac{H(s)}{n},\]
where $H(\cdot)$ denotes the entropy function. 

Throughout the paper, we use \emph{strong secrecy} as the condition
for information-theoretic security, and define it as follows~\cite{MauWol:00}.

\begin{definition}[Strong Secrecy]\label{def:strong-secrecy}The
rate~$\rate^{\!\!*}$ is said to be \emph{achievable with} \emph{strong
secrecy} if $\forall\epsilon>0$, for sufficiently large~$n$, there
exists an encoder-decoder pair with rate~$\rate$ satisfying the
following conditions:\begin{align*}
\rate & \geq\rate^{\!\!*}-\epsilon,\\
H(s|y_{\mathrm{e}}^{n}) & \geq H(s)-\epsilon,\\
\mathbb{P}\{\hat{s}_{\M}\neq s\} & \leq\epsilon.\end{align*}

\end{definition}

We define the \emph{maximum secrecy rate} (MSR)~$\Cs$ of the legitimate
channel to be the maximum rate~$\rate^{\!\!*}$ that is achievable
with strong secrecy.%
\footnote{See \cite{BarBlo:08} for a comparison between the concepts of weak
and strong secrecy. In the case of Gaussian noise, the MSR is \emph{the
same }under the weak and strong secrecy conditions.%
} If the legitimate link operates at a rate below the MSR~$\Cs$,
there exists an encoder-decoder pair such that the eavesdropper is
unable to obtain additional information about $s$ from the observation~$y_{\mathrm{e}}^{n}$,
in the sense that $H(s|y_{\mathrm{e}}^{n})$ approaches $H(s)$ as
the codeword length~$n$ grows. It was shown in \cite{LeuHel:78,BloBarRodMcl:08}
that for a given realization of the channel gains~$h_{\M},h_{\mathrm{e}}$,
the MSR of the Gaussian wiretap channel is\begin{equation}
\Cs(x_{i},x_{j})=\left[\log_{2}\left(1+\frac{\PM\cdot|h_{\M}|^{2}}{\WM}\right)-\log_{2}\left(1+\frac{\PM\cdot|h_{\mathrm{e}}|^{2}}{\WE}\right)\right]^{+},\label{eq:Cs-Gaussian}\end{equation}
in bits per complex dimension, where $[x]^{+}=\max\{x,0\}$.%
\footnote{Operationally, the MSR~$\Cs$ can be achieved if Alice first estimates
$h_{\M}$ and $h_{\mathrm{e}}$ (i.e., has full CSI), and then uses
a code that achieves MSR in the AWGN channel. Estimation of $h_{\mathrm{e}}$
is possible, for instance, when Eve is another active user in the
wireless network, so that Alice can estimate the eavesdropper\textquoteright{}s
channel during Eve\textquoteright{}s transmissions. As we shall see,
the $\is$graph model presented in this paper relies on\emph{ }an
outage formulation, and therefore does \emph{not} make assumptions
concerning availability of full CSI.%
} In the next sections, we use these basic results to analyze secrecy
in large-scale networks.

\subsection{$\is$Graph}

Consider a wireless network where legitimate nodes and potential eavesdroppers
are randomly scattered in space, according to some point process.
The $\is$graph is a convenient geometrical representation of the
information-theoretically secure links that can be established on
such network. In the following, we introduce a precise definition
of the $\is$graph, based on the notion of strong secrecy.

\begin{definition}[$\is$graph]Let $\PiM=\{x_{i}\}\subset\mathbb{R}^{d}$
denote the set of legitimate nodes, and $\PiE=\{e_{i}\}\subset\mathbb{R}^{d}$
denote the set of eavesdroppers. The \emph{$\is$graph} is the directed
graph~$G=\{\PiM,\mathcal{E}\}$ with vertex set~$\PiM$ and edge
set\begin{equation}
\mathcal{E}=\{\overrightarrow{x_{i}x_{j}}:\Cs(x_{i},x_{j})>\th\},\label{eq:edges-general}\end{equation}
where $\th$ is a threshold representing the prescribed infimum secrecy
rate for each communication link; and $\Cs(x_{i},x_{j})$ is the MSR,
for a given realization of the channel gains, of the link between
the transmitter~$x_{i}$ and the receiver~$x_{j}$, given by \begin{equation}
\Cs(x_{i},x_{j})=\left[\log_{2}\left(1+\frac{P_{\mathrm{rx}}(x_{i},x_{j})}{\WM}\right)-\log_{2}\left(1+\frac{P_{\mathrm{rx}}(x_{i},e^{*})}{\WE}\right)\right]^{+},\label{eq:Cs-ij-general}\end{equation}
 with\begin{equation}
e^{*}=\underset{e_{k}\in\PiE}{\mathrm{argmax}}\: P_{\mathrm{rx}}(x_{i},e_{k}).\label{eq:e*-general}\end{equation}
\end{definition}

This definition presupposes that the eavesdroppers are not allowed
to \emph{collude} (i.e.,~they cannot exchange or combine information),
and therefore only the eavesdropper with the strongest received signal
from $x_{i}$ determines the MSR between $x_{i}$ and $x_{j}$. The
case of colluding eavesdroppers is analyzed in Section~\ref{sec:Secrecy-Colluding}.

The $\is$graph admits an outage interpretation, in the sense that
legitimate nodes set a target secrecy rate~$\th$ at which they transmit
without knowing the channel state information (CSI) of the legitimate
nodes and eavesdroppers. In this context, an edge between two nodes
signifies that the corresponding channel is not in secrecy outage.

Consider now the particular scenario where the following conditions
hold: (a)~the infimum desired secrecy rate is zero, i.e.,~$\th=0$;
(b)~the wireless environment introduces only path loss, i.e.,~$Z_{x_{i},x_{j}}=1$
in (\ref{eq:Pr}); and (c)~the noise powers of the legitimate users
and eavesdroppers are equal, i.e.,~$\WM=\WE=\W$. Note that by setting
$\th=0$, we are considering the \emph{existence} of secure links,
in the sense that an edge~$\overrightarrow{x_{i}x_{j}}$ is present
if and only if $\Cs(x_{i},x_{j})>0$. Thus, a positive (but possibly
small) rate exists at which $x_{i}$ can transmit to $x_{j}$ with
information-theoretic security. In this scenario, (\ref{eq:Cs-ij-general})
reduces to%
\footnote{For notational simplicity, when $Z=1$, we omit the second argument
of the function~$g(r,z)$ and simply use $g(r)$.%
}\begin{equation}
\Cs(x_{i},x_{j})=\left[\log_{2}\left(1+\frac{\PM\cdot g(|x_{i}-x_{j}|)}{\W}\right)-\log_{2}\left(1+\frac{\PM\cdot g(|x_{i}-e^{*}|)}{\W}\right)\right]^{+},\label{eq:Cs-ij-pathloss}\end{equation}
where \begin{equation}
e^{*}=\underset{e_{k}\in\PiE}{\mathrm{argmin}}\:|x_{i}-e_{k}|,\label{eq:e*-pathloss}\end{equation}
i.e.,~$e^{*}$ is the eavesdropper closest to the transmitter~$x_{i}$.
Since $g(\cdot)$ is strictly decreasing with its argument, the edge
set~$\mathcal{E}$ in (\ref{eq:edges-general}) simplifies in this
case to\begin{equation}
\mathcal{E}=\Bigl\{\overrightarrow{x_{i}x_{j}}:|x_{i}-x_{j}|<|x_{i}-e^{*}|,\quad e^{*}=\underset{e_{k}\in\PiE}{\mathrm{argmin}}\:|x_{i}-e_{k}|\Bigr\},\label{eq:edges-pathloss}\end{equation}
i.e.,~the transmitter~$x_{i}$ can communicate with information-theoretic
security with $x_{j}$ at some positive rate if and only if $x_{j}$
is closer to $x_{i}$ than any other eavesdropper. Thus, in the special
case where $\th=0$, $Z_{x_{i},x_{j}}=1$, and $\WM=\WE$, the $\is$graph
is characterized by a simple geometrical description. Fig.~\ref{fig:secrecy-graph}
shows an example of such an $\is$graph. Note that the description
in (\ref{eq:edges-pathloss}) -- and therefore all results that will
follow from it -- do not depend on the specific form of the function~$g(r)$,
as long as it satisfies the conditions in Section~\ref{sub:Propagation}.
The special case in (\ref{eq:edges-pathloss}) was also considered
in \cite{Hae:08}, starting from a formulation of security based on
geometrical -- not information-theoretic -- considerations.

\subsection{Poisson $\is$Graph}

The spatial location of the nodes can be modeled either deterministically
or stochastically. However, in many important scenarios, only a statistical
description of the node positions is available, and thus a stochastic
spatial model is more suitable. In particular, when the node positions
are unknown to the network designer a priori, we may as well treat
them as completely random according to a homogeneous Poisson point
process~\cite{Kin:93}.%
\footnote{The spatial Poisson process is a natural choice in such situation
because, given that a node is inside a region~$\mathcal{R}$, the
PDF of its position is conditionally uniform over $\mathcal{R}$.%
} The Poisson process has maximum entropy among all homogeneous processes~\cite{Mcf:65},
and serves as a simple and useful model for the position of nodes
in a network.

\begin{definition}[Poisson $\is$graph]The \emph{Poisson $\is$graph} is
an $\is$graph where $\PiM,\PiE\subset\mathbb{R}^{d}$ are mutually
independent, homogeneous Poisson point processes with densities~$\LM$
and $\LE$, respectively.\end{definition}

In the remainder of the paper (unless otherwise indicated), we focus
on Poisson $\is$graphs in $\mathbb{R}^{2}$. We use $\{\RM i\}_{i=1}^{\infty}$
and $\{\RE i\}_{i=1}^{\infty}$ to denote the ordered random distances
between the origin of the coordinate system and the nodes in $\PiM$
and $\PiE$, respectively, where $\RM 1\leq\RM 2\leq\ldots$ and $\RE 1\leq\RE 2\leq\ldots.$

\section{Local Connectivity in the Poisson $\is$Graph\label{sec:Node-Degrees}}

In graph theory, the node degrees are an important property of a graph,
since they describe the connectivity between a node and its immediate
neighbors. In a  graph, the \emph{in-degree} and \emph{out-degree}
of a vertex are, respectively, the number of edges entering and exiting
the vertex. Since the $\is$graph is a random graph, the in- and out-degrees
of the legitimate nodes are RVs. In this section, we provide a complete
probabilistic characterization of both in-degree~$N_{\mathrm{in}}$
and out-degree~$N_{\mathrm{out}}$ of a typical node in the Poisson
$\is$graph.%
\footnote{In this paper, we analyze the local properties of a \emph{typical
node} in the $\is$graph. This notion is made precise in \cite[Sec. 4.4]{StoKenMec:95}
using Palm theory. Specifically, Slivnyak's theorem states that the
properties observed by a typical legitimate node~$x\in\PiM$ are
the same as those observed by node~$0$ in the process~$\PiM\cup\{0\}$.
Informally, a typical node of $\PiM$ is one that is uniformly picked
from a finite region expanding to $\mathbb{R}^{2}$. In this paper,
we often omit the word {}``typical'' for brevity.%
} We first consider the simplest case of $\th=0$ (the \emph{existence}
of secure links), $Z_{x_{i},x_{j}}=1$ (path loss only), and $\WE=\WM$
(equal noise powers) in Sections~\ref{sub:In-Degree-Characterization},
\ref{sub:Out-Degree-Characterization}, and \ref{sub:General-Relationships}.
This scenario leads to an $\is$graph with a simple geometric description,
thus providing various insights that are useful in understanding more
complex cases. Later, in Sections~\ref{sub:Effect-Propagation} and
\ref{sub:Effect-Theta}, we separately analyze how the node degrees
are affected by\textit{\emph{ wireless propagation effects other than
}}path\textit{\emph{ loss (e.g.,~multipath fading), a non-zero secrecy
rate threshold~$\th$, and unequal noise powers~$\WE,\WM$.}}

We start by showing that under the simple geometric description in
(\ref{eq:edges-pathloss}), the distributions of the in- and out-degree
of a node depend exclusively on the ratio of densities~$\frac{\LM}{\LE}$.

\begin{property}In the case of $\th=0$, $Z_{x_{i},x_{j}}=1$, and
$\WE=\WM$, the probability mass functions (PMFs)~$p_{N_{\mathrm{out}}}(n)$
and $p_{N_{\mathrm{in}}}(n)$ of a node depend on the densities~$\LM$
and $\LE$ only through the ratio~$\frac{\LM}{\LE}$.\end{property}

\begin{proof}Consider a given realization of the processes~$\PiM$
and $\PiE$, with densities~$\LM$ and $\LE$, respectively. This
induces an $\is$graph~$G=(\PiM,\mathcal{E})$ with vertex set~$\PiM$
and edge set~$\mathcal{E}$ given by (\ref{eq:edges-pathloss}).
We now apply the transformation~$x\rightarrow\sqrt{c}x$ in $\mathbb{R}^{2}$,
resulting in scaled processes~$\sqrt{c}\PiM$ and $\sqrt{c}\PiE$,
with densities~$\frac{\LM}{c}$ and $\frac{\LE}{c}$, respectively.
Note that the $\is$graph~$\breve{G}=\{\sqrt{c}\PiM,\mathcal{E}\}$
corresponding to the scaled processes has exactly the same edge set
as $G$, because the scaling transformation does not change the geometrical
configuration of the network. We then conclude that the node degree
distributions before and after scaling are the same, and hence only
depend on the ratio~$\frac{\LM}{\LE}$. This concludes the proof.\end{proof}

\subsection{In-Degree Characterization\label{sub:In-Degree-Characterization}}

The characterization of the in-degree relies on the notion of Voronoi
tessellation, which we now introduce. A \emph{planar tessellation}
is a collection of disjoint polygons whose closures cover $\mathbb{R}^{2}$,
and which is locally finite (i.e., the number of polygons intersecting
any given compact set is finite). Given a generic point process~$\Pi=\{x_{i}\}\subset\mathbb{R}^{2}$,
we define the \emph{Voronoi cell}~$\mathcal{C}_{x_{i}}$ of the point~$x_{i}$
as the set of points of $\mathbb{R}^{2}$ which are closer to $x_{i}$
than any other point of $\Pi$, i.e.,\[
\mathcal{C}_{x_{i}}=\{y\in\mathbb{R}^{2}:|y-x_{i}|<|y-x_{j}|,\forall x_{j}\neq x_{i}\}.\]
The collection~$\{C_{x_{i}}\}$ of all the cells forms a random \emph{Voronoi
tessellation} with respect to the underlying point process~$\Pi$.
Let $\mathcal{C}_{0}$ denote the \emph{typical} \emph{Voronoi cell},
i.e., the Voronoi cell associated with a point placed at the origin,
according to Slivnyak\textquoteright{}s theorem. Using the notions
just introduced, the following theorem provides a probabilistic characterization
of the in-degree of the $\is$graph.

\begin{theorem}The in-degree~$N_{\mathrm{in}}$ of a typical node
in the Poisson $\is$graph has the following moment generating function
(MGF)\begin{equation}
M_{N_{\mathrm{in}}}(s)=\mathbb{E}\left\{ \exp\left(\frac{\LM}{\LE}\widetilde{A}(e^{s}-1)\right)\right\} ,\label{eq:M-Nin}\end{equation}
where $\widetilde{A}$ is the area of a typical Voronoi cell induced
by a unit-density Poisson process. Furthermore, all the moments of
$N_{\mathrm{in}}$ are given by\begin{equation}
\mathbb{E}\{N_{\mathrm{in}}^{n}\}=\sum_{k=1}^{n}\left(\frac{\LM}{\LE}\right)^{k}S(n,k)\,\mathbb{E}\{\widetilde{A}^{k}\},\quad n\geq1,\label{eq:E-Nin-n}\end{equation}
where $S(n,k)$, $1\leq k\leq n$, are the Stirling numbers of the
second kind~\cite[Ch. 24]{AbrSte:70}.\end{theorem}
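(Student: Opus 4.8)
The plan is to reduce the in-degree to a Poisson count over a Voronoi cell of the eavesdropper process, and then evaluate its mixed moment generating function.

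First I would invoke Slivnyak's theorem for the legitimate process: the in-degree~$N_{\mathrm{in}}$ of a typical node has the same law as the in-degree of a node placed at the origin~$0$ in $\PiM\cup\{0\}$. By the simple geometric description of the edge set in~(\ref{eq:edges-pathloss}), a legitimate node~$x_i\in\PiM$ contributes an edge~$\overrightarrow{x_i 0}$ precisely when $|x_i-0|<|x_i-e_k|$ for every $e_k\in\PiE$; equivalently, $0$ is the nearest point to $x_i$ among $\PiE\cup\{0\}$, i.e., $x_i$ lies in the Voronoi cell of~$0$ in the tessellation generated by $\PiE\cup\{0\}$. Denote this random cell by $\mathcal{C}$. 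Hence $N_{\mathrm{in}}=\#\{x_i\in\PiM:x_i\in\mathcal{C}\}$.

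Next I would condition on the eavesdropper process~$\PiE$ (equivalently, on $\mathcal{C}$). Since $\PiM$ and $\PiE$ are independent and $\PiM$ is homogeneous Poisson with density~$\LM$, the number of points of $\PiM$ falling in the fixed measurable set~$\mathcal{C}$ is Poisson-distributed with mean $\LM\,|\mathcal{C}|$, where $|\mathcal{C}|$ is the a.s.\ finite area of~$\mathcal{C}$. Using the Poisson MGF, $\mathbb{E}\{e^{sN_{\mathrm{in}}}\mid\mathcal{C}\}=\exp(\LM|\mathcal{C}|(e^{s}-1))$, so taking the expectation over $\PiE$ gives $M_{N_{\mathrm{in}}}(s)=\mathbb{E}\{\exp(\LM|\mathcal{C}|(e^{s}-1))\}$. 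It then remains to identify the law of $|\mathcal{C}|$. By Slivnyak's theorem applied to $\PiE$, the cell~$\mathcal{C}$ of~$0$ in $\PiE\cup\{0\}$ is distributed as the typical Voronoi cell~$\mathcal{C}_0$ of a homogeneous Poisson process of density~$\LE$. Applying the scaling map $x\mapsto\sqrt{\LE}\,x$ — which turns a density-$\LE$ Poisson process into a unit-density one and multiplies areas by $\LE$ — I obtain $\LE|\mathcal{C}|\stackrel{d}{=}\widetilde{A}$, i.e., $|\mathcal{C}|\stackrel{d}{=}\widetilde{A}/\LE$. Substituting yields $M_{N_{\mathrm{in}}}(s)=\mathbb{E}\{\exp(\tfrac{\LM}{\LE}\widetilde{A}(e^{s}-1))\}$, which is~(\ref{eq:M-Nin}). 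For the moments, I would use the conditional-Poisson representation directly: the $n$-th moment of a Poisson RV with mean~$\mu$ equals the Touchard polynomial $\sum_{k=1}^{n}S(n,k)\mu^{k}$; with $\mu=\tfrac{\LM}{\LE}\widetilde{A}$ and taking expectations over $\widetilde{A}$, (\ref{eq:E-Nin-n}) follows (alternatively, expand $M_{N_{\mathrm{in}}}(s)$ and match derivatives at $s=0$).

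The conceptual crux — and the step I expect to require the most care — is the reversal in the first paragraph: recognizing that the closest-eavesdropper condition that defines the \emph{outgoing} edges of $x_i$ translates, for the \emph{incoming} edges of $0$, into the statement that $x_i$ lies in the Voronoi cell of $0$ with respect to $\PiE\cup\{0\}$, together with the two independent uses of Slivnyak's theorem (one on $\PiM$ so that the typical node is not itself counted, one on $\PiE$ to obtain the typical cell). Once this geometric reduction is in place, the remaining steps are the standard Poisson-mixture, scaling, and Touchard-polynomial arguments.
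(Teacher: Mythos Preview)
Your proposal is correct and follows essentially the same route as the paper: Slivnyak to place the typical node at the origin, identification of $N_{\mathrm{in}}$ as the $\PiM$-count in the Voronoi cell of $0$ with respect to $\PiE\cup\{0\}$, conditional Poisson MGF, the scaling $\widetilde{A}=\LE|\mathcal{C}|$, and the Touchard/Dobinski identity for the moments. If anything, you are more explicit than the paper about the two separate invocations of Slivnyak and about why incoming edges at $0$ correspond to membership in that Voronoi cell.
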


\begin{proof}Using Slivnyak\textquoteright{}s theorem~\cite[Sec. 4.4]{StoKenMec:95},
we consider the process~$\PiM\cup\{0\}$ obtained by adding a legitimate
node to the origin of the coordinate system, and denote the in-degree
of the node at the origin by $N_{\mathrm{in}}$. The RV~$N_{\mathrm{in}}$
corresponds to the number of nodes from the process~$\PiM$ that
fall inside the typical Voronoi cell~$\mathcal{C}_{0}$ constructed
from the process~$\PiE\cup\{0\}$. This is depicted in Fig.~\ref{fig:in-degree}.
Denoting the random area of such a cell by $A$, the MGF of $N_{\mathrm{in}}$
is given by\begin{align*}
M_{N_{\mathrm{in}}}(s) & =\mathbb{E}\{e^{sN_{\mathrm{in}}}\}\\
 & =\mathbb{E}\{\exp\left(\LM A(e^{s}-1)\right)\},\end{align*}
where we used the fact that conditioned on $A$, the RV~$N_{\mathrm{in}}$
is Poisson distributed with parameter~$\LM A$. If $\widetilde{A}$
denotes the random area of a typical Voronoi cell induced by a \emph{unit-density}
Poisson process, then $\widetilde{A}=A\LE$ and (\ref{eq:M-Nin})
follows. This completes the first half of the proof. 

To obtain the moments of $N_{\mathrm{in}}$, we use Dobinski's formula~\cite{Dob:77}\[
\sum_{k=0}^{\infty}k^{n}\frac{e^{-\mu}\mu^{k}}{k!}=\sum_{k=1}^{n}\mu^{k}S(n,k),\]
which establishes the relationship between the $n$-th moment of a
Poisson RV with mean~$\mu$ and the Stirling numbers of the second
kind, $S(n,k)$. Then,\begin{align*}
\mathbb{E}\{N_{\mathrm{in}}^{n}\} & =\mathbb{E}\left\{ \mathbb{E}\{N_{\mathrm{in}}^{n}|A\}\right\} \\
 & =\mathbb{E}\left\{ \sum_{k=1}^{n}(\LM A)^{k}S(n,k)\right\} \\
 & =\sum_{k=1}^{n}\left(\frac{\LM}{\LE}\right)^{k}S(n,k)\,\mathbb{E}\{\widetilde{A}^{k}\},\end{align*}
for $n\geq1$. This is the result in (\ref{eq:E-Nin-n}) and the second
half of proof is concluded.\end{proof}

Equation~(\ref{eq:E-Nin-n}) expresses the moments of $N_{\mathrm{in}}$
in terms of the moments of $\widetilde{A}$. Note that the Stirling
numbers of the second kind can be obtained recursively as\begin{align*}
S(n,k) & =S(n-1,k-1)+kS(n-1,k),\\
S(n,n) & =S(n,1)=1,\end{align*}
or explicitly as\[
S(n,k)=\frac{1}{k!}\sum_{i=0}^{k}(-1)^{i}{k \choose i}(k-i)^{n}.\]
Table~\ref{tab:Stirling} provides some values for $S(n,k)$. In
general, $\mathbb{E}\{\widetilde{A}^{k}\}$ cannot be obtained in
closed form, except in the case of $k=1$, which is derived below
in (\ref{eq:E-A-tilde-1}). For $k=2$ and $k=3$, $\mathbb{E}\{\widetilde{A}^{k}\}$
can be expressed as multiple integrals and then computed numerically~\cite{Gil:62,Bra:85,HayQui:02}.
Alternatively, the moments of $\widetilde{A}$ can be determined using
Monte Carlo simulation of random Poisson-Voronoi tessellations~\cite{Cra:78,HinMil:80,Bra:86}.
The first four moments of $\widetilde{A}$ are given in Table~\ref{tab:Moments-A}.

The above theorem can be used to obtain the in-connectivity properties
a node, such as the in-isolation probability, as given in the following
corollary.

\begin{corollary}The average in-degree of a typical node in the Poisson
$\is$graph is\begin{equation}
\mathbb{E}\{N_{\mathrm{in}}\}=\frac{\LM}{\LE}\label{eq:E-Nin}\end{equation}
and the probability that a typical node cannot receive from anyone
with positive secrecy rate (in-isolation) is\begin{equation}
p_{\mathrm{in-isol}}=\mathbb{E}\left\{ e^{-\frac{\LM}{\LE}\widetilde{A}}\right\} .\label{eq:p-in-isol}\end{equation}
\end{corollary}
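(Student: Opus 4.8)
The plan is to derive both formulas directly from the MGF in equation~(\ref{eq:M-Nin}) of the preceding theorem, treating the corollary as a pair of routine specializations. For the mean, I would differentiate $M_{N_{\mathrm{in}}}(s)$ with respect to $s$ and evaluate at $s=0$; equivalently, and more cleanly, I would invoke the $n=1$ case of the moment formula~(\ref{eq:E-Nin-n}), which gives $\mathbb{E}\{N_{\mathrm{in}}\}=\frac{\LM}{\LE}S(1,1)\,\mathbb{E}\{\widetilde{A}\}=\frac{\LM}{\LE}\mathbb{E}\{\widetilde{A}\}$ since $S(1,1)=1$. It then remains only to establish that $\mathbb{E}\{\widetilde{A}\}=1$, i.e.\ that the typical Voronoi cell of a unit-density Poisson process has unit expected area. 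This is a standard fact: the expected total area of the cells per unit region equals the area of that region (the cells tile $\mathbb{R}^2$), and by the mass-transport / Campbell-averaging argument there is, on average, one cell per unit area, so the typical cell has expected area equal to the reciprocal of the intensity, namely $1$. Citing \cite{StoKenMec:95} for this suffices, and~(\ref{eq:E-Nin}) follows.

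For the in-isolation probability, I would observe that a typical node is in-isolated precisely when $N_{\mathrm{in}}=0$, so $p_{\mathrm{in-isol}}=\mathbb{P}\{N_{\mathrm{in}}=0\}=p_{N_{\mathrm{in}}}(0)$. Conditioning on the area $A$ of the typical Voronoi cell constructed from $\PiE\cup\{0\}$, the count $N_{\mathrm{in}}$ is Poisson with parameter $\LM A$ (exactly as used in the theorem's proof), so $\mathbb{P}\{N_{\mathrm{in}}=0 \mid A\}=e^{-\LM A}$. Taking the expectation over $A$ and substituting $\widetilde{A}=A\LE$ gives $p_{\mathrm{in-isol}}=\mathbb{E}\{e^{-\LM A}\}=\mathbb{E}\{e^{-\frac{\LM}{\LE}\widetilde{A}}\}$, which is~(\ref{eq:p-in-isol}). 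Alternatively one reads this off the MGF by noting that $\mathbb{P}\{N_{\mathrm{in}}=0\}$ is the constant term of the probability generating function $G_{N_{\mathrm{in}}}(z)=M_{N_{\mathrm{in}}}(\log z)=\mathbb{E}\{\exp(\frac{\LM}{\LE}\widetilde{A}(z-1))\}$ evaluated at $z=0$, which again yields $\mathbb{E}\{e^{-\frac{\LM}{\LE}\widetilde{A}}\}$.

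The only step requiring any care — and thus the "main obstacle," though it is a mild one — is the justification that $\mathbb{E}\{\widetilde{A}\}=1$ for the typical cell of a unit-density Poisson–Voronoi tessellation. One must be a little cautious because the typical cell is defined via Palm calculus, and the naive "average cell" is size-biased relative to the cell containing a uniformly chosen point; the correct normalization (expected area equal to inverse intensity) is precisely the Palm-theoretic one, consistent with the Slivnyak-based setup already adopted in the theorem. Since this is a classical identity for Poisson–Voronoi tessellations, I would simply quote it with a reference to \cite[Sec.~4.4]{StoKenMec:95} rather than reprove it. Everything else is immediate from the theorem, so the proof is short: establish $\mathbb{E}\{\widetilde{A}\}=1$, plug into~(\ref{eq:E-Nin-n}) with $n=1$ for the mean, and evaluate the generating function at the appropriate point (or condition on $A$) for the isolation probability.
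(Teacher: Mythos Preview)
Your proposal is correct and follows essentially the same route as the paper: set $n=1$ in (\ref{eq:E-Nin-n}) and use $\mathbb{E}\{\widetilde{A}\}=1$ for the mean, then condition on $A$ and use the Poisson void probability for $p_{\mathrm{in-isol}}$. The only difference is that where you cite $\mathbb{E}\{\widetilde{A}\}=1$ as a standard Palm-theoretic fact, the paper computes it directly by writing $\mathbb{E}\{\widetilde{A}\}=\int_{\mathbb{R}^2}\mathbb{P}\{z\in\mathcal{C}_0\}\,dz=\int_{\mathbb{R}^2}\mathbb{P}\{\widetilde{\Pi}\{\mathcal{B}_z(|z|)\}=0\}\,dz=\int_0^{2\pi}\int_0^\infty e^{-\pi r^2}r\,dr\,d\theta=1$.
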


\begin{proof}Setting $n=1$ in (\ref{eq:E-Nin-n}), we obtain $\mathbb{E}\{N_{\mathrm{in}}\}=\frac{\LM}{\LE}\mathbb{E}\{\widetilde{A}\}$.
Noting that\[
\widetilde{A}={\int\intop}_{\mathbb{R}^{2}}\mathbbm{1}\{z\in\mathcal{C}_{0}\}dz,\]
where $\mathcal{C}_{0}$ is the typical Voronoi cell induced by a
unit-density Poisson process~$\widetilde{\Pi}$, we can write%
\footnote{We use $\mathcal{B}_{x}(\rho)\triangleq\{y\in\mathbb{R}^{2}:|y-x|\leq\rho\}$
to denote the closed two-dimensional ball centered at point~$x$,
with radius~$\rho$. %
}\begin{align}
\mathbb{E}\{\widetilde{A}\} & ={\int\intop}_{\mathbb{R}^{2}}\mathbb{P}\{z\in\mathcal{C}_{0}\}dz\label{eq:E-A-tilde-1}\\
 & ={\int\intop}_{\mathbb{R}^{2}}\mathbb{P}\{\widetilde{\Pi}\{\mathcal{B}_{z}(|z|)\}=0\}dz\label{eq:E-A-tilde-2}\\
 & =\int_{0}^{2\pi}\int_{0}^{\infty}e^{-\pi r^{2}}rdrd\theta=\nonumber \\
 & =1.\nonumber \end{align}
Equation (\ref{eq:E-A-tilde-1}) follows from Fubini's Theorem, while
(\ref{eq:E-A-tilde-2}) follows from the fact that, for any $z\in\mathbb{R}^{2}$,
the event~$\{z\in\mathcal{C}_{0}\}$ is equivalent to having no points
of $\widetilde{\Pi}$ in $\mathcal{B}_{z}(|z|)$, as depicted in Fig.~\ref{fig:aux-voronoi}.
This completes the proof of (\ref{eq:E-Nin}). To derive (\ref{eq:p-in-isol}),
note that the RV~$N_{\mathrm{in}}$ conditioned on $A$ is Poisson
distributed with parameter~$\LM A$, and thus $p_{\mathrm{in-isol}}=p_{N_{\mathrm{in}}}(0)=\mathbb{E}\{p_{N_{\mathrm{in}}|A}(0)\}=\mathbb{E}\left\{ e^{-\frac{\LM}{\LE}\widetilde{A}}\right\} $.\end{proof}

We can obtain an alternative expression for (\ref{eq:p-in-isol})
by performing a power series expansion of the exponential function,
resulting in\[
p_{\mathrm{in-isol}}=\sum_{k=0}^{\infty}\frac{(-1)^{k}}{k!}\left(\frac{\LM}{\LE}\right)^{k}\mathbb{E}\{\widetilde{A}^{k}\}.\]
This equation expresses $p_{\mathrm{in-isol}}$ as a power series
with argument~$\frac{\LM}{\LE}$, since $\mathbb{E}\{\widetilde{A}^{k}\}$
are deterministic. The power series can be truncated, since the summands
become smaller as $k\rightarrow\infty$.

\subsection{Out-Degree Characterization\label{sub:Out-Degree-Characterization}}

\begin{theorem}\label{thm:p-Nout}The out-degree~$N_{\mathrm{out}}$
of a typical node in the Poisson $\is$graph has the following geometric
PMF\begin{equation}
p_{N_{\mathrm{out}}}(n)=\left(\frac{\LM}{\LM+\LE}\right)^{n}\left(\frac{\LE}{\LM+\LE}\right),\quad n\geq0.\label{eq:p-Nout}\end{equation}
\end{theorem}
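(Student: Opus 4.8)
The plan is to reduce the out-degree to a simple counting problem and then recognize a mixed-Poisson (equivalently, a geometric coloring) structure. By Slivnyak's theorem we may compute the out-degree $N_{\mathrm{out}}$ of the node placed at the origin in the process $\PiM\cup\{0\}$. By the geometric description of the edge set in (\ref{eq:edges-pathloss}), the edge $\overrightarrow{0x_{j}}$ is present precisely when $x_{j}$ is closer to the origin than the nearest eavesdropper, i.e., when $|x_{j}|<\RE 1$, where $\RE 1$ is the distance from the origin to the closest point of $\PiE$. Hence $N_{\mathrm{out}}$ equals the number of points of $\PiM$ falling in the ball $\mathcal{B}_{0}(\RE 1)$. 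Since $\PiM$ and $\PiE$ are independent, conditioning on $\PiE$ (equivalently, on $\RE 1$) leaves $\PiM$ a homogeneous Poisson process of density $\LM$.

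First I would record the law of $\RE 1$: because $\PiE$ is Poisson with density $\LE$, $\mathbb{P}\{\RE 1>r\}=e^{-\LE\pi r^{2}}$, so the area $V\triangleq\pi\RE 1^{2}$ is exponentially distributed with mean $1/\LE$. Conditioned on $\RE 1=r$, the count $N_{\mathrm{out}}=\PiM\{\mathcal{B}_{0}(r)\}$ is Poisson with mean $\LM\pi r^{2}=\LM V$. Thus $N_{\mathrm{out}}$ is a Poisson random variable whose mean is itself exponentially distributed, and the mixing integral gives
\[
p_{N_{\mathrm{out}}}(n)=\int_{0}^{\infty}\frac{(\LM v)^{n}e^{-\LM v}}{n!}\,\LE e^{-\LE v}\,dv=\frac{\LM^{n}\LE}{n!}\int_{0}^{\infty}v^{n}e^{-(\LM+\LE)v}\,dv=\frac{\LM^{n}\LE}{(\LM+\LE)^{n+1}},
\]
which is exactly (\ref{eq:p-Nout}). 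Alternatively, and perhaps more transparently, one superposes $\PiM$ and $\PiE$ into a single Poisson process of density $\LM+\LE$, colors each point ``legitimate'' with probability $\LM/(\LM+\LE)$ and ``eavesdropper'' otherwise, independently, orders the points by distance from the origin, and observes that $N_{\mathrm{out}}$ is the number of legitimate points preceding the first eavesdropper --- a geometric random variable with the stated PMF.

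There is essentially no serious obstacle here: the content lies entirely in the geometric reduction (\ref{eq:edges-pathloss}) together with the independence of $\PiM$ and $\PiE$. The only points that warrant a line of care are the correct invocation of Slivnyak's theorem (so that adding a node at the origin perturbs neither process), and, in the coloring argument, the justification that the colors of the radially ordered points form an i.i.d.\ Bernoulli sequence --- which follows from the independent superposition/thinning property of Poisson processes and is unaffected by the radial ordering.
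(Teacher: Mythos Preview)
Your proposal is correct. Your alternative argument --- superposing $\PiM$ and $\PiE$, coloring, and counting legitimate points before the first eavesdropper in radial order --- is precisely the paper's proof: the paper passes to the one-dimensional processes $\{\RM i^{2}\}$ and $\{\RE i^{2}\}$ via the mapping theorem, merges them, and reads off the geometric PMF from the Bernoulli labeling of successive arrivals. Your primary argument is a different but equally short route: you condition on $\RE 1$, recognize $N_{\mathrm{out}}$ as Poisson with mean $\LM\pi\RE 1^{2}$ and $\pi\RE 1^{2}$ as exponential with mean $1/\LE$, and evaluate the resulting Gamma integral directly. This mixed-Poisson computation has the mild advantage of being self-contained (no appeal to the mapping theorem or to the independence of labels under ordering), while the paper's merged-process viewpoint makes the geometric law conceptually transparent and generalizes cleanly to the later results on propagation effects and sectorization.
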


\begin{proof}We consider the process~$\PiM\cup\{0\}$ obtained by
adding a legitimate node to the origin of the coordinate system, and
denote the out-degree of the origin by $N_{\mathrm{out}}$. The RV~$N_{\mathrm{out}}$
corresponds to the number of nodes from the process~$\PiM$  that
fall inside the circle with random radius~$\RE 1$ centered at the
origin, i.e., $N_{\mathrm{out}}=\#\{\RM i:\RM i<\RE 1\}$. This is
depicted in Fig.~\ref{fig:out-degree}. To determine the PMF\ of
$N_{\mathrm{out}}$, consider the one-dimensional arrival processes~$\widetilde{\Pi}_{\ell}=\{\RM i^{2}\}_{i=1}^{\infty}$
and $\widetilde{\Pi}_{\mathrm{e}}=\{\RE i^{2}\}_{i=1}^{\infty}$.
As can be easily shown using the mapping theorem~\cite[Section 2.3]{Kin:93},
$\widetilde{\Pi}_{\ell}$ and $\widetilde{\Pi}_{\mathrm{e}}$ are
independent homogeneous Poisson processes with arrival rates~$\pi\LM$
and $\pi\LE$, respectively. When there is an arrival in the merged
process~$\widetilde{\Pi}_{\ell}\cup\widetilde{\Pi}_{\mathrm{e}}$,
it comes from process $\widetilde{\Pi}_{\ell}$ with probability~$p=\frac{\pi\LM}{\pi\LM+\pi\LE}=\frac{\LM}{\LM+\LE}$,
and from $\widetilde{\Pi}_{\mathrm{e}}$ with probability~$1-p=\frac{\LE}{\LM+\LE}$,
and these events are independent for different arrivals~\cite{BerTsi:02}.
Since the event~$\{N_{\mathrm{out}}=n\}$ is equivalent to the occurrence
of $n$~arrivals from $\widetilde{\Pi}_{\ell}$ followed by one arrival
from $\widetilde{\Pi}_{\mathrm{e}}$, then we have the geometric PMF~$p_{N_{\mathrm{out}}}(n)=p^{n}(1-p),$
$n\geq0$, with parameter~$p=\frac{\LM}{\LM+\LE}$. This is the result
in (\ref{eq:p-Nout}) and the proof is completed.\end{proof}

Note that this particular result was also derived in \cite{Hae:08}.
The above theorem can be used to obtain the out-connectivity properties
a node, such as the out-isolation probability, as given in the following
corollary.

\begin{corollary}The average out-degree of a typical node in the
Poisson $\is$graph is\begin{equation}
\mathbb{E}\{N_{\mathrm{out}}\}=\frac{\LM}{\LE},\label{eq:E-Nout}\end{equation}
and the probability that a typical node cannot transmit to anyone
with positive secrecy rate (out-isolation) is\begin{equation}
p_{\mathrm{out-isol}}=\frac{\LE}{\LM+\LE}.\label{eq:p-out-isol}\end{equation}
\end{corollary}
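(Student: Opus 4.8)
The plan is to derive both equations directly from the geometric PMF in Theorem~\ref{thm:p-Nout}, treating $N_{\mathrm{out}}$ as a geometric random variable with success parameter $q = \frac{\LE}{\LM+\LE}$ (i.e.\ $p_{N_{\mathrm{out}}}(n) = (1-q)^n q$ for $n\geq 0$).

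For the average out-degree, I would compute $\mathbb{E}\{N_{\mathrm{out}}\} = \sum_{n=0}^\infty n (1-q)^n q$. This is the standard mean of a geometric distribution supported on $\{0,1,2,\ldots\}$, which equals $\frac{1-q}{q}$. Substituting $1-q = \frac{\LM}{\LM+\LE}$ and $q = \frac{\LE}{\LM+\LE}$ gives $\frac{1-q}{q} = \frac{\LM}{\LE}$, establishing (\ref{eq:E-Nout}). One can evaluate the sum explicitly by differentiating the geometric series $\sum_{n=0}^\infty t^n = \frac{1}{1-t}$ with respect to $t$ and setting $t = 1-q$, but this is routine. I would also remark that (\ref{eq:E-Nout}) agrees with (\ref{eq:E-Nin}), as it must by the general mass-conservation argument (a consequence of Palm calculus, since every edge is counted once as an out-edge and once as an in-edge).

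For the out-isolation probability, I would observe that a typical node is out-isolated precisely when $N_{\mathrm{out}} = 0$, so $p_{\mathrm{out-isol}} = p_{N_{\mathrm{out}}}(0)$. Evaluating (\ref{eq:p-Nout}) at $n=0$ gives $\left(\frac{\LM}{\LM+\LE}\right)^0 \left(\frac{\LE}{\LM+\LE}\right) = \frac{\LE}{\LM+\LE}$, which is exactly (\ref{eq:p-out-isol}). Alternatively, one can argue directly from the merged-process construction in the proof of Theorem~\ref{thm:p-Nout}: the node is out-isolated iff the first arrival in $\widetilde{\Pi}_\ell \cup \widetilde{\Pi}_{\mathrm{e}}$ comes from $\widetilde{\Pi}_{\mathrm{e}}$ (the nearest eavesdropper is closer than the nearest legitimate node), which happens with probability $1-p = \frac{\LE}{\LM+\LE}$.

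There is essentially no obstacle here: both claims are immediate corollaries of the explicit geometric PMF already proven, requiring only the textbook formula for the mean of a geometric random variable and evaluation of the PMF at zero. The only point worth care is bookkeeping the convention for the geometric distribution (support starting at $0$ versus $1$) so that the mean comes out as $\frac{1-q}{q} = \frac{\LM}{\LE}$ rather than $\frac{1}{q}$; once that is pinned down, the computation is a single line each.
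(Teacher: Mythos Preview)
Your proposal is correct and matches the paper's approach exactly: the paper's proof is simply ``This follows directly from Theorem~\ref{thm:p-Nout},'' and you have spelled out precisely those direct computations (mean of the geometric PMF and evaluation at $n=0$). Your additional remarks about the merged-process interpretation and the agreement with $\mathbb{E}\{N_{\mathrm{in}}\}$ are fine but unnecessary for the proof.
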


\begin{proof}This follows directly from Theorem~\ref{thm:p-Nout}.\end{proof}

\subsection{General Relationships Between In- and Out-Degree\label{sub:General-Relationships}}

We have so far considered the probabilistic distribution of the in-
and out-degrees in a separate fashion. This section establishes a
direct comparison between some characteristics of the in- and out-degrees.

\begin{property}\label{prop:ENin-ENout}For the Poisson $\is$graph
with $\LM>0$ and $\LE>0$, the average degrees of a typical node
satisfy\begin{equation}
\mathbb{E}\{N_{\mathrm{in}}\}=\mathbb{E}\{N_{\mathrm{out}}\}=\frac{\LM}{\LE}.\label{eq:ENin-ENout}\end{equation}
 \end{property}

\begin{proof}This follows directly by comparing (\ref{eq:E-Nout})
and (\ref{eq:E-Nin}).\end{proof}

The property~$\mathbb{E}\{N_{\mathrm{in}}\}=\mathbb{E}\{N_{\mathrm{out}}\}$
is valid in general for any directed random graph.

\begin{property}\label{prop:pin-pout-isol}For the Poisson $\is$graph
with $\LM>0$ and $\LE>0$, the probabilities of in- and out-isolation
of a typical node satisfy\begin{equation}
p_{\mathrm{in-isol}}<p_{\mathrm{out-isol}}.\label{eq:pout-pin-isol}\end{equation}
\end{property}

\begin{proof}Let $\PiE\{\mathcal{R}\}\triangleq\#\{\PiE\cap\mathcal{R}\}$
denote the number of eavesdroppers inside region~$\mathcal{R}$.
With this definition, we can rewrite the edge set~$\mathcal{E}$
in (\ref{eq:edges-pathloss}) as\begin{equation}
\mathcal{E}=\{\overrightarrow{x_{i}x_{j}}:\PiE\{\mathcal{B}_{x_{i}}(|x_{i}-x_{j}|\}=0\},\label{eq:edges-pathloss-alt}\end{equation}
i.e.,~$x_{i}$ is connected to $x_{j}$ if and only if the ball centered
at $x_{i}$ with radius~$|x_{i}-x_{j}|$ is free of eavesdroppers.
We consider the process~$\PiM\cup\{0\}$ obtained by adding a legitimate
node to the origin of the coordinate system. Let $\breve{x}_{i}$
denote the ordered points in process~$\PiM$ of legitimate nodes,
such that $|\breve{x}_{1}|<|\breve{x}_{2}|<\ldots$. From (\ref{eq:edges-pathloss-alt}),
the node at the origin is out-isolated if and only if $\PiE\{\mathcal{B}_{0}(|\breve{x}_{j}|)\}\geq1$
for all $j\geq1$. This is depicted in Fig.~\ref{fig:aux-out-isol}.
Since the balls~$\mathcal{B}_{0}(|\breve{x}_{j}|)$, $j\geq1$, are
concentric at the origin, we have that\begin{align*}
p_{\mathrm{out-isol}} & =\mathbb{P}\left\{ \PiE\{\mathcal{B}_{0}(|\breve{x}_{1}|)\}\geq1\right\} .\end{align*}

Similarly, we see from (\ref{eq:edges-pathloss-alt}) that the node
at the origin is in-isolated if and only if $\PiE\{\mathcal{B}_{\breve{x}_{i}}(|\breve{x}_{i}|)\}\geq1$
for all $i\geq1$. This is depicted in Fig.~\ref{fig:aux-in-isol}.
Then,\begin{align}
p_{\mathrm{in-isol}} & =\mathbb{P}\left\{ \bigwedge_{i=1}^{\infty}\PiE\{\mathcal{B}_{\breve{x}_{i}}(|\breve{x}_{i}|)\}\geq1\right\} \label{eq:pin-isol-proof}\\
 & <\mathbb{P}\left\{ \PiE\{\mathcal{B}_{\breve{x}_{1}}(|\breve{x}_{1}|)\}\geq1\right\} \label{eq:pin-isol-strict}\\
 & =\mathbb{P}\left\{ \PiE\{\mathcal{B}_{0}(|\breve{x}_{1}|)\}\geq1\right\} \label{eq:pout-pin-isol-repeat}\\
 & =p_{\mathrm{out-isol}}.\nonumber \end{align}
The fact that the inequality in (\ref{eq:pin-isol-strict}) is strict
proved in Appendix~\ref{sec:Deriv-Strict-Ineq}. Equation (\ref{eq:pout-pin-isol-repeat})
follows from the spatial invariance of the homogeneous Poisson process~$\PiE$.
This concludes the proof.\end{proof}

Intuitively, out-isolation is \emph{more likely} than in-isolation
because out-isolation only requires that one or more eavesdroppers
are closer than the nearest legitimate node~$\breve{x}_{1}$. On
the other hand, in-isolation requires that \emph{every} ball~$\mathcal{B}_{\breve{x}_{i}}(|\breve{x}_{i}|)$,
$i\geq1$, has one or more eavesdroppers, which is less likely. Property~\ref{prop:pin-pout-isol}
can then be restated in the following way: \emph{it is easier} \emph{for
an individual node to be in}-\emph{connected than out-connected.}

\subsection{Effect of the Wireless Propagation Characteristics\label{sub:Effect-Propagation}}

We have so far analyzed the local connectivity of the $\is$graph
in the presence of path loss only. However, wireless propagation typically
introduces random propagation effects such as multipath fading and
shadowing, which are modeled by the RV~$Z_{x_{i},x_{j}}$ in (\ref{eq:Pr}).
In this section, we aim to quantify the impact of such propagation
effects on the local connectivity of a node. 

Considering $\th=0$, $\WM=\WE=\W$, and arbitrary propagation effects~$Z_{x_{i},x_{j}}$
with PDF~$f_{Z}(z)$, we can combine (\ref{eq:Cs-ij-general}) with
the general propagation model of (\ref{eq:Pr}) and write\begin{equation}
\Cs(x_{i},x_{j})=\left[\log_{2}\left(1+\frac{\PM\cdot g(|x_{i}-x_{j}|,Z_{x_{i},x_{j}})}{\W}\right)-\log_{2}\left(1+\frac{\PM\cdot g(|x_{i}-e^{*}|,Z_{x_{i},e^{*}}),}{\W}\right)\right]^{+},\label{eq:Cs-ij-prop}\end{equation}
where\begin{equation}
e^{*}=\underset{e_{k}\in\PiE}{\mathrm{argmax}}\: g(|x_{i}-e_{k}|,Z_{x_{i},e_{k}}).\label{eq:g*-prop}\end{equation}
After some algebra, the edge set for the resulting $\is$graph can
be written as\begin{equation}
\mathcal{E}=\Bigl\{\overrightarrow{x_{i}x_{j}}:g(|x_{i}-x_{j}|,Z_{x_{i},x_{j}})>g(|x_{i}-e^{*}|,Z_{x_{i},e^{*}}),\quad e^{*}=\underset{e_{k}\in\PiE}{\mathrm{argmax}}\: g(|x_{i}-e_{k}|,Z_{x_{i},e_{k}})\Bigr\}.\label{eq:edges-prop}\end{equation}
Unlike the case of path-loss only, where the out-connections of a
node are determined only by the \emph{closest} eavesdropper, here
they are determined by the eavesdropper with the \emph{least attenuated}
channel. We start by characterizing the distribution of the out-degree
by the following theorem.

\begin{theorem}\label{thm:p-Nout-prop}For the Poisson $\is$graph
with propagation effects~$Z_{x_{i},x_{j}}$ whose PDF is given by
a continuous function~$f_{Z}(z)$, the PMF of the out-degree~$N_{\mathrm{out}}$
of a typical node is given in (\ref{eq:p-Nout}), and is \emph{invariant}
with respect to $f_{Z}(z)$.\end{theorem}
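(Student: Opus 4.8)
The plan is to show that the out-degree counts exactly the number of legitimate nodes $x_j$ whose link gain $g(|x_i-x_j|,Z_{x_i,x_j})$ exceeds the maximum eavesdropper gain $g(|x_i-e^*|,Z_{x_i,e^*})$, and that this count has the same distribution as in the path-loss-only case regardless of $f_Z$. As in the proof of Theorem~\ref{thm:p-Nout}, I would invoke Slivnyak's theorem and place a legitimate node at the origin, working in the augmented process $\PiM\cup\{0\}$. The key idea is to pass from the spatial picture to a one-dimensional picture not via squared distances (which worked before because $g$ was monotone in $r$ alone), but via the \emph{gain values themselves}. Define the marked point processes obtained by attaching to each legitimate node $x_j\in\PiM$ the value $U_j \triangleq g(|x_j|,Z_{0,x_j})$, and to each eavesdropper $e_k\in\PiE$ the value $V_k \triangleq g(|e_k|,Z_{0,e_k})$. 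Then $N_{\mathrm{out}} = \#\{j : U_j > \max_k V_k\}$.

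The main step is to argue that $\{U_j\}$ and $\{V_k\}$ are independent one-dimensional Poisson processes on $(0,\infty)$ (with some intensity measures $\mu_\M$ and $\mu_\E$), and crucially that $\mu_\M = \tfrac{\LM}{\LE}\,\mu_\E$ as measures — i.e. they differ only by the constant factor $\LM/\LE$, independently of the common propagation law $f_Z$. This follows from the marking theorem and the mapping theorem \cite[Section~2.3]{Kin:93}: starting from a homogeneous Poisson process of intensity $\LM$ on $\mathbb{R}^2$ with i.i.d.\ marks $Z$, the pairs $(x_j,Z_{0,x_j})$ form a Poisson process on $\mathbb{R}^2\times[0,\infty)$ with intensity $\LM\,dx\,f_Z(z)\,dz$; pushing forward under the map $(x,z)\mapsto g(|x|,z)$ yields a Poisson process on $(0,\infty)$ whose intensity measure is $\LM$ times a fixed measure $\nu$ depending only on $g$ and $f_Z$. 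The same construction for $\PiE$ gives intensity $\LE\nu$, and the two are independent because $\PiM,\PiE$ are. (Conditions (1)--(3) on $g$ from Section~\ref{sub:Propagation}, together with continuity of $f_Z$, ensure $\nu$ is a well-defined locally finite measure on $(0,\infty)$ with no atoms, so ties occur with probability zero and $e^*$ is a.s.\ unique.)

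Given this, the merged process $\{U_j\}\cup\{V_k\}$ on $(0,\infty)$ is Poisson with intensity $(\LM+\LE)\nu$, and when we scan from the largest value downward, each point is of legitimate type with probability $p=\frac{\LM\nu}{(\LM+\LE)\nu}=\frac{\LM}{\LM+\LE}$ and of eavesdropper type with probability $1-p=\frac{\LE}{\LM+\LE}$, independently across points, by the standard thinning/competition argument \cite{BerTsi:02}. (One needs $\nu((\theta,\infty))<\infty$ for every $\theta>0$ so that there is a well-defined largest point and the scan is legitimate; this is exactly $\lim_{r\to\infty}g(r,z)=0$ combined with local finiteness of $\PiM\cup\PiE$.) The event $\{N_{\mathrm{out}}=n\}$ is precisely the event that the first $n$ points encountered in this downward scan are legitimate and the $(n+1)$-st is an eavesdropper, giving $p_{N_{\mathrm{out}}}(n)=p^n(1-p)$, which is exactly \eqref{eq:p-Nout}. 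Since $p$ depends only on $\LM/\LE$ and not on $f_Z$, invariance follows.

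The main obstacle I anticipate is the measure-theoretic bookkeeping needed to make the downward scan rigorous: unlike a one-dimensional Poisson process on $(0,\infty)$ with finite total mass accumulating at a single endpoint, here mass accumulates near $0$ (the gain of far-away nodes), so there is no "smallest" point but there is a largest one, and one must check $\nu((\theta,\infty))<\infty$ so the ordering from the top is well-defined — this is where condition~(3) on $g$ is essential. A secondary point is confirming that ties among gain values (so that $e^*$ or the ordering is ambiguous) form a null event, which uses the atomlessness of $\nu$ inherited from continuity of $g$ in $r$ and of $f_Z$. Once these are dispatched, the combinatorial step reducing to a geometric PMF is identical to Theorem~\ref{thm:p-Nout}.
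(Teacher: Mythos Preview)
Your proposal is correct and follows essentially the same route as the paper: both arguments use the marking and mapping theorems to push the marked planar processes forward to one-dimensional Poisson processes whose intensity measures are proportional (with ratio $\LM/\LE$ independent of $f_Z$), and then invoke the standard competition argument on the merged process to obtain the geometric PMF. The only cosmetic differences are that the paper works with the \emph{loss} $l(r,z)=1/g(r,z)$ and scans upward from zero, and it inserts an explicit homogenizing time-change $M_{\LambdaM}(\cdot)$ before the competition step, whereas you work directly with the gain and appeal to proportionality of the intensity measures---these are equivalent formulations of the same idea.
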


\begin{proof}We consider the process~$\PiM\cup\{0\}$ obtained by
adding a legitimate node to the origin of the coordinate system, and
denote the out-degree of the node at the origin by $N_{\mathrm{out}}$.
 For the legitimate nodes, let the distances to the origin (not necessarily
ordered) be $\RM i\triangleq|x_{i}|,$ $x_{i}\in\PiM$, and the corresponding
channel propagation effects be $\ZM i$. Similarly, we can define
$R_{\mathrm{e},i}\triangleq|e_{i}|$, $e_{i}\in\PiE$, and $Z_{\mathrm{e},i}$
for the eavesdroppers. Define also the loss function as $l(r,z)\triangleq1/g(r,z)$.
We can now consider the one-dimensional  loss processes for the legitimate
nodes, $\LambdaM\triangleq\{\lM i\}_{i=1}^{\infty}$ with $\lM i\triangleq l(\RM i,\ZM i)$,
and for the eavesdroppers, $\Lambda_{\mathrm{e}}\triangleq\{L_{\mathrm{e},i}\}_{i=1}^{\infty}$
with $L_{\mathrm{e},i}\triangleq l(R_{\mathrm{e},i},Z_{\mathrm{e},i})$.
Note that loss process~$\{\lM i\}$ can be interpreted as a stochastic
mapping of the distance process~$\{\RM i\}$, where the mapping depends
on the random sequence~$\{\ZM i\}$ (a similar statement can be made
for $\{L_{\mathrm{e},i}\}$, $\{R_{\mathrm{e},i}\}$, and $\{Z_{\mathrm{e},i}\}$).
With these definitions, the out-degree of node~$0$ can be expressed
as $N_{\mathrm{out}}=\#\{\lM i:\lM i<\min_{k}L_{\mathrm{e},k}\}$,
i.e., it is the number of occurrences in the process~$\LambdaM$
before the \emph{first} occurrence in the process~$\Lambda_{\mathrm{e}}$.
In the remainder of the proof, we first characterize the processes~$\LambdaM$
and $\Lambda_{\mathrm{e}}$; then, using appropriate transformations,
we map them into homogeneous processes, where the distribution of
$N_{\mathrm{out}}$ can be readily determined.

Since the RVs~$\{\ZM i\}$ are IID in $i$ and independent of $\{\RM i\}$,
we know from the marking theorem~\cite[Section 5.2]{Kin:93} that
the points~$\{(\RM i,\ZM i)\}$ form a non-homogeneous Poisson process
on $\mathbb{R}^{+}\times\mathbb{R}^{+}$ with density~$2\pi\LM rf_{Z}(z)$,
where $f_{Z}(z)$ is the PDF of $\ZM i$. Then, from the mapping theorem~\cite[Section 2.3]{Kin:93},
$\LambdaM=\{l(\RM i,\ZM i)\}$ is also a non-homogeneous Poisson process
on $\mathbb{R}^{+}$ with density denoted by $\lambda_{\LambdaM}(l)$.%
\footnote{In our theorem, the continuity of the function~$f_{Z}(z)$ is sufficient
to ensure that $\LambdaM$ is a Poisson process. In general, we may
allow Dirac impulses in $f_{Z}(z)$, as long as the distinct points~$\{(\RM i,\ZM i)\}$
do not pile on top of one another when forming the process~$\LambdaM=\{l(\RM i,\ZM i)\}$.%
} Furthermore, the process~$\LambdaM$ can be made homogeneous through
the transformation $M_{\LambdaM}(t)\triangleq\int_{0}^{t}\lambda_{\LambdaM}(l)dl$,
such that $M_{\LambdaM}(\LambdaM)$ is a Poisson process with density~$1$.
The homogenizing function~$M_{\Lambda_{\ell}}(t)$ can be calculated
as follows\begin{align*}
M_{\LambdaM}(t) & =\int_{0}^{t}\lambda_{\LambdaM}(l)dl\\
 & ={\int\intop}_{0<l(r,z)<t}2\pi\LM rf_{\zM}(z)drdz\end{align*}
Using a completely analogous reasoning for the process~$\Lambda_{\mathrm{e}}$,
its homogenizing function~$M_{\Lambda_{\mathrm{e}}}(t)$ can be written
as\begin{align*}
M_{\Lambda_{\mathrm{e}}}(t) & =\int_{0}^{t}\lambda_{\Lambda_{\mathrm{e}}}(l)dl\\
 & ={\int\intop}_{0<l(r,z)<t}2\pi\LE rf_{Z_{\mathrm{e}}}(z)drdz.\end{align*}
But since $f_{\zM}(z)=f_{Z_{\mathrm{e}}}(z)$, it follows that $M_{\LambdaM}(t)=\frac{\LM}{\LE}M_{\Lambda_{\mathrm{e}}}(t).$
The out-degree~$N_{\mathrm{out}}$ can now be easily obtained in
the homogenized domain. Consider that both processes~$\LambdaM$
and $\Lambda_{\mathrm{e}}$ are homogenized by the \emph{same} transformation~$M_{\LambdaM}(\cdot)$,
such that $M_{\LambdaM}(\Lambda_{\mathrm{l}})$ and $M_{\LambdaM}(\Lambda_{\mathrm{e}})$
are independent Poisson processes with density~$1$ and $\frac{\LE}{\LM}$.
Furthermore, since $M_{\LambdaM}(\cdot)$ is monotonically increasing,
$N_{\mathrm{out}}$ can be re-expressed as\begin{align*}
N_{\mathrm{out}} & =\#\{\lM i:\lM i<\min_{k}L_{\mathrm{e},k}\},\\
 & =\#\{\lM i:M_{\LambdaM}(\lM i)<M_{\LambdaM}(\min_{k}L_{\mathrm{e},k})\}.\end{align*}
In this homogenized domain, the propagation effects have disappeared,
and the problem is now equivalent to that in Theorem~\ref{thm:p-Nout}.
Specifically, when there is an arrival in the merged process~$M_{\LambdaM}(\LambdaM)\cup M_{\LambdaM}(\Lambda_{\mathrm{e}})$,
it comes from process~$M_{\LambdaM}(\LambdaM)$ with probability~$p=\frac{1}{1+\LE/\LM}=\frac{\LM}{\LM+\LE}$,
and from $M_{\LambdaM}(\Lambda_{\mathrm{e}})$ with probability~$1-p=\frac{\LE}{\LM+\LE}$.
As a result, $N_{\mathrm{out}}$ has the geometric PMF~$p_{N_{\mathrm{out}}}(n)=p^{n}(1-p),$
$n\geq0$, with parameter~$p=\frac{\LM}{\LM+\LE}$. This is exactly
the same PMF as the one given in (\ref{eq:p-Nout}), and is therefore
invariant with respect to the distribution~$f_{Z}(z)$. This concludes
the proof.\end{proof}

Intuitively, the propagation environment affect both the legitimate
nodes and eavesdroppers in the same way (in the sense that $\ZM i$
and $Z_{\mathrm{e},i}$ have the same distribution), such that the
PMF\ of $N_{\mathrm{out}}$ is invariant with respect to the PDF~$f_{Z}(z)$.
However, the PMF\ of $N_{\mathrm{in}}$ \emph{does} depend on $f_{Z}(z)$
in a non-trivial way, although its mean remains the same, as specified
in the following corollary.

\begin{corollary}For the Poisson $\is$graph with propagation effects~$Z_{x_{i},x_{j}}$
distributed according to $f_{Z}(z)$, the average node degrees are\begin{equation}
\mathbb{E}\{N_{\mathrm{in}}\}=\mathbb{E}\{N_{\mathrm{out}}\}=\frac{\LM}{\LE},\label{eq:ENin-ENout-prop}\end{equation}
for any distribution~$f_{Z}(z)$.\end{corollary}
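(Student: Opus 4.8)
The out-degree half of the corollary needs no new work: Theorem~\ref{thm:p-Nout-prop} says that $N_{\mathrm{out}}$ has the geometric PMF $p_{N_{\mathrm{out}}}(n)=p^{n}(1-p)$ with $p=\frac{\LM}{\LM+\LE}$ for \emph{any} $f_{Z}(z)$, so $\mathbb{E}\{N_{\mathrm{out}}\}=\frac{p}{1-p}=\frac{\LM}{\LE}$. The substance of the corollary is therefore the claim that $\mathbb{E}\{N_{\mathrm{in}}\}$ equals the same value, even though (unlike the out-degree) the full distribution of $N_{\mathrm{in}}$ does depend on $f_{Z}$ and has no closed form. The plan is simply to prove $\mathbb{E}\{N_{\mathrm{in}}\}=\mathbb{E}\{N_{\mathrm{out}}\}$ and combine it with the previous sentence; this is the propagation-aware version of the general identity noted right after Property~\ref{prop:ENin-ENout}.

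The cleanest route I would take is a mass-transport argument. View the whole construction as a deterministic, translation-covariant functional of the jointly stationary configuration consisting of $\PiM$, $\PiE$, and the i.i.d.\ propagation marks attached to the (unordered) pairs of points; the edge set in (\ref{eq:edges-prop}) is exactly such a functional. Let every legitimate node send one unit of mass along each of its outgoing edges. The intensity of mass emitted per unit area is $\LM\,\mathbb{E}\{N_{\mathrm{out}}\}$ (Palm expectation at the typical node), while the intensity of mass absorbed per unit area is $\LM\,\mathbb{E}\{N_{\mathrm{in}}\}$; conservation of mass forces the two to coincide. Finiteness, needed to avoid a vacuous $\infty=\infty$, is automatic since $\mathbb{E}\{N_{\mathrm{out}}\}=\LM/\LE<\infty$. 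Hence $\mathbb{E}\{N_{\mathrm{in}}\}=\mathbb{E}\{N_{\mathrm{out}}\}=\LM/\LE$.

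If one prefers to avoid citing the mass-transport principle, the same conclusion follows from Campbell--Mecke applied to $\PiM$. By Slivnyak's theorem write $N_{\mathrm{in}}=\sum_{x\in\PiM}\mathbbm{1}\{\overrightarrow{x\,0}\in\mathcal{E}\}$ in the process $\PiM\cup\{0\}$; by (\ref{eq:Cs-ij-prop})--(\ref{eq:edges-prop}) the indicator for $\overrightarrow{x\,0}$ depends only on $x$, on the destination $0$, and on $\PiE$ together with the marks, \emph{not} on the other legitimate nodes. So $\mathbb{E}\{N_{\mathrm{in}}\}=\LM\int_{\mathbb{R}^{2}}q_{\mathrm{in}}(x)\,dx$ with $q_{\mathrm{in}}(x)=\mathbb{P}\{g(|x|,Z_{x,0})>\max_{e_{k}\in\PiE}g(|x-e_{k}|,Z_{x,e_{k}})\}$, and likewise $\mathbb{E}\{N_{\mathrm{out}}\}=\LM\int_{\mathbb{R}^{2}}q_{\mathrm{out}}(x)\,dx$ with $q_{\mathrm{out}}(x)=\mathbb{P}\{g(|x|,Z_{0,x})>\max_{e_{k}\in\PiE}g(|e_{k}|,Z_{0,e_{k}})\}$. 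Shifting the event defining $q_{\mathrm{in}}(x)$ by $-x$ and using stationarity of $\PiE$ (so $\PiE-x$ has the law of $\PiE$) together with the fact that the marks are i.i.d.\ with common law $f_{Z}$, both probabilities reduce to the identical form $\mathbb{P}\{g(|x|,Z)>\max_{e_{k}\in\PiE}g(|e_{k}|,Z_{e_{k}})\}$ with i.i.d.\ copies of $Z\sim f_{Z}$; hence $q_{\mathrm{in}}(x)=q_{\mathrm{out}}(x)$ pointwise and the two integrals agree.

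The obstacle here is bookkeeping rather than a hard estimate: one must lay out the marked point process so that the propagation RVs on the various potential links are handled consistently under the Palm conditioning and the spatial shift (channel reciprocity $Z_{x_i,x_j}=Z_{x_j,x_i}$ makes the mark attach to the unordered pair), and one must be explicit that the edge condition genuinely ignores all legitimate nodes other than transmitter and intended receiver — which is true precisely because the MSR in (\ref{eq:Cs-ij-prop})--(\ref{eq:g*-prop}) is determined solely by the transmitter, the receiver, and the eavesdroppers. Once this is in place, either argument is short, and crucially neither requires computing the distribution of $N_{\mathrm{in}}$.
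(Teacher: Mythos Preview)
Your proposal is correct and follows exactly the paper's two-step structure: compute $\mathbb{E}\{N_{\mathrm{out}}\}=\LM/\LE$ from Theorem~\ref{thm:p-Nout-prop}, then invoke $\mathbb{E}\{N_{\mathrm{in}}\}=\mathbb{E}\{N_{\mathrm{out}}\}$. The only difference is that the paper simply asserts the latter identity as a known fact about directed random graphs (one sentence, no argument), whereas you supply a full justification via mass transport or Campbell--Mecke with a stationarity shift; your added rigor is sound but goes beyond what the paper deems necessary.
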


\begin{proof}This follows directly from Theorem~\ref{thm:p-Nout-prop}
and the fact that $\mathbb{E}\{N_{\mathrm{in}}\}=\mathbb{E}\{N_{\mathrm{out}}\}$
in any directed random graph.\end{proof}

We thus conclude that the expected node degrees are invariant with
respect to the distribution characterizing the propagation effects,
and always equal the ratio~$\frac{\LM}{\LE}$ of spatial densities.

\subsection{Effect of the Secrecy Rate Threshold and Noise Powers\label{sub:Effect-Theta}}

We have so far analyzed the local connectivity of the $\is$graph
based on the \emph{existence} of positive MSR, by considering that
the infimum desired secrecy rate is zero, i.e.,~$\th=0$ in (\ref{eq:edges-general}).
This implies that the edge~$\overrightarrow{x_{i}x_{j}}$ is present
if and only if there exists a positive rate at which $x_{i}$ can
transmit to $x_{j}$ with information-theoretic security. We have
furthermore considered that the noise powers of the legitimate users
and eavesdroppers are equal, i.e.,~$\WM=\WE$ in (\ref{eq:Cs-ij-general}).
Under these two conditions, the $\is$graph can be reduced to the
simple geometric description in (\ref{eq:edges-pathloss}), where
the edge~$\overrightarrow{x_{i}x_{j}}$ is present if and only if
$x_{j}$ is \emph{closer} to $x_{i}$ than any other eavesdropper.
In this section, we study the effect of non-zero secrecy rate threshold,
i.e.,~$\th>0$, and unequal noise powers, i.e., $\WM\neq\WE$, on
the $\is$graph.

Considering $Z_{x_{i},x_{j}}=1$ and arbitrary noise powers~$\WM,\WE$,
we can combine (\ref{eq:Cs-ij-general}) with the general propagation
model of (\ref{eq:Pr}) and write \begin{equation}
\Cs(x_{i},x_{j})=\left[\log_{2}\left(1+\frac{\PM\cdot g(|x_{i}-x_{j}|)}{\WM}\right)-\log_{2}\left(1+\frac{\PM\cdot g(|x_{i}-e^{*}|)}{\WE}\right)\right]^{+},\label{eq:Cs-ij-thetaW}\end{equation}
where\begin{equation}
e^{*}=\underset{e_{k}\in\PiE}{\mathrm{argmin}}\:|x_{i}-e_{k}|.\label{eq:e*-thetaW}\end{equation}
We can now replace this expression for $\Cs(x_{i},x_{j})$ into (\ref{eq:edges-general})
while allowing an arbitrary threshold~$\th$. After some algebra,
the edge set for the resulting $\is$graph can be written as\begin{equation}
\mathcal{E}=\Bigl\{\overrightarrow{x_{i}x_{j}}:g(|x_{i}-x_{j}|)>\frac{\WM}{\WE}2^{\th}g(|x_{i}-e^{*}|)+\frac{\WM}{\PM}(2^{\th}-1),\quad e^{*}=\underset{e_{k}\in\PiE}{\mathrm{argmin}}\:|x_{i}-e_{k}|\Bigr\}.\label{eq:edges-thetaW}\end{equation}
By setting $\th=0$ and $\WM=\WE$ in (\ref{eq:edges-thetaW}) we
obtain the edge set in (\ref{eq:edges-pathloss}) as a special case.
However, for arbitrary parameters~$\th,\WM,\WE$, the $\is$graph
can no longer be characterized by the simple geometric description
of (\ref{eq:edges-pathloss}). We now analyze the impact of the secrecy
rate threshold~$\th$ and the noise powers~$\WM,\WE$ on the average
node degrees, for a general channel gain function~$g(r)$.

\begin{property}\label{prop:ENin-ENout-thetaW}For the Poisson $\is$graph
with edge set in (\ref{eq:edges-thetaW}) and any channel gain function~$g(r)$
satisfying the conditions in Section~\ref{sub:Propagation}, the
average node degrees~$\mathbb{E}\{N_{\mathrm{out}}\}=\mathbb{E}\{N_{\mathrm{in}}\}$
are decreasing functions of $\th$ and $\WM$, and increasing functions
of $\WE$.\end{property}

\begin{proof}We prove the theorem with a coupling argument. We consider
the process~$\PiM\cup\{0\}$ obtained by adding a legitimate node
to the origin of the coordinate system, and denote the out-degree
of the node at the origin by $N_{\mathrm{out}}$. Let $\RE 1\triangleq\min_{e_{i}\in\PiE}|e_{i}|$
be the random distance between the origin and its closest eavesdropper.
We first consider the variation of $\mathbb{E}\{N_{\mathrm{out}}\}$
with $\th$, for fixed~$\WM,\WE$. Let $X(\th)\triangleq\left\{ x_{i}\in\PiM:g(|x_{i}|)>\frac{\WM}{\WE}2^{\th}g(\RE 1)+\frac{\WM}{\PM}(2^{\th}-1)\right\} $
be the set of legitimate nodes to which the origin is out-connected.
With this definition,\[
\mathbb{E}\{N_{\mathrm{out}}(\th)\}=\mathbb{E}_{\PiM,\RE 1}\{\#X(\th)\},\]
where we have explicitly indicated the dependence of $\mathbb{E}\{N_{\mathrm{out}}\}$
on $\th$. Since $\frac{\WM}{\WE}2^{\th}g(\RE 1)+\frac{\WM}{P_{\mathrm{l}}}(2^{\th}-1)$
is increasing in $\th$, for each realization of $\Pi$ and $\RE 1$
we have that $X(\th_{1})\supseteq X(\th_{2})$, whenever $0<\th_{1}<\th_{2}$.
This implies that $\mathbb{E}_{\PiM,\RE 1}\{\#X(\th_{1})\}\geq\mathbb{E}_{\PiM,\RE 1}\{\#X(\th_{2})\}$,
or equivalently, $\mathbb{E}\{N_{\mathrm{out}}(\th_{1})\}\geq\mathbb{E}\{N_{\mathrm{out}}(\th_{2})\}$
for $0<\th_{1}<\th_{2}$, and thus $\mathbb{E}\{N_{\mathrm{out}}(\th)\}$
is decreasing with $\th$. A similar argument holds for the parameters~$\WM,\WE$,
showing that $\mathbb{E}\{N_{\mathrm{out}}\}$ is decreasing with
$\WM$ and increasing with $\WE$. This concludes the proof.\end{proof}

In essence, by increasing the secrecy rate threshold~$\th$, the
requirement~$C_{s}(x_{i},x_{j})>\th$ for any two nodes~$x_{i},x_{j}$
to be securely connected becomes stricter, and thus the local connectivity
(as measured by the average node degrees) becomes worse. On the other
hand, increasing $\WM$ or decreasing $\WE$ makes the requirement~$C_{s}(x_{i},x_{j})>\th$
harder to satisfy for any two legitimate nodes~$x_{i},x_{j}$. As
a result, the local connectivity (as measured by the average node
degrees) becomes worse.

The exact dependence of the average node degree on the parameters~$\th,\WM,\WE$
depends on the function~$g(r)$. To gain further insights, we now
consider the specific channel gain function\begin{equation}
g(r)=\frac{1}{r^{2b}},\quad r>0.\label{eq:g-ub}\end{equation}
This function has been widely used in the literature to model path
loss behavior as a function of distance, and satisfies the conditions
in Section~\ref{sub:Propagation}. Replacing (\ref{eq:g-ub}) into
(\ref{eq:edges-thetaW}) and rearranging terms, the edge set reduces
to\begin{equation}
\mathcal{E}=\left\{ \overrightarrow{x_{i}x_{j}}:|x_{i}-x_{j}|<\frac{|x_{i}-e^{*}|}{\left(\frac{\WM}{\WE}2^{\th}+\frac{\WM}{\PM}(2^{\th}-1)|x_{i}-e^{*}|^{2b}\right)^{1/2b}},\quad e^{*}=\underset{e_{k}\in\PiE}{\mathrm{argmin}}\:|x_{i}-e_{k}|\Bigr\}\right\} .\label{eq:edges-thetaW-ub}\end{equation}
For this case, a characterization of the first order moments of $N_{\mathrm{in}}$
and $N_{\mathrm{out}}$ is possible, and is provided in the following
theorem.

\begin{theorem}For the Poisson $\is$graph with secrecy rate threshold~$\th$,
noise powers~$\WM,\WE$, and channel gain function~$g(r)=\frac{1}{r^{2b}}$,
the average node degrees are\begin{align}
\mathbb{E}\{N_{\mathrm{in}}\}=\mathbb{E}\{N_{\mathrm{out}}\} & =\pi^{2}\LM\LE\int_{0}^{\infty}\frac{xe^{-\pi\LE x}}{\left(\frac{\WM}{\WE}2^{\th}+\frac{\WM}{\PM}(2^{\th}-1)x^{b}\right)^{1/b}}dx\label{eq:EN-thetaW-ub-exact}\\
 & \leq\frac{\LM}{\LE}\frac{1}{\left(\frac{\WM}{\WE}2^{\th}+\frac{\WM}{\PM(\pi\LE)^{b}}(2^{\th}-1)\right)^{1/b}}.\label{eq:EN-thetaW-ub-bound}\end{align}
 \end{theorem}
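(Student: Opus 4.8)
The plan is to reduce everything to the out-degree and then to a single one-dimensional integral. First I would invoke the identity $\mathbb{E}\{N_{\mathrm{in}}\}=\mathbb{E}\{N_{\mathrm{out}}\}$, valid for any directed random graph (as already used for Property~\ref{prop:ENin-ENout}), so it suffices to compute $\mathbb{E}\{N_{\mathrm{out}}\}$. By Slivnyak's theorem I place a legitimate node at the origin and work with $\PiM\cup\{0\}$; let $\RE 1=\min_{e_i\in\PiE}|e_i|$ be the distance to the nearest eavesdropper. Reading off the edge set~(\ref{eq:edges-thetaW-ub}) for $g(r)=1/r^{2b}$, the origin is out-connected to $x_i\in\PiM$ exactly when $|x_i|<\rho(\RE 1)$, where the \emph{effective connection radius} is
\[
\rho(r)\triangleq\frac{r}{\left(\frac{\WM}{\WE}2^{\th}+\frac{\WM}{\PM}(2^{\th}-1)r^{2b}\right)^{1/2b}}.
\]
Consequently, conditioned on $\RE 1=r$, the count $N_{\mathrm{out}}$ is the number of points of $\PiM$ in the ball $\mathcal{B}_0(\rho(r))$, hence Poisson with mean $\LM\pi\,\rho(r)^2$.

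Next I would average over $\RE 1$. For a homogeneous Poisson process of density $\LE$ in $\mathbb{R}^2$ one has $\mathbb{P}\{\RE 1>r\}=e^{-\pi\LE r^2}$, so $X\triangleq\RE 1^{2}$ is exponential with rate $\pi\LE$ and $\mathbb{E}\{X\}=1/(\pi\LE)$. Writing $a\triangleq\frac{\WM}{\WE}2^{\th}$ and $c\triangleq\frac{\WM}{\PM}(2^{\th}-1)$, we have $\rho(r)^2=x/(a+cx^{b})^{1/b}$ with $x=r^2$, so
\[
\mathbb{E}\{N_{\mathrm{out}}\}=\LM\pi\,\mathbb{E}\!\left\{\frac{X}{(a+cX^{b})^{1/b}}\right\}=\pi^{2}\LM\LE\int_{0}^{\infty}\frac{x\,e^{-\pi\LE x}}{(a+cx^{b})^{1/b}}\,dx,
\]
which upon restoring $a,c$ is precisely~(\ref{eq:EN-thetaW-ub-exact}) (equivalently, keep $r$ in $\int_{0}^{\infty}\LM\pi\rho(r)^2\cdot 2\pi\LE r e^{-\pi\LE r^2}\,dr$ and substitute $x=r^2$).

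For the bound~(\ref{eq:EN-thetaW-ub-bound}) I would set $\phi(x)\triangleq x(a+cx^{b})^{-1/b}$ on $[0,\infty)$, so that $\mathbb{E}\{N_{\mathrm{out}}\}=\LM\pi\,\mathbb{E}\{\phi(X)\}$. A short differentiation gives $\phi'(x)=a\,(a+cx^{b})^{-(b+1)/b}$; since $a>0$ and $c\ge0$ (because $\th\ge0$) and $b>0$, this is a positive, \emph{nonincreasing} function of $x$, so $\phi$ is concave. Jensen's inequality then yields $\mathbb{E}\{\phi(X)\}\le\phi(\mathbb{E}\{X\})=\phi\!\big(1/(\pi\LE)\big)$, and multiplying by $\LM\pi$ gives
\[
\mathbb{E}\{N_{\mathrm{out}}\}\le\LM\pi\cdot\frac{1/(\pi\LE)}{\big(a+c(\pi\LE)^{-b}\big)^{1/b}}=\frac{\LM}{\LE}\,\frac{1}{\big(a+c(\pi\LE)^{-b}\big)^{1/b}},
\]
which is~(\ref{eq:EN-thetaW-ub-bound}) after substituting back $a$ and $c$.

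The only step that is not entirely routine is the concavity of $\phi$; the trick is to avoid $\phi''$ altogether and observe that $\phi'$ collapses to the manifestly decreasing expression $a(a+cx^{b})^{-(b+1)/b}$. Everything else — conditioning the Poisson count on the nearest-eavesdropper distance, the exponential law of $\RE 1^{2}$, and the change of variable $x=r^{2}$ — is standard stochastic-geometry bookkeeping, and the equality $\mathbb{E}\{N_{\mathrm{in}}\}=\mathbb{E}\{N_{\mathrm{out}}\}$ is quoted from the general directed-graph fact noted after Property~\ref{prop:ENin-ENout}.
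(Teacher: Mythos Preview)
Your proof is correct and follows essentially the same route as the paper: place a node at the origin via Slivnyak, condition on the nearest-eavesdropper distance so that $N_{\mathrm{out}}$ is Poisson with mean $\LM\pi\rho(\RE 1)^2$, pass to $X=\RE 1^2\sim\mathrm{Exp}(\pi\LE)$ to obtain the integral, and then apply Jensen to the concave map $x\mapsto x(a+cx^b)^{-1/b}$. The only embellishment relative to the paper is your explicit verification of concavity via $\phi'(x)=a(a+cx^b)^{-(b+1)/b}$ being nonincreasing, which the paper simply asserts.
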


\begin{proof}We consider the process~$\PiM\cup\{0\}$ obtained by
adding a legitimate node to the origin of the coordinate system, and
denote the out-degree of the node at the origin by $N_{\mathrm{out}}$.
Let $\RE 1\triangleq\min_{e_{i}\in\PiE}|e_{i}|$ be the random distance
between the origin and its closest eavesdropper. Define the function\begin{equation}
\psi(r)\triangleq\frac{r}{\left(\frac{\WM}{\WE}2^{\th}+\frac{\WM}{\PM}(2^{\th}-1)r^{2b}\right)^{1/2b}},\quad r\geq0,\label{eq:f-ub}\end{equation}
so that (\ref{eq:edges-thetaW-ub}) can simply be written as $\mathcal{E}=\{\overrightarrow{x_{i}x_{j}}:|x_{i}-x_{j}|<\ff(|x_{i}-e^{*}|)\}$.
This function is depicted in Figure~\ref{fig:out-degree-theta}.
The average out-degree is then given by\begin{align*}
\mathbb{E}\{N_{\mathrm{out}}\} & =\mathbb{E}_{\PiM,\RE 1}\{\PiM\{\mathcal{B}_{0}(\ff(\RE 1))\}\}\\
 & =\pi\LM\mathbb{E}_{\RE 1}\{\ff^{2}(\RE 1)\}\end{align*}
Defining $X\triangleq\RE 1^{2}$, we can write\begin{align}
\mathbb{E}\{N_{\mathrm{out}}\} & =\pi\LM\mathbb{E}_{X}\left\{ \frac{X}{\left(\frac{\WM}{\WE}2^{\th}+\frac{\WM}{\PM}(2^{\th}-1)X^{b}\right)^{1/b}}\right\} \label{eq:concave-temp}\\
 & =\pi\LM\int_{0}^{\infty}\frac{x}{\left(\frac{\WM}{\WE}2^{\th}+\frac{\WM}{\PM}(2^{\th}-1)x^{b}\right)^{1/b}}\pi\LE e^{-\pi\LE x}dx,\nonumber \end{align}
where we used the fact that $X$ is an exponential RV with mean~$\frac{1}{\pi\LE}$.
This proves the result in (\ref{eq:EN-thetaW-ub-exact}). To obtain
the upper bound, we note that the function inside the expectation
in (\ref{eq:concave-temp}) is concave in $x$, and apply Jensen's
inequality as follows\begin{align*}
\mathbb{E}\{N_{\mathrm{out}}\} & \leq\frac{\LM}{\LE}\frac{1}{\left(\frac{\WM}{\WE}2^{\th}+\frac{\WM}{\PM(\pi\LE)^{b}}(2^{\th}-1)\right)^{1/b}}.\end{align*}
This is the result in (\ref{eq:EN-thetaW-ub-bound}). Noting that
$\mathbb{E}\{N_{\mathrm{in}}\}=\mathbb{E}\{N_{\mathrm{out}}\}$ for
any directed random graph, the proof is concluded.\end{proof}

\subsection{Numerical Results}

Figure~\ref{fig:pmf-degrees} compares the PMFs of the in- and out-degree
of a node. We clearly observe that the RV~$N_{\mathrm{in}}$ does
not have a geometric distribution, unlike the RV~$N_{\mathrm{out}}$.
However, the two RVs have the same mean~$\frac{\LM}{\LE}$, according
to Property~\ref{prop:ENin-ENout}.

Figure~\ref{fig:p-isol} compares the probabilities of out-isolation
and in-isolation of a node for various ratios~$\frac{\LE}{\LM}$.
The curve for $p_{\mathrm{out-isol}}$ was plotted using the closed
form expression in (\ref{eq:p-out-isol}). The curve for $p_{\mathrm{in-isol}}$
was obtained according to (\ref{eq:p-in-isol}) through Monte Carlo
simulation of the random area~$\widetilde{A}$ of a typical Voronoi
cell, induced by a unit-density Poisson process. We observe that $p_{\mathrm{in-isol}}<p_{\mathrm{out-isol}}$
for any fixed~$\frac{\LE}{\LM}$, as proved in Property~\ref{prop:pin-pout-isol}.

Figure~\ref{fig:theta-PW-plot} illustrates the effect of the secrecy
rate threshold~$\th$ on the average node degrees. For the case of
$g(r)=\frac{1}{r^{2b}}$ in particular, it compares the exact value
of $\mathbb{E}\{N_{\mathrm{out}}\}$ given in (\ref{eq:EN-thetaW-ub-exact})
with its upper bound in (\ref{eq:EN-thetaW-ub-bound}). We observe
that the average node degree attains its maximum value of $\frac{\LM}{\LE}=10$
at $\th=0$, and is monotonically decreasing with $\th$. As proved
in Property~\ref{prop:ENin-ENout-thetaW}, such behavior occurs for
any function~$g(r)$ satisfying the conditions in Section~\ref{sub:Propagation}.
Furthermore, we can show that the upper bound is asymptotically tight
-- in the sense that the difference between the exact average node
degree and its upper bound approaches $0$ -- in the following two
extreme cases: 
\begin{itemize}
\item $\th\rightarrow0$: In this regime, both (\ref{eq:EN-thetaW-ub-exact})
and (\ref{eq:EN-thetaW-ub-bound}) approach $\frac{\LM}{\LE}\left(\frac{\WE}{\WM}\right)^{1/b}$,
and thus the bound is asymptotically tight.
\item $\PM\rightarrow\infty$: In this high-SNR regime, both (\ref{eq:EN-thetaW-ub-exact})
and (\ref{eq:EN-thetaW-ub-bound}) converge to $\frac{\LM}{\LE}\left(\frac{\WE}{\WM}2^{-\th}\right)^{1/b}$,
and thus the bound is asymptotically tight.
\end{itemize}

\section{Techniques for Communication with Enhanced Secrecy\label{sec:Techniques}}

Based on the results derived in Section~\ref{sec:Node-Degrees},
we observe that even a small density of eavesdroppers is enough to
significantly disrupt connectivity of the $\is$graph. For example,
if the density of legitimate nodes is half the density of eavesdroppers,
then from (\ref{eq:ENin-ENout}) the average node degree is reduced
to $2$. In this section, we explore two techniques for communication
with enhanced secrecy: i)~\emph{sectorized transmission}, whereby
 each legitimate node is able to transmit independently in $L$~sectors
of the plane (e.g.,~through the use of directional antennas); and
ii)~\emph{eavesdropper neutralization}, whereby legitimate nodes
are able to physically monitor its surrounding area and guarantee
that there are no eavesdroppers inside  a neutralization\emph{ }region~$\Theta$
(e.g.,~ by neutralizing such eavesdroppers). For these two techniques,
we quantify the improvements in terms of the resulting average node
degree of the $\is$graph.

\subsection{Sectorized Transmission\label{sub:Sectorized-Transmission}}

We have so far assumed that the legitimate nodes employ omnidirectional
antennas, distributing power equally among all directions. We now
consider that each legitimate node is able to transmit independently
in $L$~sectors of the plane, with $L\geq1$. This can be accomplished,
for example, through the use of $L$~directional antennas. In this
section, we characterize the impact of the number of sectors~$L$
on the local connectivity of the $\is$graph.

With each node~$x_{i}\in\Pi$, we associate $L$~transmission sectors~$\{\mathcal{S}_{i}^{(l)}\}_{l=1}^{L}$
, defined as\[
\mathcal{S}_{i}^{(l)}\triangleq\left\{ z\in\mathbb{R}^{2}:\phi_{i}+(l-1)\frac{2\pi}{L}<\angle\overrightarrow{x_{i}z}<\phi_{i}+l\frac{2\pi}{L}\right\} ,\quad l=1\ldots L,\]
where $\{\phi_{i}\}_{i=1}^{\infty}$ are random offset angles with
an arbitrary joint distribution. The resulting $\is$graph~$G_{L}=\{\PiM,\mathcal{E}_{L}\}$
has an edge set given by\begin{equation}
\mathcal{E}_{L}=\Bigl\{\overrightarrow{x_{i}x_{j}}:|x_{i}-x_{j}|<|x_{i}-e^{*}|,\quad e^{*}=\underset{e_{k}\in\PiE\cap\mathcal{S}^{*}}{\mathrm{argmin}}\:|x_{i}-e_{k}|,\quad\mathcal{S}^{*}=\{\mathcal{S}_{i}^{(l)}:x_{j}\in\mathcal{S}_{i}^{(l)}\}\Bigr\}.\label{eq:edges-sectors}\end{equation}
Here, $\mathcal{S}^{*}$ is the transmission sector of $x_{i}$ that
contains the destination node~$x_{j}$, and $e^{*}$ is the eavesdropper
inside $\mathcal{S}^{*}$ that is closest to the transmitter~$x_{i}$.
Then, the secure link~$\overrightarrow{x_{i}x_{j}}$ exists if and
only if $x_{j}$ is closer to $x_{i}$ than any other eavesdropper
inside the same transmission sector where the destination~$x_{j}$
is located. We start by characterizing the distribution of the out-degree
by the following theorem.

\begin{theorem}\label{thm:p-Nout-sectors}For the Poisson $\is$graph~$G_{L}$
with $L$~sectors, the out-degree~$N_{\mathrm{out}}$ of a node
has the following negative binomial PMF\begin{equation}
p_{N_{\mathrm{out}}}(n)={L+n-1 \choose L-1}\left(\frac{\LM}{\LM+\LE}\right)^{n}\left(\frac{\LE}{\LM+\LE}\right)^{L},\quad n\geq0.\label{eq:p-Nout-sectors}\end{equation}
\end{theorem}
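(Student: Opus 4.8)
The plan is to reduce the $L$-sector problem to $L$ independent copies of the single-sector case already treated in Theorem~\ref{thm:p-Nout}, exploiting the fact that a homogeneous Poisson process restricted to disjoint regions yields independent Poisson processes. First I would invoke Slivnyak's theorem to place a legitimate node at the origin and work with $\PiM\cup\{0\}$, denoting its out-degree by $N_{\mathrm{out}}$. Conditioning on the offset angle~$\phi_0$ of this node, the sectors~$\{\mathcal{S}_0^{(l)}\}_{l=1}^{L}$ form a partition of $\mathbb{R}^2$ into $L$ congruent angular wedges, each of angular width~$2\pi/L$. From~(\ref{eq:edges-sectors}), an edge $\overrightarrow{0\,x_j}$ is present iff $x_j\in\mathcal{S}_0^{(l)}$ for some~$l$ and $x_j$ is closer to the origin than every eavesdropper lying in that same sector~$\mathcal{S}_0^{(l)}$. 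Hence $N_{\mathrm{out}}=\sum_{l=1}^{L}N_{\mathrm{out}}^{(l)}$, where $N_{\mathrm{out}}^{(l)}$ counts the legitimate nodes of $\PiM$ in sector~$l$ that are closer to the origin than every eavesdropper of $\PiE$ in sector~$l$.

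Next I would establish that the summands $N_{\mathrm{out}}^{(1)},\ldots,N_{\mathrm{out}}^{(L)}$ are independent and identically distributed. Independence holds because each $N_{\mathrm{out}}^{(l)}$ is a function only of the restrictions of $\PiM$ and $\PiE$ to the pairwise disjoint wedges~$\mathcal{S}_0^{(l)}$, and restrictions of a Poisson process to disjoint sets are independent; identical distribution follows from the rotational invariance of the homogeneous processes, which also removes the conditioning on~$\phi_0$. To identify the common law, I would repeat the mapping argument of Theorem~\ref{thm:p-Nout} inside a single wedge: applying $r\mapsto r^{2}$ to the radial coordinates, the legitimate and eavesdropper points in sector~$l$ become independent homogeneous one-dimensional Poisson processes of rates $\pi\LM/L$ and $\pi\LE/L$, since a wedge of angular width~$2\pi/L$ and radius~$r$ has area $\pi r^{2}/L$. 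In the merged process, an arrival is of legitimate type with probability $p=\frac{\pi\LM/L}{\pi\LM/L+\pi\LE/L}=\frac{\LM}{\LM+\LE}$, independently across arrivals, so $N_{\mathrm{out}}^{(l)}$ --- the number of legitimate arrivals preceding the first eavesdropper arrival --- is geometric with $p_{N_{\mathrm{out}}^{(l)}}(n)=p^{n}(1-p)$, exactly as in~(\ref{eq:p-Nout}); crucially the factor~$1/L$ cancels, so this per-sector distribution does not depend on~$L$.

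Finally, I would identify the law of the sum: $N_{\mathrm{out}}$ is a sum of $L$ i.i.d.\ geometric random variables with parameter~$p$, which is negative binomial. This can be checked either by $L$-fold convolution of the geometric PMF --- the number of ordered representations of $n$ as a sum of $L$ nonnegative integers is $\binom{L+n-1}{L-1}$ --- or via the probability generating function $\left(\frac{1-p}{1-pz}\right)^{L}$, whose coefficient of $z^{n}$ is $\binom{L+n-1}{L-1}p^{n}(1-p)^{L}$. Substituting $p=\frac{\LM}{\LM+\LE}$ and $1-p=\frac{\LE}{\LM+\LE}$ yields exactly~(\ref{eq:p-Nout-sectors}). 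I expect the only genuinely delicate points to be the justification of per-sector independence and the exact cancellation of the factor~$1/L$ in the thinning probability; the remainder is a direct assembly of Slivnyak's theorem, the mapping/marking arguments already established for Theorem~\ref{thm:p-Nout}, and the standard sum-of-geometrics identity.
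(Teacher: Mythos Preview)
Your proposal is correct and follows essentially the same approach as the paper: Slivnyak's theorem to place a node at the origin, decomposition $N_{\mathrm{out}}=\sum_{l=1}^{L}N_{\mathrm{out}}^{(l)}$ into independent per-sector contributions via disjointness of the wedges, the mapping $r\mapsto r^{2}$ yielding one-dimensional Poisson processes of rates $\pi\LM/L$ and $\pi\LE/L$, the geometric law for each $N_{\mathrm{out}}^{(l)}$ with success probability $\LM/(\LM+\LE)$ (the $1/L$ cancels), and the identification of the $L$-fold convolution of geometrics as a negative binomial. If anything, your write-up is slightly more explicit than the paper's about the conditioning on $\phi_0$ and the rotational-invariance argument for identical distribution.
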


\begin{proof}We consider the process~$\PiM\cup\{0\}$ obtained by
adding a legitimate node to the origin of the coordinate system, and
denote the out-degree of the node at the origin by $N_{\mathrm{out}}$.
 This is depicted in Fig.~\ref{fig:out-degree-sectors}. Consider
the set of legitimate nodes in the sector~$\mathcal{S}^{(l)}$. Let
$\{\RM i^{(l)}\}_{i=1}^{\infty}$ be the distances (not necessarily
ordered) from these legitimate nodes and the origin, such that $\RM i^{(l)}=|x_{i}^{(l)}|$,
with $\{x_{i}^{(l)}\}=\PiM\cap\mathcal{S}^{(l)}$. For the eavesdroppers,
we similarly define $\{\RE i^{(l)}\}_{i=1}^{\infty}$, such that $\RE i^{(l)}=|e_{i}^{(l)}|$
with $\{e_{i}^{(l)}\}=\PiE\cap\mathcal{S}^{(l)}$. Because the sectors~$\mathcal{S}^{(l)}$
are non-overlapping and $\PiM$ is Poisson, the processes~$\{\RM i^{(l)}\}_{i=1}^{\infty}$
are independent for different~$l$ (a similar argument can be made
for the independence of $\{\RE i^{(l)}\}_{i=1}^{\infty}$ for different~$l$).
As a result, we can analyze the out-degrees of node~$0$ in each
sector, and add these independent RVs to obtain the total out-degree.
Specifically,\begin{equation}
N_{\mathrm{out}}=\sum_{l=1}^{L}N_{\mathrm{out}}^{(l)},\label{eq:Nout-sum}\end{equation}
where the RVs\[
N_{\mathrm{out}}^{(l)}\triangleq\#\{\RM i^{(l)}:\RM i^{(l)}<\min_{k}\RE k^{(l)}\},\]
are IID in $l$. 

From the mapping theorem, we know that $\{(\RM i^{(l)})^{2}\}_{i=1}^{\infty}$
and $\{(\RE i^{(l)})^{2}\}_{i=1}^{\infty}$ are homogeneous Poisson
processes with rates~$\frac{\pi\LM}{L}$ and $\frac{\pi\LE}{L}$,
respectively. Following the steps analogous to the proof of Theorem~\ref{thm:p-Nout},
we can show that each RV~$N_{\mathrm{out}}^{(l)}$ has the geometric
PMF~ $p_{N_{\mathrm{out}}^{(l)}}(n)=p^{n}(1-p),$ $n\geq0$, with
parameter~$p=\frac{\LM}{\LM+\LE}$. In other words, each RV~$N_{\mathrm{out}}^{(l)}$
has the same distribution of the total out-degree with\emph{ $L=1$}.
The PMF of $N_{\mathrm{out}}$ with $L$~sectors can be obtained
through convolution of the individual PMFs~$p_{N_{\mathrm{out}}^{(l)}}$,
and results in a negative binomial PMF\ with $L$~degrees of freedom
having the same parameter~$p$, i.e., $p_{N_{\mathrm{out}}}(n)={L+n-1 \choose L-1}p^{n}(1-p)^{L},$
$n\geq0$, with $p=\frac{\LM}{\LM+\LE}$. This is the result in (\ref{eq:p-Nout-sectors})
and the proof is completed.\end{proof}

When $L=1$, (\ref{eq:p-Nout-sectors}) reduces to the PMF\ without
sectorization given in (\ref{eq:p-Nout}), as expected. The above
theorem directly gives the average node degrees as a function of~$L$,
as given in the following corollary.

\begin{corollary}For the Poisson $\is$graph~$G_{L}$ with $L$~sectors,
the average node degrees are\begin{equation}
\mathbb{E}\{N_{\mathrm{in}}\}=\mathbb{E}\{N_{\mathrm{out}}\}=L\frac{\LM}{\LE}.\label{eq:ENin-ENout-sectors}\end{equation}
\end{corollary}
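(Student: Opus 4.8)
The plan is to derive the average node degrees directly from the negative binomial PMF established in Theorem~\ref{thm:p-Nout-sectors}, exactly as the earlier corollaries followed from their parent theorems. First I would recall that a negative binomial random variable with PMF $p_{N_{\mathrm{out}}}(n)={L+n-1 \choose L-1}p^{n}(1-p)^{L}$, $n\geq0$, has mean $\frac{Lp}{1-p}$; with $p=\frac{\LM}{\LM+\LE}$ and $1-p=\frac{\LE}{\LM+\LE}$ this gives $\mathbb{E}\{N_{\mathrm{out}}\}=L\frac{\LM/(\LM+\LE)}{\LE/(\LM+\LE)}=L\frac{\LM}{\LE}$. An alternative route, perhaps cleaner given how the theorem was proved, is to use the decomposition $N_{\mathrm{out}}=\sum_{l=1}^{L}N_{\mathrm{out}}^{(l)}$ from (\ref{eq:Nout-sum}), where each $N_{\mathrm{out}}^{(l)}$ is geometric with parameter $p$ and hence (by (\ref{eq:E-Nout}) with $L=1$) has mean $\frac{\LM}{\LE}$; linearity of expectation then yields $\mathbb{E}\{N_{\mathrm{out}}\}=L\frac{\LM}{\LE}$ immediately, without even needing to convolve the PMFs.

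Second, I would invoke the identity $\mathbb{E}\{N_{\mathrm{in}}\}=\mathbb{E}\{N_{\mathrm{out}}\}$, which the excerpt has repeatedly noted holds for \emph{any} directed random graph (see the remark after Property~\ref{prop:ENin-ENout} and the proof of the previous theorem). This is the standard double-counting fact: the expected number of edges per vertex is the same whether one counts incoming or outgoing edges, a consequence of the mass-transport / stationarity of the underlying point process. Combining this with the out-degree computation gives $\mathbb{E}\{N_{\mathrm{in}}\}=\mathbb{E}\{N_{\mathrm{out}}\}=L\frac{\LM}{\LE}$, which is (\ref{eq:ENin-ENout-sectors}).

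There is essentially no obstacle here: the statement is a one-line consequence of Theorem~\ref{thm:p-Nout-sectors} together with the general in-degree/out-degree balance, so the proof is just a routine moment calculation. The only point requiring a modicum of care is making sure the mean of the negative binomial is quoted in the convention matching the PMF in (\ref{eq:p-Nout-sectors}) — i.e. counting failures before the $L$th success, giving mean $Lp/(1-p)$ rather than $L/(1-p)$ — but the summation-based derivation via (\ref{eq:Nout-sum}) sidesteps even this, so I would present that version for brevity and then append the one-line appeal to $\mathbb{E}\{N_{\mathrm{in}}\}=\mathbb{E}\{N_{\mathrm{out}}\}$.
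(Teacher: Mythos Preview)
Your proposal is correct and matches the paper's own proof essentially line for line: the paper uses the decomposition $N_{\mathrm{out}}=\sum_{l=1}^{L}N_{\mathrm{out}}^{(l)}$ from (\ref{eq:Nout-sum}), computes $\mathbb{E}\{N_{\mathrm{out}}\}=L\cdot\frac{p}{1-p}=L\frac{\LM}{\LE}$ with $p=\frac{\LM}{\LM+\LE}$, and then appeals to $\mathbb{E}\{N_{\mathrm{in}}\}=\mathbb{E}\{N_{\mathrm{out}}\}$ for any directed random graph. Your ``alternative route'' via (\ref{eq:Nout-sum}) is precisely the argument the paper gives.
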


\begin{proof}Using (\ref{eq:Nout-sum}), we have that $\mathbb{E}\{N_{\mathrm{out}}\}=L\mathbb{E}\{N_{\mathrm{out},l}\}=L\frac{p}{1-p}$,
with $p=\frac{\LM}{\LM+\LE}$. In addition, we have that $\mathbb{E}\{N_{\mathrm{in}}\}=\mathbb{E}\{N_{\mathrm{out}}\}$
for any directed random graph, and (\ref{eq:ENin-ENout-sectors})
follows.\end{proof}

We conclude that the expected node degrees increases \emph{linearly}
with the number of sectors~$L$, and hence sectorized transmission
is an effective technique for enhancing the secrecy of communications.
Figure~\ref{fig:out-degree-sectors} provides an intuitive understanding
of why sectorization works. Specifically, if there was no sectorization,
node~$0$ would be out-isolated, due to the close proximity of the
eavesdropper in sector~$\mathcal{S}^{(4)}$. However, if we allow
independent transmissions in $4$ non-overlapping sectors, that same
eavesdropper can only hear the transmissions inside sector~$\mathcal{S}^{(4)}$.
Thus, even though node~$0$ is out-isolated with respect to sector~$\mathcal{S}^{(4)}$,
it may still communicate securely with legitimate nodes in sectors~$\mathcal{S}^{(1)},$
$\mathcal{S}^{(2)},$ and $\mathcal{S}^{(3)}$.

\subsection{Eavesdropper Neutralization\label{sub:Eavesdropper-Neutralization}}

In some scenarios, the legitimate nodes may be able to physically
inspect its surrounding area and guarantee that there are no eavesdroppers
inside a \emph{neutralization region}~$\Theta$ (for example, by
deactivating such eavesdroppers). In this section, we characterize
the impact of such region on the local connectivity of node.

With each node~$x_{i}\in\PiM$, we associate a neutralization set~$\Theta_{i}$
around $x_{i}$ that is guaranteed to be free of eavesdroppers. The
\emph{total neutralization region~}$\Theta$ can then be seen as
a Boolean model with points~$\{x_{i}\}$ and associated sets~$\{\Theta_{i}\}$,
i.e.,%
\footnote{In other fields such as materials science, the points~$\{x_{i}\}$
are also called \emph{germs}, and the sets~$\{\Theta_{i}\}$ are
also called \emph{grains}.%
}\[
\Theta=\bigcup_{i=1}^{\infty}(x_{i}+\Theta_{i}).\]
Since the homogeneous Poisson process~$\PiM$ is stationary, it follows
that $\Theta$ is also stationary, in the sense that its distribution
is translation-invariant. Since eavesdroppers cannot occur inside
$\Theta$, the \emph{effective eavesdropper process} after neutralization
is $\PiE\cap\overline{\Theta}$, where $\overline{\Theta}\triangleq\mathbb{R}^{2}\backslash\Theta$
denotes the complement of $\Theta$.%
\footnote{In the materials science literature, $\Theta$ is typically referred
to as the \emph{occupied region}, since it is occupied by grains.
In our problem, however, $\Theta$ corresponds to a \emph{vacant region},
in the sense that it is free of eavesdroppers. To prevent confusion
with the literature, we avoid the use of the terms {}``occupied''
and {}``vacant'' altogether.%
} The resulting $\is$graph~$G_{\Theta}=\{\PiM,\mathcal{E}_{\Theta}\}$
has an edge set given by\begin{equation}
\mathcal{E}_{\Theta}=\Bigl\{\overrightarrow{x_{i}x_{j}}:|x_{i}-x_{j}|<|x_{i}-e^{*}|,\quad e^{*}=\underset{e_{k}\in\PiE\cap\overline{\Theta}}{\mathrm{argmin}}\:|x_{i}-e_{k}|\Bigr\}\label{eq:edges-exclusion}\end{equation}
i.e.,~the secure link~$\overrightarrow{x_{i}x_{j}}$ exists if and
only if $x_{j}$ is closer to $x_{i}$ than any other eavesdropper
that has not been neutralized. Since $\PiE\cap\overline{\Theta}\subseteq\PiE$,
it is intuitively obvious that eavesdropper neutralization improves
the local connectivity, and that such improvement is monotonic with
the area of the neutralization set~$\Theta_{i}$. In the following,
we consider the case of a circular neutralization set, i.e,~$\Theta_{i}=\mathcal{B}_{0}(\rho)$,
where $\rho$ is a deterministic \emph{neutralization radius}. We
denote the corresponding $\is$graph by $G_{\rho}$. Even in this
simple scenario, the full distributions of the corresponding node
degrees~$N_{\mathrm{in}}$ and $N_{\mathrm{out}}$ are difficult
to obtain, since the underlying process~$\PiE\cap\overline{\Theta}$
is quite complex to characterize. However, it is easier to carry out
an analysis of the first order moments, namely of $\mathbb{E}\{N_{\mathrm{out}}\}$.
We can use this metric to compare eavesdropper neutralization with
the other techniques discussed in this paper, in terms of their effectiveness
in enhancing security. The following theorem provides the desired
result.

\begin{theorem}For the Poisson $\is$graph~$G_{\rho}$ with neutralization
radius~$\rho$, the average node degrees are lower-bounded by \begin{equation}
\mathbb{E}\{N_{\mathrm{in}}\}=\mathbb{E}\{N_{\mathrm{out}}\}\geq\frac{\LM}{\LE}\left(\pi\LE\rho^{2}+e^{\pi\LM\rho^{2}}\right).\label{eq:ENin-ENout-exclusion}\end{equation}
\end{theorem}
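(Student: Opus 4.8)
The plan is to reduce everything, via Slivnyak's theorem, to the Palm picture: add a legitimate node at the origin, let $\Theta=\mathcal{B}_{0}(\rho)\cup\bigcup_{x_{i}\in\PiM}\mathcal{B}_{x_{i}}(\rho)$ be the resulting neutralization region, and let $R$ be the distance from the origin to the nearest point of the effective eavesdropper process $\PiE\cap\overline{\Theta}$. Then $N_{\mathrm{out}}=\PiM\{\mathcal{B}_{0}(R)\}=\sum_{x_{i}\in\PiM}\mathbbm{1}\{|x_{i}|<R\}$, and since $\mathbb{E}\{N_{\mathrm{in}}\}=\mathbb{E}\{N_{\mathrm{out}}\}$ for any directed random graph, it suffices to lower-bound $\mathbb{E}\{N_{\mathrm{out}}\}$. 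The subtlety --- which is why the naive computation valid for $\rho=0$, namely $\mathbb{E}\{N_{\mathrm{out}}\}=\pi\LM\mathbb{E}\{R^{2}\}$, fails here --- is that $\Theta$, and hence $R$, depends on the very process $\PiM$ whose points are being counted.

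I would handle this dependence by conditioning on $\PiM$. Given $\PiM$ the set $\Theta$ is frozen, and because $\PiM$ and $\PiE$ are independent, $\PiE\cap\overline{\Theta}$ is a Poisson process of intensity $\LE$ on $\overline{\Theta}$; the void probability then gives $\mathbb{P}\{R>r\mid\PiM\}=\exp(-\LE|\overline{\Theta}\cap\mathcal{B}_{0}(r)|)$. Since the $x_{i}$ are $\PiM$-measurable, moving the conditional expectation inside the sum yields $\mathbb{E}\{N_{\mathrm{out}}\}=\mathbb{E}_{\PiM}\bigl[\sum_{x_{i}\in\PiM}\exp(-\LE|\overline{\Theta}\cap\mathcal{B}_{0}(|x_{i}|)|)\bigr]$. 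I would then apply the Mecke formula for the Poisson process $\PiM$ \cite{StoKenMec:95} to turn this sum over $\PiM$ of a $\PiM$-dependent functional into $\LM\int_{\mathbb{R}^{2}}\mathbb{E}_{\PiM}[\,\cdot\,]\,dx$ with a fresh point inserted at $x$. Inserting $x$ only enlarges $\Theta$ (it adds the grain $\mathcal{B}_{x}(\rho)$), hence only shrinks the uncovered area and only increases the integrand; discarding this extra neutralization gives the valid lower bound $\mathbb{E}\{N_{\mathrm{out}}\}\ge\LM\int_{\mathbb{R}^{2}}\mathbb{E}_{\PiM}\bigl[\exp(-\LE|\overline{\Theta}\cap\mathcal{B}_{0}(|x|)|)\bigr]\,dx$, with $\Theta$ now built from $\PiM$ and the origin only.

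What remains is an elementary estimate. By convexity of $t\mapsto e^{-\LE t}$, Jensen's inequality gives $\mathbb{E}_{\PiM}[\exp(-\LE|\overline{\Theta}\cap\mathcal{B}_{0}(|x|)|)]\ge\exp(-\LE\,\mathbb{E}_{\PiM}|\overline{\Theta}\cap\mathcal{B}_{0}(|x|)|)$, and by Fubini $\mathbb{E}_{\PiM}|\overline{\Theta}\cap\mathcal{B}_{0}(|x|)|=\int_{\mathcal{B}_{0}(|x|)}\mathbb{P}\{y\notin\Theta\}\,dy$. A point $y$ escapes $\Theta$ exactly when $|y|>\rho$ (so it is not neutralized by the origin) and $\PiM\cap\mathcal{B}_{y}(\rho)=\emptyset$, so $\mathbb{P}\{y\notin\Theta\}=\mathbbm{1}\{|y|>\rho\}\,e^{-\pi\LM\rho^{2}}$ and the expected uncovered area equals $\pi\,[\,|x|^{2}-\rho^{2}\,]^{+}e^{-\pi\LM\rho^{2}}$. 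Plugging this in gives $\mathbb{E}\{N_{\mathrm{out}}\}\ge\LM\int_{\mathbb{R}^{2}}\exp(-\pi\LE e^{-\pi\LM\rho^{2}}[\,|x|^{2}-\rho^{2}\,]^{+})\,dx$; splitting into the region $|x|\le\rho$ (contributing $\pi\rho^{2}$) and the region $|x|>\rho$ (where the substitution $u=|x|^{2}-\rho^{2}$ gives $e^{\pi\LM\rho^{2}}/\LE$) yields $\LM\bigl(\pi\rho^{2}+e^{\pi\LM\rho^{2}}/\LE\bigr)=\frac{\LM}{\LE}\bigl(\pi\LE\rho^{2}+e^{\pi\LM\rho^{2}}\bigr)$, and $\mathbb{E}\{N_{\mathrm{in}}\}=\mathbb{E}\{N_{\mathrm{out}}\}$ closes the argument.

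I expect the main obstacle to be exactly this coupling of $\Theta$ with $\PiM$: the clean identity $\mathbb{E}\{N_{\mathrm{out}}\}=\pi\LM\mathbb{E}\{R^{2}\}$ is unavailable, and one needs the Mecke formula to handle the random domain of summation, together with the monotonicity of $\Theta$ under point insertion (so the inserted grain can be dropped) and a single application of Jensen. A minor but essential bookkeeping point is to retain the origin's own neutralization ball $\mathcal{B}_{0}(\rho)$ in $\Theta$ throughout: it is precisely what forces $R>\rho$ and yields the additive $\pi\LE\rho^{2}$ term, and it also explains why the bound tightens to the equality $\mathbb{E}\{N_{\mathrm{out}}\}=\LM/\LE$ as $\rho\to0$.
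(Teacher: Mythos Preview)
Your proposal is correct and follows essentially the same route as the paper: Slivnyak at the origin, Campbell--Mecke (what the paper loosely calls ``Fubini'') to turn the sum over $\PiM$ into $\LM\int\mathbb{P}_{x}\{\cdot\}\,dx$, dropping the inserted grain $\mathcal{B}_{x}(\rho)$ by monotonicity, a single Jensen step on $t\mapsto e^{-\LE t}$, and the same Fubini computation of the expected uncovered area $\pi e^{-\pi\LM\rho^{2}}(|x|^{2}-\rho^{2})$. The only cosmetic difference is order of operations---you integrate out $\PiE$ by conditioning on $\PiM$ \emph{before} invoking Mecke, whereas the paper invokes the Palm formula first and then expands $\mathbb{P}_{x}\{|x|<R_{\mathrm{e},1}\}$ in an appendix---but the ingredients and the resulting bound are identical.
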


\begin{proof}We consider the process~$\PiM\cup\{0\}$ obtained by
adding a legitimate node to the origin of the coordinate system, and
denote the out-degree of the node at the origin by $N_{\mathrm{out}}$.
 This is depicted in Fig.~\ref{fig:out-degree-exclusion}. Let $\RE 1\triangleq\min_{e_{k}\in\PiE\cap\overline{\Theta}}|e_{k}|$
be the random distance between the first non-neutralized eavesdropper
and the origin. Let $\An(a,b)\triangleq\{x\in\mathbb{R}^{2}:a\leq|x|\leq b\}$
denote the annular region between radiuses~$a$ and $b$, and $\mathbb{A}\{\mathcal{R}\}$
denote the area of the arbitrary region~$\mathcal{R}$. Noting that\begin{align*}
N_{\mathrm{out}} & =\sum_{x_{i}\in\PiM}\mathbbm{1}\{|x_{i}|<\RE 1\}\\
 & ={\int\intop}_{\mathbb{R}^{2}}\mathbbm{1}\{|x|<\RE 1\}\PiM(dx),\end{align*}
we can use Fubini's theorem to write\begin{align}
\mathbb{E}\{N_{\mathrm{out}}\} & =\LM{\int\intop}_{\mathbb{R}^{2}}\mathbb{P}_{x}\{|x|<\RE 1\}dx\nonumber \\
 & =\LM\pi\rho^{2}+\LM{\int\intop}_{\An(\rho,\infty)}\mathbb{P}_{x}\{|x|<\RE 1\}dx,\label{eq:ENout-temp}\end{align}
where $\mathbb{P}_{x}\{\cdot\}$ denotes the Palm probability associated
with point~$x$ of process~$\PiM$.%
\footnote{Informally, the Palm probability~$\mathbb{P}_{x}\{\cdot\}$ can be
interpreted as the conditional probability~$\mathbb{P}\{\cdot|x\in\PiM\}$.
Since the conditioning event has probability zero, such conditional
probability is ambiguous without further explanation. Palm theory
makes this notion mathematically precise (see \cite[Sec. 4.4]{StoKenMec:95}
for a detailed treatment).%
} Appendix~\ref{sec:Deriv-Px-bound} shows that the integrand above
satisfies\begin{equation}
\mathbb{P}_{x}\{|x|<\RE 1\}\geq\exp\left(-\pi\LE e^{-\LM\pi\rho^{2}}(|x|^{2}-\rho^{2})\right).\label{eq:Px-bound}\end{equation}
Replacing (\ref{eq:Px-bound}) into (\ref{eq:ENout-temp}), we obtain\begin{align*}
\mathbb{E}\{N_{\mathrm{out}}\} & \geq\LM\pi\rho^{2}+\LM{\int\intop}_{\An(\rho,\infty)}\exp\left(-\pi\LE e^{-\LM\pi\rho^{2}}(|x|^{2}-\rho^{2})\right)dx\\
 & =\LM\pi\rho^{2}+\frac{\LM}{\LE}e^{\LM\pi\rho^{2}}.\end{align*}
Rearranging terms and noting that $\mathbb{E}\{N_{\mathrm{in}}\}=\mathbb{E}\{N_{\mathrm{out}}\}$
for any directed random graph, we obtain the desired result in (\ref{eq:ENin-ENout-exclusion}).
This concludes the proof.\end{proof}

We conclude that the expected node degrees increases at a rate that
is at least \emph{exponential} with the neutralization radius~$\rho$,
making eavesdropper neutralization an effective technique for enhancing
secure connectivity. Such exponential dependence is intimately tied
to the fact that the \emph{fractional area}~$p_{\Theta}=1-e^{-\LM\pi\rho^{2}}$
of the neutralization region~$\Theta$ also approaches $1$ exponentially
as $\rho$ increases.

\subsection{Numerical Results}

Figure~\ref{fig:exclusion-lambdaE-plot} illustrates effectiveness
of eavesdropper neutralization in enhancing secure connectivity. In
particular, it plots the average node degree versus the neutralization
radius~$\rho$, for various values of $\LE$. We observe that $\mathbb{E}\{N_{\mathrm{out}}\}$
increases at a rate that is at least exponential with the neutralization
radius~$\rho$, as expected from (\ref{eq:ENin-ENout-exclusion}).
Furthermore, the analytical lower-bound is in general very close to
the simulated value of $\mathbb{E}\{N_{\mathrm{out}}\}$, and becomes
tight in the following two asymptotic cases:
\begin{itemize}
\item $\rho\rightarrow0$: In this regime, the neutralization region vanishes,
and therefore $\mathbb{E}\{N_{\mathrm{out}}\}\rightarrow\frac{\LM}{\LE}$,
as given in (\ref{eq:ENin-ENout}). Since right side of (\ref{eq:ENin-ENout-exclusion})
also approaches $\frac{\LM}{\LE}$ as $\rho\rightarrow0$, the bound
is asymptotically tight.
\item $\LE\rightarrow\infty$: In this regime, an eavesdropper will occurs
a.s.\ at a distance close to $\rho$ from the origin. As a result,
$\mathbb{E}\{N_{\mathrm{out}}\}$ is approaches the expected number
of legitimate nodes inside the ball~$\mathcal{B}_{0}(\rho)$, i.e.,~$\LM\pi\rho^{2}$.
Since right side of (\ref{eq:ENin-ENout-exclusion}) also approaches
$\LM\pi\rho^{2}$ as $\LE\rightarrow\infty$, the bound is asymptotically
tight.
\end{itemize}

\section{Maximum Secrecy Rate in the Poisson $\is$Graph\label{sec:Secrecy-Capacity}}

In this section, we analyze the MSR between a node and each of its
neighbours, as well as the probability of existence of a non-zero
MSR, and the probability of secrecy outage\textit{\emph{.}}

\subsection{Distribution of the Maximum Secrecy Rate\label{sub:Distribution-Cs}}

Considering the coordinate system depicted in Fig.~\ref{fig:out-degree}
and using (\ref{eq:Cs-ij-pathloss}), the MSR~$\cs i$ between the
node at the origin and its $i$-th closest neighbour, $i\geq1$, can
be written for a given realization of the node positions~$\PiM$
and $\PiE$ as\begin{equation}
\cs i=\left[\log_{2}\left(1+\frac{\PM}{\RM i^{2b}\W}\right)-\log_{2}\left(1+\frac{\PM}{\RE 1^{2b}\W}\right)\right]^{+},\label{eq:Csi}\end{equation}
in bits per complex dimension. For each instantiation of the random
Poisson processes~$\PiM$ and $\PiE$, a realization of the RV~$\cs i$
is obtained. The following theorem provides the distribution of this
RV.

\begin{theorem}The MSR~$\cs i$ between a typical node and its $i$-th
closest neighbour, $i\geq1$, is a RV~whose cumulative distribution
function (CDF)~$F_{\cs i}(\th)$ is given by\begin{multline}
F_{\cs i}(\th)=1-\frac{\ln2(\pi\LM)^{i}}{(i-1)!b}\left(\frac{\PM}{\W}\right)^{\frac{i}{b}}\\
\times\int_{\th}^{+\infty}\frac{2^{z}}{(2^{z}-1)^{1+\frac{i}{b}}}\exp\left(-\pi\LM\left(\frac{\frac{\PM}{\W}}{2^{z}-1}\right)^{\frac{1}{b}}-\pi\LE\left(\frac{\frac{\PM}{\W}}{2^{z-\th}-1}\right)^{\frac{1}{b}}\right)dz,\label{eq:F-Csi}\end{multline}
for $\th\geq0$.\end{theorem}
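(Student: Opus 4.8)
The plan is to compute the CDF $F_{\cs i}(\th) = \mathbb{P}\{\cs i \leq \th\}$ by conditioning on the two relevant distances, $\RM i$ (the distance to the $i$-th closest legitimate node) and $\RE 1$ (the distance to the closest eavesdropper), both of which are readily available from the theory of Poisson processes. First I would recall that for a homogeneous planar Poisson process of density $\lambda$, the squared distance to the $i$-th nearest point is a Gamma (Erlang) RV: $\pi\lambda \RM i^{2} \sim \mathcal{G}(i,1)$, so that $\RM i$ has PDF $f_{\RM i}(r) = \frac{2(\pi\LM)^{i}}{(i-1)!} r^{2i-1} e^{-\pi\LM r^{2}}$, while $\RE 1^{2}$ is exponential with mean $\frac{1}{\pi\LE}$, i.e.\ $f_{\RE 1}(r) = 2\pi\LE r\, e^{-\pi\LE r^{2}}$. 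Since $\PiM$ and $\PiE$ are independent, $\RM i$ and $\RE 1$ are independent.

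Next, starting from the expression (\ref{eq:Csi}) for $\cs i$, I would rewrite the event $\{\cs i \leq \th\}$, for $\th \ge 0$, as an inequality relating $\RM i$ and $\RE 1$. Since the $[\,\cdot\,]^{+}$ only matters at $\th = 0$, for $\th \geq 0$ we have $\cs i \le \th \iff \log_2\bigl(1 + \tfrac{\PM}{\RM i^{2b}\W}\bigr) \le \th + \log_2\bigl(1 + \tfrac{\PM}{\RE 1^{2b}\W}\bigr)$, which rearranges to $1 + \tfrac{\PM/\W}{\RM i^{2b}} \le 2^{\th}\bigl(1 + \tfrac{\PM/\W}{\RE 1^{2b}}\bigr)$, i.e.\ a lower bound on $\RM i^{2b}$ in terms of $\RE 1$. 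Solving, $\{\cs i \le \th\}$ is equivalent to $\RM i \ge \varphi(\RE 1,\th)$ for an explicit increasing function $\varphi$; it is cleaner to integrate in the other order, so I would instead write $\mathbb{P}\{\cs i \le \th\} = \mathbb{E}_{\RM i}\bigl[\mathbb{P}\{\RE 1 \le \chi(\RM i,\th) \mid \RM i\}\bigr]$ or its complement $1 - \mathbb{P}\{\cs i > \th\}$, where $\mathbb{P}\{\RE 1 \le \rho\} = 1 - e^{-\pi\LE\rho^{2}}$, and then substitute the Gamma density of $\RM i$ and integrate.

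The somewhat cleaner route — and I suspect the one the authors take, given the form of (\ref{eq:F-Csi}) — is to change variables directly to $z$, the ``candidate secrecy rate'' at which equality holds. The idea is that for $\cs i > \th$ one needs $\RM i$ small enough that the legitimate SNR term, call it $z = \log_2\bigl(1 + \tfrac{\PM}{\RM i^{2b}\W}\bigr) - \log_2\bigl(1+\tfrac{\PM}{\RE 1^{2b}\W}\bigr)$, exceeds $\th$; conditioning on the value $z > \th$ and on $\RE 1$, the constraint pins down $\RM i$ via $\RM i^{2b} = \tfrac{\PM/\W}{2^{z}\bigl(1 + (\PM/\W)/\RE 1^{2b}\bigr) - 1}$. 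Rather than juggle two conditionings, I would parametrize by letting $u$ denote the eavesdropper's contribution $\log_2(1 + (\PM/\W)/\RE 1^{2b})$, so $\RE 1^{2b} = \tfrac{\PM/\W}{2^{u}-1}$ and $z = v - u$ where $v = \log_2(1 + (\PM/\W)/\RM i^{2b})$; then $1 - F_{\cs i}(\th) = \mathbb{P}\{v - u > \th\}$. Pushing the Gamma density of $\RM i$ and the exponential density of $\RE 1$ through the changes of variables $r \mapsto v$ and $r \mapsto u$ produces densities in $(v,u)$; integrating out one variable (say $u$, over $u \in (0, v-\th)$ after fixing $v$, or more symmetrically introducing $z = v-u$ and integrating out the nuisance variable) leaves a single integral over $z \in [\th,\infty)$. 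The exponential factors $e^{-\pi\LM\RM i^{2}} = \exp\bigl(-\pi\LM((\PM/\W)/(2^{v}-1))^{1/b}\bigr)$ and $e^{-\pi\LE\RE 1^{2}} = \exp\bigl(-\pi\LE((\PM/\W)/(2^{u}-1))^{1/b}\bigr)$, together with the Jacobians, collapse to exactly the integrand in (\ref{eq:F-Csi}) once $v$ is eliminated in favour of $z$ via $v = z$ on the boundary where $u$-integration is trivial — more precisely, the structure $2^{z}/(2^z-1)^{1+i/b}$ is precisely what the Gamma density $r^{2i-1}e^{-\pi\LM r^2}\,dr$ becomes after $r \mapsto z$ when the eavesdropper term has been integrated against its exponential density, contributing the second exponential with $2^{z-\th}-1$ in the denominator.

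\textbf{Main obstacle.} The routine but error-prone part is bookkeeping the two changes of variables and the order of integration so that the nuisance variable integrates out cleanly to leave a \emph{single} integral over $z$; in particular one must verify that the double integral over the joint density of $(\RM i, \RE 1)$, restricted to the region $\{\cs i > \th\}$, factors after the substitution so that one marginal (the eavesdropper's) can be evaluated in closed form — this is where the $2^{z-\th}-1$ denominator and the combinatorial prefactor $\frac{\ln 2\,(\pi\LM)^i}{(i-1)!\,b}(\PM/\W)^{i/b}$ are born, and it is easy to drop a factor of $\ln 2$ (from $d(2^z) = 2^z \ln 2\, dz$) or mismanage the exponent $1 + i/b$. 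A secondary point to check is the boundary behaviour at $\th = 0$ versus $\th > 0$: the $[\,\cdot\,]^+$ in (\ref{eq:Csi}) means the theorem as stated covers $\th \ge 0$, but the atom $\mathbb{P}\{\cs i = 0\}$ should be confirmed consistent with $F_{\cs i}(0)$ obtained by setting $\th = 0$ in (\ref{eq:F-Csi}). Everything else is a direct, if tedious, computation using only standard Poisson-process distance distributions and Fubini.
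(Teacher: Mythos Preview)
Your approach is correct and essentially the same as the paper's: define $\cM i = \log_2\bigl(1+\PM/(\RM i^{2b}\W)\bigr)$ and $\CE = \log_2\bigl(1+\PM/(\RE 1^{2b}\W)\bigr)$, push the Erlang density of $\RM i^{2}$ through the monotone map to obtain $f_{\cM i}$, and then write $1-F_{\cs i}(\th) = \int_\th^\infty f_{\cM i}(z)\,F_{\CE}(z-\th)\,dz$. The ``obstacle'' you flag --- collapsing the double integral to a single one --- is a non-issue in this framing because $F_{\CE}(c)=\exp\bigl(-\pi\LE\bigl((\PM/\W)/(2^{c}-1)\bigr)^{1/b}\bigr)$ is already closed-form (exponential tail of $\RE 1^{2}$), which is exactly the origin of the $2^{z-\th}-1$ factor you identified.
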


\begin{proof}The MSR~$\cs i$ in (\ref{eq:Csi}) can be expressed
as $\cs i=\left[\cM i-\CE\right]^{+}$, where $\cM i=\log_{2}\left(1+\frac{\PM}{\RM i^{2b}\W}\right)$
and $\CE=\log_{2}\left(1+\frac{\PM}{\RE 1^{2b}\W}\right).$ The RV~$\cM i$
is a transformation of the RV~$X_{i}\triangleq\RM i^{2}$ through
the monotonic function~$g(x)=\log_{2}\left(1+\frac{\PM}{x^{b}\W}\right)$,
and thus its PDF is given by the rule~$f_{\CE}(\th)=\left.\frac{1}{|g^{\prime}(x)|}f_{X_{i}}(x)\right|_{x=g^{-1}(\th)}$.
Note that the sequence~$\{X_{i}\}_{i=1}^{\infty}$ represents Poisson
arrivals \emph{on the line} with the constant arrival rate~$\pi\LM$,
as can be easily shown using the mapping theorem~\cite[Section 2.3]{Kin:93}.
Therefore, the RV~$X_{i}$ has an Erlang distribution of order~$i$
with rate~$\pi\LM$, and its PDF is given by\[
f_{X_{i}}(x)=\frac{(\pi\LM)^{i}x^{i-1}e^{-\pi\LM x}}{(i-1)!},\quad x\geq0.\]
Then, applying the above rule, $f_{\cM i}(\th)$ can be shown to be\begin{equation}
f_{\cM i}(\th)=\ln2\frac{(\pi\LM)^{i}}{(i-1)!b}\left(\frac{\PM}{\W}\right)^{\frac{i}{b}}\frac{2^{\th}}{(2^{\th}-1)^{1+\frac{i}{b}}}\exp\left(-\pi\LM\left(\frac{\frac{\PM}{\W}}{2^{\th}-1}\right)^{\frac{1}{b}}\right),\quad\th\geq0.\label{eq:f-CMi}\end{equation}
Replace $\LM$ with $\LE$ and setting $i=1$, we obtain the PDF of
$\CE$ as\begin{equation}
f_{\CE}(\th)=\ln2\frac{\pi\LE}{b}\left(\frac{\PM}{\W}\right)^{\frac{1}{b}}\frac{2^{\th}}{(2^{\th}-1)^{1+\frac{1}{b}}}\exp\left(-\pi\LE\left(\frac{\frac{\PM}{\W}}{2^{\th}-1}\right)^{\frac{1}{b}}\right),\quad\th\geq0.\label{eq:f-CE}\end{equation}
Since the sequences~$\{\RM i\}_{i=1}^{\infty}$ and $\{\RE i\}_{i=1}^{\infty}$
are mutually independent, so are the RVs~$\cM i$ and $\CE$. This
implies that CDF of $\cs i=\left[\cM i-\CE\right]^{+}$ can be obtained
through convolution of $f_{\cM i}(\th)$ and $f_{\CE}(\th)$ as\begin{align}
F_{\cs i}(\th) & =\mathbb{P}\left\{ \left[\cM i-\CE\right]^{+}\leq\th\right\} \nonumber \\
 & =1-\mathbb{P}\{\cM i-\CE>\th\}\nonumber \\
 & =1-\int_{\th}^{\infty}f_{\cM i}(z)*f_{\CE}(-z)\, dz,\label{eq:F-Csi-proof}\end{align}
for $\th\geq0$. Replacing (\ref{eq:f-CMi}) and (\ref{eq:f-CE})
into (\ref{eq:F-Csi-proof}), we obtain after some algebra\begin{multline*}
F_{\cs i}(\th)=1-\frac{\ln2(\pi\LM)^{i}}{(i-1)!b}\left(\frac{\PM}{\W}\right)^{\frac{i}{b}}\\
\times\int_{\th}^{+\infty}\frac{2^{z}}{(2^{z}-1)^{1+\frac{i}{b}}}\exp\left(-\pi\LM\left(\frac{\frac{\PM}{\W}}{2^{z}-1}\right)^{\frac{1}{b}}-\pi\LE\left(\frac{\frac{\PM}{\W}}{2^{z-\th}-1}\right)^{\frac{1}{b}}\right)dz,\end{multline*}
for $\th\geq0$. This is the result in (\ref{eq:F-Csi}) and the proof
is concluded.\end{proof}

\subsection{Existence and Outage of the Maximum Secrecy Rate}

Based on the results of Section~\ref{sub:Distribution-Cs}, we can
now obtain the probability of existence of a non-zero MSR, and the
probability of secrecy outage.\textit{\emph{ The following corollary
provides such probabilities.}}

\begin{corollary}Considering the link between a typical node and
its $i$-th closest neighbour, $i\geq1$, the probability of \emph{existence}
of a non-zero MSR, $p_{\mathrm{exist},i}=\mathbb{P}\{\cs i>0\}$,
is given by\begin{equation}
p_{\mathrm{exist},i}=\left(\frac{\LM}{\LM+\LE}\right)^{i}.\label{eq:Pexist}\end{equation}
and the probability of an \emph{outage} in MSR, $p_{\mathrm{outage},i}(\th)=\mathbb{P}\{\cs i<\th\}$
for $\th>0$, is given by\begin{multline}
p_{\mathrm{outage},i}(\th)=1-\frac{\ln2(\pi\LM)^{i}}{(i-1)!b}\left(\frac{\PM}{\W}\right)^{\frac{i}{b}}\\
\times\int_{\th}^{+\infty}\frac{2^{z}}{(2^{z}-1)^{1+\frac{i}{b}}}\exp\left(-\pi\LM\left(\frac{\frac{\PM}{\W}}{2^{z}-1}\right)^{\frac{1}{b}}-\pi\LE\left(\frac{\frac{\PM}{\W}}{2^{z-\th}-1}\right)^{\frac{1}{b}}\right)dz\label{eq:Poutage}\end{multline}
\end{corollary}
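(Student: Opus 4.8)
The plan is to prove the two parts separately. The outage probability will come essentially for free from the CDF in~(\ref{eq:F-Csi}); the existence probability is cleanest to obtain by tying the event $\{\cs i>0\}$ to the out-degree already characterized in Theorem~\ref{thm:p-Nout}, and then double-checking it against the CDF.

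First, for $p_{\mathrm{outage},i}(\th)=\mathbb{P}\{\cs i<\th\}$ with $\th>0$: from the explicit densities~(\ref{eq:f-CMi}) and~(\ref{eq:f-CE}), the RVs $\cM i$ and $\CE$ are absolutely continuous on $(0,\infty)$, hence so is $\cs i=[\cM i-\CE]^{+}$ on $(0,\infty)$. In particular $\cs i$ carries no atom at any strictly positive point, so $\mathbb{P}\{\cs i<\th\}=\mathbb{P}\{\cs i\leq\th\}=F_{\cs i}(\th)$, and~(\ref{eq:Poutage}) is just~(\ref{eq:F-Csi}).

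Next, for $p_{\mathrm{exist},i}=\mathbb{P}\{\cs i>0\}$: from~(\ref{eq:Csi}), $\cs i>0$ iff $\cM i>\CE$, and since $r\mapsto\PM/(r^{2b}\W)$ is strictly decreasing on $r>0$ this is equivalent to $\RM i<\RE 1$ (the tie $\RM i=\RE 1$ having probability zero). Because $\RM 1\leq\RM 2\leq\ldots$ are the ordered distances, $\{\RM i<\RE 1\}$ is exactly the event that at least $i$ legitimate nodes fall inside the ball $\mathcal{B}_{0}(\RE 1)$, i.e.\ $\{N_{\mathrm{out}}\geq i\}$ for the out-degree $N_{\mathrm{out}}$ of the typical node in Theorem~\ref{thm:p-Nout}. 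That theorem gives $p_{N_{\mathrm{out}}}(n)=p^{n}(1-p)$ with $p=\LM/(\LM+\LE)$, so summing the geometric tail yields $p_{\mathrm{exist},i}=\mathbb{P}\{N_{\mathrm{out}}\geq i\}=\sum_{n\geq i}p^{n}(1-p)=p^{i}$, which is~(\ref{eq:Pexist}). As a consistency check one can recover~(\ref{eq:Pexist}) directly from~(\ref{eq:F-Csi}) via $p_{\mathrm{exist},i}=1-F_{\cs i}(0)$ (using $F_{\cs i}(0)=\mathbb{P}\{\cs i\leq0\}=\mathbb{P}\{\cs i=0\}$): setting $\th=0$ merges the two exponentials into $\exp(-\pi(\LM+\LE)((\PM/\W)/(2^{z}-1))^{1/b})$, and the substitution $v=((\PM/\W)/(2^{z}-1))^{1/b}$ collapses the integral to $\frac{(\pi\LM)^{i}}{(i-1)!}\int_{0}^{\infty}v^{i-1}e^{-\pi(\LM+\LE)v}\,dv=(\LM/(\LM+\LE))^{i}$.

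There is no real obstacle here; the corollary is short. The only points to be careful about are the measure-theoretic remarks -- that $\cs i$ has no atom at a positive $\th$ (so $<$ and $\leq$ agree there) and that the event $\RM i=\RE 1$ is null -- together with the bookkeeping that $F_{\cs i}(0)$ equals $\mathbb{P}\{\cs i\leq0\}=\mathbb{P}\{\cs i=0\}$, so that the existence probability is $1-F_{\cs i}(0)$ and not $1-F_{\cs i}(0^{-})$.
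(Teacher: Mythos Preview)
Your proof is correct and follows essentially the same route as the paper: the outage probability is read off directly from the CDF~(\ref{eq:F-Csi}), and the existence probability is obtained by identifying $\{\cs i>0\}\Leftrightarrow\{\cM i>\CE\}\Leftrightarrow\{N_{\mathrm{out}}\geq i\}$ and summing the geometric tail from Theorem~\ref{thm:p-Nout}. The additional measure-theoretic remarks (no atom at positive $\th$, null tie event) and the consistency check via the substitution in~(\ref{eq:F-Csi}) at $\th=0$ go beyond what the paper spells out but are welcome refinements rather than a different approach.
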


\begin{proof}To obtain (\ref{eq:Pexist}), we note that the event~$\{\cM i>\CE\}$
is equivalent to $\{N_{\mathrm{out}}\geq i\}$. Thus, we use (\ref{eq:p-Nout})
to write\begin{align*}
p_{\mathrm{exist},i} & =\mathbb{P}\{\cM i>\CE\}\\
 & =\sum_{n=i}^{\infty}\left(\frac{\LM}{\LM+\LE}\right)^{n}\left(\frac{\LE}{\LM+\LE}\right)\\
 & =\left(\frac{\LM}{\LM+\LE}\right)^{i}.\end{align*}
The expression for $p_{\mathrm{outage}}(\th)$ follows directly from
(\ref{eq:F-Csi}).\end{proof}

\subsection{Numerical Results}

Figure~\ref{fig:pexist-lambda} shows the probability~$p_{\mathrm{exist},i}$
of existence of a non-zero MSR from a typical node to its $i$-th
neighbour, as a function of the eavesdropper density~$\LE$. It can
be seen that the existence of a non-zero MSR~$\cs i$ to any neighbour~$i$
becomes more likely as the value of $\LE$ increases. Furthermore,
since $\RM 1\leq\RM 2\leq\ldots$, as the value of $i$ increases,
the $i$-th neighbour becomes further away, and the corresponding
$p_{\mathrm{exist},i}$ decreases.

Figure~\ref{fig:cdf-csi} shows the probability~$p_{\mathrm{outage},i}$
of secrecy outage of a typical node transmitting to its $i$-th neighbour,
as a function of the desired secrecy rate~$\th$. As expected, a
secrecy outage become more likely as we increase the target secrecy
rate~$\th$ set by the transmitter.

\section{The Case of Colluding Eavesdroppers\label{sec:Secrecy-Colluding}}

We now aim to study the effect of colluding eavesdroppers on the secrecy
of communications. In order to focus on the effect of eavesdropper
collusion on the MSR of the legitimate link, we first consider in
Sections~\ref{sec:Cs-Colluding} to \ref{sub:Comparison-Colluding}
a \emph{single} legitimate link with deterministic length~$\rM$
in the presence of a random process~$\PiE$. Such simplification
eliminates the randomness associated with the position of the legitimate
nodes. We then consider both random processes~$\PiM$ and $\PiE$
in Section~\ref{sub:Collusion-SGraph}, and characterize the average
node degree in the presence of eavesdropper collusion.

\subsection{Maximum Secrecy Rate of a Single Link\label{sec:Cs-Colluding}}

We consider the scenario depicted in Fig.~\ref{fig:collusion}, where
a legitimate link is composed of two nodes: one transmitter located
at the origin (Alice), and one receiver located at a deterministic
distance~$\rM$ from the origin (Bob). The eavesdroppers have ability
to \emph{collude}, i.e.,~they can exchange and combine the information
received by all the eavesdroppers to decode the secret message. The
eavesdroppers are scattered in the two-dimensional plane according
to an \emph{arbitrary} spatial process~$\PiE$, and their distances
to the origin are denoted by $\{\RE i\}_{i=1}^{\infty}$, where $\RE 1\leq\RE 2\leq\ldots$.

Since the colluding eavesdroppers may gather the received information
and send it to a central processor, the scenario depicted in Fig.~\ref{fig:collusion}
can be viewed as a SIMO Gaussian wiretap channel depicted in Fig.~\ref{fig:SIMO}.
Here, the input is the signal transmitted by Alice, and the output
of the wiretap channel is the collection of signals received by all
the eavesdroppers. We consider that Alice sends a symbol~$x\in\mathbb{C}$
with power constraint~$\mathbb{E}\{|x|^{2}\}\leq\PM$. The vectors~$\mathbf{h}_{\M}\in\mathbb{C}^{m}$
and $\mathbf{h}_{\mathrm{e}}\in\mathbb{C}^{n}$ represent, respectively,
the gains of the legitimate and eavesdropper channels.%
\footnote{We use boldface letters to denote vectors and matrices.%
} The noise is represented by the vectors~$\mathbf{w}_{\M}\in\mathbb{C}^{m}$
and $\mathbf{w}_{\mathrm{e}}\in\mathbb{C}^{n}$, which are considered
to be mutually independent Gaussian RVs with zero mean and non-singular
covariance matrices~$\boldsymbol{\Sigma}_{\M}$ and $\boldsymbol{\Sigma}_{\mathrm{e}}$,
respectively. The system of Fig.~\ref{fig:SIMO} can then be summarized
as\begin{align}
\mathbf{y}_{\M} & =\mathbf{h}_{\M}x+\mathbf{w}_{\M}\label{eq:yM-simo}\\
\mathbf{y}_{\mathrm{e}} & =\mathbf{h}_{\mathrm{e}}x+\mathbf{w}_{\mathrm{e}}.\label{eq:yE-simo}\end{align}
The scenario of interest can be obtained from the SIMO Gaussian wiretap
channel in Fig.~\ref{fig:SIMO} by appropriate choice of the parameters~$\mathbf{h}_{\M}$,
$\mathbf{h}_{\mathrm{e}}$, $\boldsymbol{\Sigma}_{\M}$, and $\boldsymbol{\Sigma}_{\mathrm{e}}$.

In this section, we determine the MSR of the legitimate link, in the
presence of colluding eavesdroppers scattered in the plane according
to an arbitrary spatial process. The result is given in the following
theorem.

\begin{theorem}\label{thm:Cs-Colluding}For a given realization of
the arbitrary eavesdropper process~$\PiE$, the MSR of the legitimate
link is given by\begin{equation}
\Cs=\left[\log_{2}\left(1+\frac{\PM\cdot g(\rM)}{\WM}\right)-\log_{2}\left(1+\frac{\PM\sum_{i=1}^{\infty}g(\RE i)}{\WE}\right)\right]^{+},\label{eq:Cs-collusion}\end{equation}
where $\PM\sum_{i=1}^{\infty}g(\RE i)\triangleq\PE$ is the aggregate
power received by all the eavesdroppers.

\end{theorem}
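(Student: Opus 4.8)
The plan is to recognize the single-link configuration of Fig.~\ref{fig:collusion} as the special case of the SIMO Gaussian wiretap channel (\ref{eq:yM-simo})--(\ref{eq:yE-simo}) in which the transmitter and the legitimate receiver each have one antenna, and then to collapse the colluding eavesdroppers' vector observation into a scalar sufficient statistic. Once that is done the channel becomes an ordinary scalar Gaussian wiretap channel whose MSR is already recorded in (\ref{eq:Cs-Gaussian}), and (\ref{eq:Cs-collusion}) follows by reading off the two signal-to-noise ratios.

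First I would fix a realization of $\PiE$, so that the distances $\{\RE i\}$ and all channel gains are deterministic, and instantiate (\ref{eq:yM-simo})--(\ref{eq:yE-simo}): $\mathbf{h}_{\M}=h_{\M}$ is a scalar with $|h_{\M}|^{2}=g(\rM)$; the colluding eavesdroppers act as a single receive array with gain vector $\mathbf{h}_{\mathrm{e}}=(h_{\mathrm{e},1},h_{\mathrm{e},2},\dots)$, where $|h_{\mathrm{e},i}|^{2}=g(\RE i)$ as in (\ref{eq:Pr}); and the noises are white, $\boldsymbol{\Sigma}_{\M}=\WM$ and $\boldsymbol{\Sigma}_{\mathrm{e}}=\WE\mathbf{I}$. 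Next I would form the maximum-ratio combiner $\widetilde{y}_{\mathrm{e}}=\mathbf{h}_{\mathrm{e}}^{\dagger}\mathbf{y}_{\mathrm{e}}$ and argue it is a sufficient statistic for $x$: since $\mathbf{w}_{\mathrm{e}}$ is Gaussian and $\mathbf{h}_{\mathrm{e}}$ is known to the (colluding) eavesdroppers, $\mathbf{y}_{\mathrm{e}}$ decomposes into $\widetilde{y}_{\mathrm{e}}$ and a component orthogonal to $\mathbf{h}_{\mathrm{e}}$ that is independent of $\widetilde{y}_{\mathrm{e}}$ and carries no information about $x$; hence $s\to\widetilde{y}_{\mathrm{e}}^{\,n}\to\mathbf{y}_{\mathrm{e}}^{\,n}$ is a Markov chain, and since $\widetilde{y}_{\mathrm{e}}^{\,n}$ is a deterministic function of $\mathbf{y}_{\mathrm{e}}^{\,n}$ we obtain $H(s\,|\,\mathbf{y}_{\mathrm{e}}^{\,n})=H(s\,|\,\widetilde{y}_{\mathrm{e}}^{\,n})$ for every encoder. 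Consequently the equivocation condition in Definition~\ref{def:strong-secrecy} is unchanged when $\mathbf{y}_{\mathrm{e}}$ is replaced by $\widetilde{y}_{\mathrm{e}}$, so the MSR of the vector-eavesdropper channel equals the MSR of the scalar channel $x\to\widetilde{y}_{\mathrm{e}}$.

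The scalar channel $x\to\widetilde{y}_{\mathrm{e}}$ has received SNR $\frac{\PM}{\WE}\mathbf{h}_{\mathrm{e}}^{\dagger}\mathbf{h}_{\mathrm{e}}=\frac{\PM}{\WE}\sum_{i}g(\RE i)$, while the legitimate channel has SNR $\frac{\PM}{\WM}g(\rM)$. Both channels are scalar Gaussian with the same input, so the pair is a stochastically degraded wiretap channel in one of the two directions (whichever SNR is larger), Gaussian inputs are optimal, and the MSR is $[\,C_{\M}-C_{\mathrm{e}}\,]^{+}$ exactly as in (\ref{eq:Cs-Gaussian}) and \cite{LeuHel:78,BloBarRodMcl:08}. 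Substituting the two SNRs gives (\ref{eq:Cs-collusion}), with $\PM\sum_{i}g(\RE i)=\PE$. When $\sum_{i}g(\RE i)$ diverges the combiner is not directly defined; I would then run the argument with the $k$ nearest eavesdroppers and let $k\to\infty$, using that the MSR is nonincreasing in the eavesdropper set, so the limit is $0$, which is the value the right-hand side of (\ref{eq:Cs-collusion}) prescribes under the convention $\log_{2}(1+\infty)=\infty$.

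The main obstacle is the middle step: one must check that the sufficient-statistic reduction preserves the full equivocation $H(s\,|\,\cdot^{\,n})$ under the \emph{strong}-secrecy metric of Definition~\ref{def:strong-secrecy}, not merely the single-letter mutual information, and that the reduced channel is literally the scalar Gaussian wiretap channel to which (\ref{eq:Cs-Gaussian}) applies. The SNR computation and the passage to an infinite colluding set are then routine.
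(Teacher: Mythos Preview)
Your proposal is correct and follows essentially the same route as the paper: reduce the SIMO wiretap channel to a scalar one via the sufficient statistic $\mathbf{h}_{\mathrm{e}}^{\dagger}\boldsymbol{\Sigma}_{\mathrm{e}}^{-1}\mathbf{y}_{\mathrm{e}}$ (which, since $\boldsymbol{\Sigma}_{\mathrm{e}}=\WE\mathbf{I}$, is just a scalar multiple of your MRC output), then invoke the scalar Gaussian wiretap result of \cite{LeuHel:78} and substitute the channel parameters. Your additional care with the strong-secrecy equivocation and the divergent-series limit goes slightly beyond what the paper spells out, but the core argument is the same.
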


\begin{proof}For a given realization of the channels~$\mathbf{h}_{\M}$
and $\mathbf{h}_{\E}$, it can be shown~\cite{TseVis:05} that $\widetilde{y}_{\M}=\mathbf{h}_{\M}^{\dagger}\boldsymbol{\Sigma}_{\M}^{-1}\mathbf{y}_{\M}$
and $\widetilde{y}_{\E}=\mathbf{h}_{\E}^{\dagger}\boldsymbol{\Sigma}_{\E}^{-1}\mathbf{y}_{\E}$
are sufficient statistics to estimate $x$ from the corresponding
observations~$\mathbf{y}_{\M}$ and $\mathbf{y}_{\E}$.%
\footnote{We use $\dagger$ to denote the conjugate transpose operator.%
} Since sufficient statistics preserve mutual information~\cite{CovTho:06},
for the purpose of determining the MSR the vector channels in (\ref{eq:yM-simo})
and (\ref{eq:yE-simo}) can equivalently be written in a (complex)
scalar form corresponding to the Gaussian wiretap channel introduced
in \cite{LeuHel:78}. Then, the MSR~$\Cs$ of the legitimate channel
for a given realization of the channels~$\mathbf{h}_{\M}$ and $\mathbf{h}_{\mathrm{e}}$
is given by\begin{align}
\Cs & =\left[\log_{2}\left(\frac{1+\mathbf{h}_{\M}^{\dagger}\Sigma_{\M}^{-1}\mathbf{h}_{\M}\PM}{1+\mathbf{h}_{\mathrm{e}}^{\dagger}\Sigma_{\mathrm{e}}^{-1}\mathbf{h}_{\mathrm{e}}\PM}\right)\right]^{+}.\label{eq:Cs-SIMO}\end{align}
Setting $\mathbf{h}_{\M}=\sqrt{g(\rM)}$, $\mathbf{h}_{\mathrm{e}}=\left[\sqrt{g(\RE 1)},\sqrt{g(\RE 2)},\cdots\right]^{T},$
$\boldsymbol{\Sigma}_{\M}=\WM\mathbf{I}_{1}$, and $\boldsymbol{\Sigma}_{\mathrm{e}}=\WE\mathbf{I}_{\infty}$,
where $\WM$ and $\WE$ are the noise powers of the legitimate and
eavesdropper receivers, respectively, and $\mathbf{I}_{n}$ is the
$n\times n$ identity matrix, (\ref{eq:Cs-SIMO}) reduces to (\ref{eq:Cs-collusion}).
This concludes the proof.\end{proof}

\subsection{Distribution of the Maximum Secrecy Rate of a Single Link\label{sec:Distribution-Cs-Colluding}}

Theorem~\ref{thm:Cs-Colluding} is valid for a given realization
of the spatial process~$\PiE$. In general, the MSR~$\Cs$ of the
legitimate link in (\ref{eq:Cs-collusion}) is a RV, since it is a
function the random eavesdropper distances $\{\RE i\}_{i=1}^{\infty}$.
In what follows, we analyze the case where $\PiE$ is a homogeneous
Poisson process on the two-dimensional plane with density~$\LE$,
and the channel gain is of the form~$g(r)=\frac{1}{r^{2b}}$ with
$b>1$. The following theorem characterizes the distribution of the
MSR in this scenario.

\begin{theorem}\label{thm:Distribution-Cs-Colluding}If $\PiE$ is
a  Poisson process with density~$\LE$ and $g(r)=\frac{1}{r^{2b}}$,
$b>1$, the MSR~$\Cs$ of the legitimate link is a RV whose CDF~$F_{\Cs}(\th)$
is given by\begin{equation}
F_{\Cs}(\th)=\begin{cases}
0, & \th<0,\\
1-F_{\PET}\left(\frac{\left(1+\frac{\PM}{\rM^{2b}\WM}\right)2^{-\th}-1}{(\pi\LE\mathcal{C}_{1/b}^{-1})^{b}\frac{\PM}{\WE}}\right), & 0\leq\th<\CM,\\
1, & \th\geq\CM,\end{cases}\label{eq:F-Cs-colluding}\end{equation}
where $\CM=\log_{2}\left(1+\frac{\PM}{\rM^{2b}\WM}\right)$ is the
capacity of the legitimate channel; $\mathcal{C}_{\alpha}$ is defined
as\begin{equation}
\mathcal{C}_{\alpha}\triangleq\frac{1-\alpha}{\Gamma(2-\alpha)\cos\left(\frac{\pi\alpha}{2}\right)}\label{eq:C-x}\end{equation}
with $\Gamma(\cdot)$ denoting the gamma function; and $F_{\PET}(\cdot)$
is the CDF of a skewed stable RV~$\PET$, with parameters%
\footnote{We use $\mathcal{S}(\alpha,\beta,\gamma)$ to denote the distribution
of a real stable RV with characteristic exponent~$\mbox{$\alpha\in(0,2]$}$,
skewness~$\mbox{$\beta\in[-1,1]$}$, and dispersion~$\mbox{$\gamma\in[0,\infty)$}$.
The corresponding characteristic function is~\cite{SamTaq:94}\begin{equation}
\phi(w)=\begin{cases}
\exp\left(-\gamma|w|^{\alpha}\left[1-j\beta\mathrm{\, sign}(w)\tan\left(\frac{\pi\alpha}{2}\right)\right]\right), & \alpha\neq1,\\
\exp\left(-\gamma|w|\left[1+j\frac{2}{\pi}\beta\mathrm{\, sign}(w)\ln|w|\right]\right), & \alpha=1.\end{cases}\label{eq:stable-CF}\end{equation}
}\begin{equation}
\PET\sim\mathcal{S}\left(\alpha=\frac{1}{b},\:\beta=1,\:\gamma=1\right).\label{eq:Pe-tilde-stable}\end{equation}
\end{theorem}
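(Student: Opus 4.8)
The plan is to combine Theorem~\ref{thm:Cs-Colluding} with the classical fact that the aggregate interference produced by a Poisson field under power-law path loss is a one-sided stable random variable~\cite{SamTaq:94,WinPinShe:J09}. By Theorem~\ref{thm:Cs-Colluding}, for $g(r)=r^{-2b}$ we have
\[
\Cs=\left[\CM-\log_{2}\left(1+\frac{\PM}{\WE}Y\right)\right]^{+},
\qquad Y\triangleq\sum_{i=1}^{\infty}g(\RE i)=\sum_{i=1}^{\infty}\RE i^{-2b},
\]
where $\CM=\log_{2}(1+\PM/(\rM^{2b}\WM))$ is deterministic because $\rM$ is. The first step is to obtain the law of $Y$. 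By the mapping theorem~\cite[Section 2.3]{Kin:93} the squared eavesdropper distances $\{\RE i^{2}\}$ form a homogeneous Poisson process on $\mathbb{R}^{+}$ with rate $\pi\LE$, so $Y=\sum_{i}(\RE i^{2})^{-b}$ is a shot-noise functional of a one-dimensional Poisson process. Campbell's theorem (the Laplace functional of a Poisson process~\cite{Kin:93}) then gives
\[
\mathbb{E}\{e^{-sY}\}
=\exp\left(-\pi\LE\int_{0}^{\infty}\left(1-e^{-st^{-b}}\right)dt\right)
=\exp\left(-\pi\LE\,\Gamma(1-1/b)\,s^{1/b}\right),\qquad s\geq0,
\]
the last equality following from the substitution $u=t^{-b}$ together with the identity $\int_{0}^{\infty}(1-e^{-su})u^{-1-\alpha}du=\Gamma(1-\alpha)s^{\alpha}/\alpha$. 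The hypothesis $b>1$ is used precisely here: it makes $1/b<1$ (so the integral converges at $t\to0$) and guarantees that $\sum_{i}\RE i^{-2b}<\infty$ almost surely; hence $Y$ is a positive $(1/b)$-stable random variable.

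The second step is to identify $Y$ with a scaled copy of $\PET$. Since $\PET\sim\mathcal{S}(1/b,1,1)$ is totally skewed to the right with exponent $1/b<1$, it is supported on $[0,\infty)$, and, by analytic continuation of the characteristic function in~(\ref{eq:stable-CF}) to the imaginary axis, its Laplace transform is $\mathbb{E}\{e^{-s\PET}\}=\exp\left(-s^{1/b}/\cos(\pi/(2b))\right)$. Writing $Y\stackrel{d}{=}c\,\PET$ and matching Laplace transforms gives $c^{1/b}=\pi\LE\,\Gamma(1-1/b)\cos(\pi/(2b))$. To rewrite this through the constant in~(\ref{eq:C-x}), I would use $\Gamma(2-1/b)=(1-1/b)\Gamma(1-1/b)$, which reduces $\mathcal{C}_{1/b}$ to $\left[\Gamma(1-1/b)\cos(\pi/(2b))\right]^{-1}$; therefore $c^{1/b}=\pi\LE\,\mathcal{C}_{1/b}^{-1}$, i.e.\ $Y\stackrel{d}{=}(\pi\LE\,\mathcal{C}_{1/b}^{-1})^{b}\,\PET$.

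The last step is a direct evaluation of $F_{\Cs}(\th)$. For $\th<0$ it is $0$ since $\Cs\geq0$; for $\th\geq\CM$ it is $1$, since $\log_{2}(1+\tfrac{\PM}{\WE}Y)\geq0$ forces $\Cs\leq\CM$. For $0\leq\th<\CM$, the nonnegativity of $\th$ renders the outer $[\,\cdot\,]^{+}$ inactive on the event of interest, so $\{\Cs\leq\th\}=\{\log_{2}(1+\tfrac{\PM}{\WE}Y)\geq\CM-\th\}=\{Y\geq\tfrac{\WE}{\PM}(2^{\CM}2^{-\th}-1)\}$; substituting $2^{\CM}=1+\PM/(\rM^{2b}\WM)$, dividing by $(\pi\LE\,\mathcal{C}_{1/b}^{-1})^{b}$, and writing the resulting tail probability in terms of $F_{\PET}$ produces exactly~(\ref{eq:F-Cs-colluding}). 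I expect the only genuine obstacle to be the constant bookkeeping in the second step — reconciling the scaling $c$ with the particular parametrization $\mathcal{S}(\alpha,\beta,\gamma)$ fixed in the footnote so that the normalization collapses to the clean quantity $\mathcal{C}_{1/b}$; the probabilistic ingredients (Poisson mapping, Campbell's theorem, and the $\alpha$-stable identification of power-law shot noise) are standard.
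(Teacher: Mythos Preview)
Your proof is correct and follows essentially the same route as the paper: identify the aggregate eavesdropper power as a one-sided $(1/b)$-stable random variable, normalize to $\PET\sim\mathcal{S}(1/b,1,1)$, and then invert the monotone map $\Cs\mapsto\PE$ to read off the CDF. The only difference is that the paper imports the stable law of $\PE$ (with dispersion $\gamma=\pi\LE\mathcal{C}_{1/b}^{-1}\PM^{1/b}$) directly from~\cite{WinPinShe:J09} and then rescales by $\gamma^{-b}$, whereas you derive it from scratch via Campbell's theorem and a Laplace-transform match; your explicit check that $\Gamma(2-1/b)=(1-1/b)\Gamma(1-1/b)$ collapses the constant to $\mathcal{C}_{1/b}^{-1}$ is exactly the bookkeeping the cited reference hides.
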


\begin{proof}For $g(r)=\frac{1}{r^{2b}}$, the MSR~$\Cs$ of the
legitimate channel in (\ref{eq:Cs-collusion}) is a function of the
total power received by the eavesdroppers, $\PE=\sum_{i=1}^{\infty}\frac{\PM}{\RE i^{2b}}$.
If $\PiE$ is a  Poisson process, the characteristic function of $\PE$
can be written as~\cite{WinPinShe:J09}\begin{equation}
\PE\sim\mathcal{S}\left(\alpha=\frac{1}{b},\:\beta=1,\:\gamma=\pi\LE\mathcal{C}_{1/b}^{-1}\PM^{1/b}\right),\label{eq:Pe-stable}\end{equation}
for $b>1$. Defining the normalized stable RV~$\PET\triangleq\PE\gamma^{-b}$
with $\gamma=\pi\LE\mathcal{C}_{1/b}^{-1}\PM^{1/b}$, we have that
$\PET\sim\mathcal{S}\left(\frac{1}{b},1,1\right)$ from the scaling
property~\cite{SamTaq:94}. In general, the CDF\ $F_{\PET}(\cdot)$
cannot be expressed in closed form except in the case where $\mbox{$b=2$}$,
which is analyzed in Section~\ref{sec:Case-Study}. However, the
characteristic function of $\PET$ has the simple form of $\phi_{\PET}(w)=\exp\left(-|w|^{1/b}\left[1-j\mathrm{\, sign}(w)\tan\left(\frac{\pi}{2b}\right)\right]\right)$,
and thus $F_{\PET}(\cdot)$ can always be expressed in the integral
form for numerical evaluation.

Using (\ref{eq:Cs-collusion}), we can now express $F_{\Cs}(\th)$
in terms of the CDF of $\PET$, for $0\leq\th<\CM$, as\begin{align*}
F_{\Cs}(\th) & =\mathbb{P}\{\Cs\leq\th\}\\
 & =\mathbb{P}\left\{ \log_{2}\left(1+\frac{\PM}{\rM^{2b}\WM}\right)-\log_{2}\left(1+\frac{\PE}{\WE}\right)\leq\th\right\} \\
 & =1-\mathbb{P}\left\{ \PE\leq\WE\left[\left(1+\frac{\PM}{\rM^{2b}\WM}\right)2^{-\th}-1\right]\right\} \\
 & =1-F_{\PET}\left(\frac{\left(1+\frac{\PM}{\rM^{2b}\WM}\right)2^{-\th}-1}{(\pi\LE\mathcal{C}_{1/b}^{-1})^{b}\frac{\PM}{\WE}}\right).\end{align*}
In addition, $F_{\Cs}(\th)=0$ for $\th<0$ and $F_{\Cs}(\th)=1$
for $\th\geq\CM$, since the RV~$\Cs$ in (\ref{eq:Cs-collusion})
satisfies $0\leq\Cs\leq\CM$, i.e., the MSR of the legitimate link
in the presence of colluding eavesdroppers is a positive quantity
which cannot be greater than the MSR of the legitimate link \emph{in
the absence of eavesdroppers}. This is the result in (\ref{eq:F-Cs-non-colluding})
and the proof is complete.\end{proof}

\subsection{Existence and Outage of the Maximum Secrecy Rate of a Single Link\label{sec:Exist-Out-Colluding}}

Based on the results of Section~\ref{sec:Distribution-Cs-Colluding},
we can now obtain the probability of existence of a non-zero MSR,
and the probability of secrecy outage for a single legitimate link
in the presence of colluding eavesdroppers. \textit{\emph{The following
corollary provides such probabilities.}}

\begin{corollary}If $\PiE$ is a Poisson process with density~$\LE$
and $g(r)=\frac{1}{r^{2b}}$, $b>1$, the probability of \emph{existence}
of a non-zero MSR in the legitimate link, $p_{\mathrm{exist}}=\mathbb{P}\{\Cs>0\}$,
is given by\begin{equation}
p_{\mathrm{exist}}=F_{\PET}\left(\frac{\WE}{(\pi\LE\rM^{2}\mathcal{C}_{1/b}^{-1})^{b}\WM}\right),\label{eq:Pexist-collusion}\end{equation}
and the probability of an \emph{outage} in the MSR of the legitimate
link, $p_{\mathrm{outage}}(\th)=\mathbb{P}\{\Cs<\th\}$ for $\th>0$,
is given by\begin{equation}
p_{\mathrm{outage}}(\th)=\begin{cases}
1-F_{\PET}\left(\frac{\left(1+\frac{\PM}{\rM^{2b}\WM}\right)2^{-\th}-1}{(\pi\LE\mathcal{C}_{1/b}^{-1})^{b}\frac{\PM}{\WE}}\right), & 0<\th<\CM,\\
1, & \th\geq\CM,\end{cases}\label{eq:Poutage-collusion}\end{equation}
where $\CM=\log_{2}\left(1+\frac{\PM}{\rM^{2b}\WM}\right)$ is the
capacity of the legitimate channel; and $F_{\PET}(\cdot)$ is the
CDF of the normalized stable RV~$\PET$, with parameters given in
(\ref{eq:Pe-tilde-stable}).\end{corollary}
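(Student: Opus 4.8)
The plan is to read both probabilities straight off Theorem~\ref{thm:Distribution-Cs-Colluding}, which already gives the complete CDF~$F_{\Cs}(\th)$ in~(\ref{eq:F-Cs-colluding}). The only additional ingredients are two elementary remarks about the law of~$\Cs$: (i)~$0\le\Cs\le\CM$ almost surely, so whatever probability mass sits at the lower end of the range sits precisely at~$0$; and (ii)~for a homogeneous Poisson process of positive density~$\LE$ the aggregate eavesdropper power $\PE=\PM\sum_{i\ge1}g(\RE i)$ is strictly positive almost surely (indeed $\PiE$ has infinitely many points a.s.), so $\mathbb{P}\{\Cs=\CM\}=\mathbb{P}\{\PE=0\}=0$; moreover $F_{\PET}$ is the CDF of an absolutely continuous (stable) law, hence continuous.

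For the existence probability, using (i) and right-continuity of the CDF, I would write $p_{\mathrm{exist}}=\mathbb{P}\{\Cs>0\}=1-\mathbb{P}\{\Cs=0\}=1-F_{\Cs}(0)$. Since $\CM>0$, the point $\th=0$ lies in the middle branch of~(\ref{eq:F-Cs-colluding}); there the numerator $\bigl(1+\frac{\PM}{\rM^{2b}\WM}\bigr)2^{0}-1$ collapses to $\frac{\PM}{\rM^{2b}\WM}$, and cancelling $\PM$ against $\PM$ in the denominator and folding $\rM^{2b}$ into the bracketed term turns the argument of $F_{\PET}$ into $\frac{\WE}{(\pi\LE\rM^{2}\mathcal{C}_{1/b}^{-1})^{b}\WM}$, which is exactly~(\ref{eq:Pexist-collusion}). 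Equivalently one may argue from scratch: $\Cs>0$ iff $\PM g(\rM)/\WM>\PE/\WE$, i.e.\ $\PE<\WE\PM g(\rM)/\WM$; dividing by $\gamma^{b}=(\pi\LE\mathcal{C}_{1/b}^{-1})^{b}\PM$ (with $g(r)=r^{-2b}$) and using the scaling $\PET=\PE\gamma^{-b}\sim\mathcal{S}(1/b,1,1)$ from~(\ref{eq:Pe-stable})--(\ref{eq:Pe-tilde-stable}) gives the same expression.

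For the outage probability I would split on~$\th$. On $0<\th<\CM$, $F_{\Cs}$ coincides with a continuous function of~$\th$ built from the continuous~$F_{\PET}$, so $\Cs$ has no atom in this range and $p_{\mathrm{outage}}(\th)=\mathbb{P}\{\Cs<\th\}=\mathbb{P}\{\Cs\le\th\}=F_{\Cs}(\th)$, i.e.\ the middle branch of~(\ref{eq:Poutage-collusion}). For $\th\ge\CM$, remark~(ii) gives $\Cs\le\CM\le\th$ a.s.\ together with $\mathbb{P}\{\Cs=\CM\}=0$, so $\mathbb{P}\{\Cs<\th\}=1$; assembling the two cases yields~(\ref{eq:Poutage-collusion}). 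The one place that genuinely needs care---and hence the step I would flag---is this bookkeeping of atoms: one must verify that $\Cs$ does carry \emph{positive} mass at~$0$ (so that $p_{\mathrm{exist}}=1-F_{\Cs}(0)$ is strictly below~$1$, reflecting the event $\{\PE\ge\WE\PM g(\rM)/\WM\}$) but carries \emph{no} mass at~$\CM$ (so that the strict-inequality definition $\{\Cs<\th\}$ of secrecy outage still agrees with $F_{\Cs}$ at the boundary $\th=\CM$). Both facts reduce to the almost-sure positivity and finiteness of~$\PE$ for a positive-density planar Poisson process with $b>1$, which is already implicit in Theorem~\ref{thm:Distribution-Cs-Colluding}; everything else is substitution into~(\ref{eq:F-Cs-colluding}).
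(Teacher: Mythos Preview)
Your proposal is correct and follows essentially the same route as the paper, which simply states that both expressions ``follow directly from~(\ref{eq:F-Cs-colluding}).'' Your added bookkeeping about atoms---checking that $\Cs$ has no mass on $(0,\CM]$ so that $\mathbb{P}\{\Cs<\th\}=F_{\Cs}(\th)$ there---is a careful justification of a point the paper leaves implicit, but the underlying derivation (substitute $\th=0$ for $p_{\mathrm{exist}}$, read off $F_{\Cs}(\th)$ for $p_{\mathrm{outage}}$) is identical.
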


\begin{proof}The expressions for $p_{\mathrm{exist}}$ and $p_{\mathrm{outage}}(\th)$
follow directly from (\ref{eq:F-Cs-colluding}).\end{proof}

\subsection{Colluding vs.\ Non-Colluding Eavesdroppers for a Single Link\label{sub:Comparison-Colluding}}

We have so far considered the fundamental secrecy limits of a single
legitimate  link in the presence of colluding eavesdroppers. According
to Theorem~\ref{thm:Cs-Colluding}, such scenario is equivalent to
having a single eavesdropper with an array that collects a total power~$\PET=\sum_{i=1}^{\infty}\PM/\RE i^{2b}$.
In particular, when the eavesdroppers are positioned according to
an homogeneous Poisson process, Theorem~\ref{thm:Distribution-Cs-Colluding}
shows that the RV~$\PE$ has a skewed stable distribution.

We can obtain further insights by establishing a comparison with the
case of a single legitimate link in the presence of \emph{non-colluding
eavesdroppers}. In such scenario, the MSR does not depend on all eavesdroppers,
but only on that with maximum received power (i.e.,~the closest one,
when only path loss is present). Thus, the total eavesdropper power
is given by $\PE=\frac{\PM}{\RE 1^{2b}}$. Using the fact that $\RE 1^{2}$
is exponentially distributed with rate~$\pi\LE$, the PDF of $\PE$
can be written as\[
f_{\PE}(x)=\frac{\pi\LE}{bx}\left(\frac{\PM}{x}\right)^{1/b}\exp\left(-\pi\LE\left(\frac{\PM}{x}\right)^{1/b}\right),\quad x\geq0,\]
and the CDF of the corresponding MSR~$\Cs$ can be easily determined
from (\ref{eq:Cs-collusion}) as\begin{equation}
F_{\Cs}(\th)=\begin{cases}
0, & \th<0,\\
1-\exp\left(-\pi\LE\left(\frac{\frac{\PM}{\WE}}{\left(1+\frac{\PM}{\rM^{2b}\WM}\right)2^{-\th}-1}\right)^{1/b}\right), & 0\leq\th<\CM,\\
1, & \th\geq\CM.\end{cases}\label{eq:F-Cs-non-colluding}\end{equation}
From this CDF, we can readily determine the probability of existence
of a non-zero MSR, and the probability of secrecy outage, similarly
to the colluding case. Table~\ref{tab:c-vs-nc} summarizes the differences
between the colluding and non-colluding scenarios for a single legitimate
link.

\subsection{$\is$Graph with Colluding Eavesdroppers\label{sub:Collusion-SGraph}}

To study the effect of colluding eavesdroppers, we have so far made
a simplification concerning the legitimate nodes. Specifically, we
considered only a single legitimate link with deterministic length~$\rM$
as depicted in Fig.~\ref{fig:collusion}, thus eliminating the randomness
associated with the position of the legitimate nodes. We now revisit
the $\is$graph model depicted in Fig.~\ref{fig:secrecy-graph},
where both legitimate nodes and eavesdroppers are distributed according
to Poisson processes~$\PiM$ and $\PiE$. In particular, the following
theorem characterizes the effect of collusion in terms of the resulting
average node degree in such graph.

\begin{theorem}For the Poisson $\is$graph with colluding eavesdroppers,
secrecy rate threshold~$\th=0$, equal noise powers~$\WM=\WE$,
and channel gain function~$g(r)=\frac{1}{r^{2b}}$, $b>1$, the average
degrees of a typical node are\begin{equation}
\mathbb{E}\{N_{\mathrm{in}}\}=\mathbb{E}\{N_{\mathrm{out}}\}=\frac{\LM}{\LE}\,\textrm{sinc}\!\left(\frac{1}{b}\right),\label{eq:EN-colluding}\end{equation}
where $\textrm{sinc}(x)\triangleq\frac{\sin(\pi x)}{\pi x}$.\end{theorem}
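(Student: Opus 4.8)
The plan is to reduce the average out‑degree to a single negative fractional moment of the aggregate eavesdropper interference seen at the transmitter, and then to evaluate that moment in closed form using a Laplace‑transform argument.

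First I would invoke Slivnyak's theorem and add a legitimate node at the origin, so that $N_{\mathrm{out}}$ is the number of points $x_i\in\PiM$ with $\Cs(0,x_i)>0$. Applying Theorem~\ref{thm:Cs-Colluding} to each candidate edge $\overrightarrow{0x_i}$ (with link length $\rM=|x_i|$) and specializing to $\th=0$, $\WM=\WE$, and $g(r)=r^{-2b}$, the condition $\Cs(0,x_i)>0$ becomes simply $|x_i|^{-2b}>I$, where $I\triangleq\sum_{e_k\in\PiE}|e_k|^{-2b}$ is the total (normalized) power collected by the colluding eavesdroppers from a transmitter at the origin; note $I<\infty$ a.s.\ precisely because $b>1$. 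Hence $N_{\mathrm{out}}$ equals the number of points of $\PiM$ inside the ball $\mathcal{B}_{0}\!\left(I^{-1/(2b)}\right)$. Conditioning on $\PiE$ (equivalently on $I$), $N_{\mathrm{out}}$ is Poisson with mean $\pi\LM I^{-1/b}$, so $\mathbb{E}\{N_{\mathrm{out}}\}=\pi\LM\,\mathbb{E}\{I^{-1/b}\}$.

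The heart of the proof is the evaluation of $\mathbb{E}\{I^{-1/b}\}$. I would compute the Laplace transform of $I$ from the probability generating functional of the Poisson process $\PiE$: $\mathbb{E}\{e^{-uI}\}=\exp\!\left(-\LE\int_{\mathbb{R}^{2}}\bigl(1-e^{-u|x|^{-2b}}\bigr)\,dx\right)$, and a polar‑coordinate substitution together with one integration by parts (whose boundary term vanishes because $b>1$) gives $\int_{\mathbb{R}^{2}}(1-e^{-u|x|^{-2b}})\,dx=\pi\,\Gamma(1-1/b)\,u^{1/b}$, so that $\mathbb{E}\{e^{-uI}\}=\exp\!\left(-\pi\LE\Gamma(1-1/b)\,u^{1/b}\right)$; equivalently one may just cite (\ref{eq:Pe-stable})–(\ref{eq:C-x}), which identify $I=\PE/\PM$ as the skewed stable law $\mathcal{S}(1/b,1,\pi\LE\mathcal{C}_{1/b}^{-1})$. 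Using $\mathbb{E}\{I^{-1/b}\}=\frac{1}{\Gamma(1/b)}\int_{0}^{\infty}u^{1/b-1}\mathbb{E}\{e^{-uI}\}\,du$ (valid by Tonelli) and the change of variable $t=\pi\LE\Gamma(1-1/b)u^{1/b}$ yields $\mathbb{E}\{I^{-1/b}\}=\frac{b}{\pi\LE\,\Gamma(1/b)\Gamma(1-1/b)}$. The reflection formula $\Gamma(1/b)\Gamma(1-1/b)=\pi/\sin(\pi/b)$ then collapses this to $\mathbb{E}\{I^{-1/b}\}=\frac{b\sin(\pi/b)}{\pi^{2}\LE}$, whence $\mathbb{E}\{N_{\mathrm{out}}\}=\frac{\LM}{\LE}\cdot\frac{b\sin(\pi/b)}{\pi}=\frac{\LM}{\LE}\,\textrm{sinc}(1/b)$. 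Finally, $\mathbb{E}\{N_{\mathrm{in}}\}=\mathbb{E}\{N_{\mathrm{out}}\}$ for any directed random graph (as used repeatedly in Section~\ref{sec:Node-Degrees}) closes the argument.

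The main obstacle is the fractional‑moment computation $\mathbb{E}\{I^{-1/b}\}$: one must justify that $I$ is a.s.\ finite (this is exactly where $b>1$ is needed), that the Laplace‑transform/Gamma‑integral representation of the negative moment is legitimate, and that the intermediate integrals converge. The clean final form is a consequence of the exponent $1/b$ of the negative moment matching the stable index $\alpha=1/b$, which is what produces the $\textrm{sinc}(1/b)$ factor; everything else — the geometric reduction via Slivnyak, the Poisson conditioning, and the algebra — is routine.
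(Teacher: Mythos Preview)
Your proof is correct and structurally identical to the paper's: Slivnyak, reduction of $\{\Cs(0,x_i)>0\}$ to $|x_i|<I^{-1/(2b)}$, Poisson conditioning to get $\mathbb{E}\{N_{\mathrm{out}}\}=\pi\LM\,\mathbb{E}\{I^{-1/b}\}$, and the final $\mathbb{E}\{N_{\mathrm{in}}\}=\mathbb{E}\{N_{\mathrm{out}}\}$. The only genuine difference is how you evaluate the negative fractional moment. The paper identifies $I$ with the skewed stable law $\mathcal{S}(1/b,1,\gamma)$ via (\ref{eq:Pe-stable}), normalizes, and then invokes Zolotarev's Mellin transform formula $\mathcal{M}_{\PET}(-\alpha)=\cos(\pi\alpha/2)/\Gamma(1+\alpha)$ to read off $\mathbb{E}\{\PET^{-1/b}\}$; the $\textrm{sinc}$ factor then emerges from combining this with the constant~$\mathcal{C}_{1/b}$ in (\ref{eq:C-x}) and the gamma recursion and reflection identities. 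You instead compute the Laplace transform of $I$ directly from the PGFL, use the Gamma-integral representation $I^{-1/b}=\frac{1}{\Gamma(1/b)}\int_0^\infty u^{1/b-1}e^{-uI}\,du$, and arrive at $b/(\pi\LE\,\Gamma(1/b)\Gamma(1-1/b))$ after a one-line substitution; one application of the reflection formula finishes. Your route is more self-contained---it never needs the stable-law identification or the Mellin result quoted from \cite{Zol:57}---while the paper's route makes the link to the stable-law machinery of Section~\ref{sec:Distribution-Cs-Colluding} explicit and reuses it. Both are clean; yours is arguably the more elementary of the two.
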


\begin{proof}We consider the process~$\PiM\cup\{0\}$ obtained by
adding a legitimate node to the origin of the coordinate system, and
denote the out-degree of the node at the origin by $N_{\mathrm{out}}$.
Using (\ref{eq:Cs-collusion}), we can write\begin{align*}
N_{\mathrm{out}} & =\#\left\{ x_{i}\in\PiM:\cs i>0\right\} \\
 & =\#\Biggl\{ x_{i}\in\PiM:\RM i^{2}<\underbrace{\left(\frac{\PM}{\PE}\right)^{1/b}}_{\triangleq\nu^{2}}\Biggr\}.\end{align*}
The average out-degree can be determined as\begin{align}
\mathbb{E}\{N_{\mathrm{out}}\} & =\mathbb{E}_{\PiM,\PiE}\{\PiM\{\mathcal{B}_{0}(\nu)\}\}\nonumber \\
 & =\mathbb{E}_{\PiE}\{\LM\pi\nu^{2}\}\nonumber \\
 & =\LM\pi\mathbb{E}_{\PiE}\left\{ \left(\frac{\PM}{\PE}\right)^{1/b}\right\} .\label{eq:EN-colluding-temp}\end{align}
where the RV~$\PE$ has a stable distribution with parameters given
in (\ref{eq:Pe-stable}). As before, we define the normalized stable
RV~~$\PET\triangleq\PE\gamma^{-b}$ with $\gamma=\pi\LE\mathcal{C}_{1/b}^{-1}\PM^{1/b}$,
such that $\PET\sim\mathcal{S}\left(\frac{1}{b},1,1\right)$. Then,
we can rewrite (\ref{eq:EN-colluding-temp}) as\begin{equation}
\mathbb{E}\{N_{\mathrm{out}}\}=\frac{\LM}{\LE}\mathcal{C}_{1/b}\mathbb{E}\{\PET^{-1/b}\}.\label{eq:EN-colluding-temp2}\end{equation}
Using the Mellin transform of a stable RV, we show in Appendix~\ref{sec:Deriv-Mellin}
that (\ref{eq:EN-colluding-temp2}) simplifies to\begin{equation}
\mathbb{E}\{N_{\mathrm{out}}\}=\frac{\LM}{\LE}\,\textrm{sinc}\!\left(\frac{1}{b}\right).\label{eq:EN-colluding-temp3}\end{equation}
Noting that $\mathbb{E}\{N_{\mathrm{in}}\}=\mathbb{E}\{N_{\mathrm{out}}\}$
for any directed random graph, we obtain the desired result in (\ref{eq:EN-colluding}).\end{proof}

It is insightful to rewrite (\ref{eq:EN-colluding}) as $\mathbb{E}\{N_{\mathrm{out}}|\textrm{colluding}\}=\mathbb{E}\{N_{\mathrm{out}}|\textrm{non-colluding}\}\cdot\eta(b)$,
where $\eta(b)=\textrm{sinc}\!\left(\frac{1}{b}\right)$, and $\eta(b)<1$
for $b>1$. The function~$\eta(b)$ can be interpreted as the \emph{degradation
factor in average connectivity due to eavesdropper collusion. }In
the extreme where $b=1$, we have complete loss of secure connectivity
with $\eta(1)=0$. This is because the series~$\PE=\sum_{i=1}^{\infty}\frac{\PM}{\RE i^{2b}}$
diverges (i.e., the total received eavesdropper power is infinite),
so the resulting average node degree is zero. In the other extreme
where $b\rightarrow\infty$, we achieve the highest secure connectivity
with $\eta(\infty)=1$. This is because the first term~$\frac{\PM}{\RE 1^{2b}}$
in the $\PE$ series (corresponding to the non-colluding term) is
dominant, so the average node degree in the colluding case approaches
the non-colluding one. In conclusion, cluttered environments with
larger amplitude loss exponents~$b$ are more favorable for secure
communication, in the sense that in such environments collusion only
provides a marginal performance improvement for the eavesdroppers.

\subsection{Numerical Results\label{sec:Case-Study}}

We now illustrate the results obtained in the previous sections with
a simple case study. We consider the case where $\WM=\WE=\W$, i.e.,
the legitimate link and the eavesdroppers are subject to the same
noise power, which is introduced by the electronics of the respective
receivers. Furthermore, we consider that the amplitude loss exponent
is $b=2$, in which case the CDF of $\PET$ for colluding eavesdroppers
can be expressed using the Gaussian $Q$-function as $F_{\PET}(x)=2Q(1/\sqrt{x}),x\geq0$.
The CDF of $\Cs$ in (\ref{eq:F-Cs-colluding}) reduces to\begin{equation}
F_{\Cs}(\th)=\begin{cases}
0, & \th<0,\\
1-2Q\left(\pi\LE\mathcal{C}_{1/2}^{-1}\sqrt{\frac{\frac{\PM}{\W}}{\left(1+\frac{\PM}{\rM^{4}\W}\right)2^{-\th}-1}}\right), & 0\leq\th<\CM,\\
1, & \th\geq\CM.\end{cases}\label{eq:F-Cs-example}\end{equation}
In addition, (\ref{eq:Pexist-collusion}) and (\ref{eq:Poutage-collusion})
reduce, respectively, to\begin{equation}
p_{\mathrm{exist}}=2Q\left(\pi\LE\rM^{2}\mathcal{C}_{1/2}^{-1}\right)\label{eq:Pexist-example}\end{equation}
and\begin{equation}
p_{\mathrm{outage}}(\th)=\begin{cases}
1-2Q\left(\pi\LE\mathcal{C}_{1/2}^{-1}\sqrt{\frac{\frac{\PM}{\W}}{\left(1+\frac{\PM}{\rM^{4}\W}\right)2^{-\th}-1}}\right), & 0<\th<\CM,\\
1, & \th\geq\CM.\end{cases}\label{eq:Poutage-example}\end{equation}
From these analytical results, we observe that of the following factors
lead to\emph{ }a \emph{degradation} of the security of communications:
increasing $\LE$ or $\rM$, decreasing $\PM/\W$, or allowing the
eavesdroppers to collude. In particular, as we let $\PM/\W\rightarrow\infty$,
$p_{\mathrm{outage}}$ decreases monotonically, converging to the
curve~$p_{\mathrm{outage}}=1-\exp\left(-\pi\LE\rM^{2}2^{\th/2}\right)$
in the non-colluding case, and to $p_{\mathrm{outage}}=1-2Q\left(\pi\LE\rM^{2}\mathcal{C}_{1/2}^{-1}2^{\th/2}\right)$
in the colluding case.

Figure~\ref{fig:pdf-pe-c-nc} compares the PDFs of the (normalized)
received eavesdropper power~$\frac{\PE}{\PM}$, for the cases of
colluding and non-colluding eavesdroppers. For $b>1$, it is clear
that $\sum_{i=1}^{\infty}\frac{1}{\RE i^{2b}}>\frac{1}{\RE 1^{2b}}$
a.s., i.e., the received eavesdropper power~$\PE$ is larger in the
colluding case, resulting in a PDF whose mass is more biased towards
higher realizations of $\PE$. 

Figure~\ref{fig:pexist-lambda-collusion} plots the probability~$p_{\mathrm{exist}}$
of existence of a non-zero MSR, given in (\ref{eq:Pexist-example}),
as a function of the eavesdropper density~$\LE$, for various values
of the legitimate link length~$\rM$. As predicted by analytically,
the existence of a non-zero MSR becomes \emph{less likely }by increasing
$\LE$ or $\rM$.%
\footnote{Note that $p_{\mathrm{exist}}$ in (\ref{eq:Pexist-example}) depends
on $\LE$ and $\rM$ only through the product~$\LE\rM^{2}$. %
} A similar degradation in secrecy occurs by allowing the eavesdroppers
to collude, since more signal power from the legitimate user is available
to the eavesdroppers, improving their ability to decode the secret
message.

Figure~\ref{fig:cdf-cs-collusion} quantifies the probability~$p_{\mathrm{outage}}$
of secrecy outage, given in (\ref{eq:Poutage-example}), as a function
of the desired secrecy rate~$\th$, for various values of eavesdropper
density. The vertical line marks the capacity~$\CM$ of the legitimate
link, which for the parameters indicated in Fig.~\ref{fig:cdf-cs-collusion}
is $\CM=\log_{2}\left(1+\frac{\PM}{\rM^{2b}\WM}\right)=3.46$ bits
per complex dimension. As expected, if the target secrecy rate~$\th$
set by the transmitter exceeds $\CM$, a secrecy outage occurs with
probability~1, since the MSR~$\Cs$ cannot be greater that the capacity~$\CM$
of the legitimate link. In comparison with the non-colluding case,
the ability of the eavesdroppers to collude leads to higher probabilities
of secrecy outage. This is because more signal power from the legitimate
user is available to the eavesdroppers, improving their ability to
decode the secret message. A similar degradation in secrecy occurs
by increasing the eavesdropper density~$\LE$.

Figure~\ref{fig:out-degree-collusion} quantifies the (normalized)
average node degree of the $\is$graph, $\frac{\mathbb{E}\{N_{\mathrm{out}}\}}{\LM/\LE}$,
versus the amplitude loss exponent~$b$. The normalizing factor~$\LM/\LE$
corresponds to the average out-degree in the non-colluding case. As
predicted analytically, we observe that in the colluding case, the
normalized average out-degree~$\eta(b)=\frac{\mathbb{E}\{N_{\mathrm{out}}\}}{\LM/\LE}$
is strictly increasing with $b$. Furthermore, $\eta(1)=0$ because
the received eavesdropper power~$\PE$ is infinite, and $\eta(\infty)=1$
because the first (non-colluding) term in the $\PE$ series dominates
the other terms. It is apparent from the figure that cluttered environments
with larger amplitude loss exponents~$b$ are more favorable for
secure communication, in the sense that in such environments collusion
only provides a marginal performance improvement for the eavesdroppers.

\section{Conclusions\label{sec:Conclusion}}

Using the notion of strong secrecy, we provided an information-theoretic
definition of the $\is$graph as a model for intrinsically secure
communication in large-scale networks. Fundamental tools from stochastic
geometry allowed us to describe in detail how the spatial densities
of legitimate and eavesdropper nodes influence various properties
of the Poisson $\is$graph, such as node degrees and isolation probabilities.
In particular, we proved that the average in- and out-degrees equal
$\frac{\LM}{\LE}$, and that out-isolation is more probable than in-isolation.
In addition, we \textit{\emph{considered the effect of the wireless
propagation on the degree of the legitimate nodes. Surprisingly, }}the
average node degree is invariant with respect to the distribution
of the propagation effects (e.g.,~type of fading or shadowing), and
is always equal to the ratio~$\frac{\LM}{\LE}$ of spatial densities.\textit{\emph{
}}We then studied the effect of non-zero secrecy rate threshold~$\th$
and unequal noise powers $\WM,\WE$ on the $\is$graph. Specifically,
we showed that $\mathbb{E}\{N_{\mathrm{out}}\}$ is decreasing in
$\th$ and $\WM$, and is increasing in $\WE$. Furthermore, when
the channel gain is of the form~$g(r)=\frac{1}{r^{2b}}$, we obtained
expressions for $\mathbb{E}\{N_{\mathrm{out}}\}$ as a function of
$\th,\WM,\WE$, and showed that it decays exponentially with $\th$.

We explored the potential of sectorized transmission and eavesdropper
neutralization as two techniques for enhancing the secrecy of communications.
If each legitimate node is able to transmit independently in $L$~sectors
of the plane, our results prove that $\mathbb{E}\{N_{\mathrm{out}}\}$
increases \emph{linearly} with $L$. On the other hand, if legitimate
nodes are able to inspect their surrounding area to guarantee that
there are no eavesdroppers within a neutralization radius~$\rho$,
then $\mathbb{E}\{N_{\mathrm{out}}\}$ increases \emph{at least} \emph{exponentially}
with $\rho$.

The PDF of the MSR~$\cs i$ between a legitimate node and its $i$-th
neighbor was characterized, as well as the probability of existence
of a non-zero MSR, and the probability of secrecy outage. In particular,
we quantified how these metrics depend on the densities~$\LM,\LE$,
the SNR~$\frac{\PM}{\W}$, and the amplitude loss exponent~$b$.

Finally, we established the fundamental secrecy limits when the eavesdroppers
are allowed to collude, by showing that this scenario is equivalent
to a SIMO Gaussian wiretap channel. For an arbitrary spatial process~$\PiE$
of the eavesdroppers, we derived the MSR of a legitimate link. Then,
for the case where $\PiE$ is a spatial Poisson process and the channel
gain is of the form~$g(r)=\frac{1}{r^{2b}}$,  we obtained the CDF
of MSR of a legitimate link, and the average degree in the $\is$graph
with colluding eavesdroppers. We concluded that as we increase the
density~$\LE$ of eavesdroppers, or allow the eavesdroppers to collude,
more power is available to the adversary, improving their ability
to decode the secret message, and hence decreasing the MSR of legitimate
links. Furthermore, we showed that cluttered environments with large
amplitude loss exponent~$b$ are move favorable for secure communications,
in the sense that in such regime collusion only provides a marginal
performance improvement for the eavesdroppers.

Perhaps the most interesting insight to be gained from our results,
is the exact quantification of the impact of the eavesdropper density~$\LE$
on the achievable secrecy rates --- a modest density of scattered
eavesdroppers can potentially cause a drastic reduction in the MSR
provided at the physical layer of wireless communication networks.
Our work has not yet addressed all of the far reaching implications
of the broadcast property of the wireless medium. In the most general
scenario, legitimate nodes could for example transmit their signals
in a cooperative fashion, whereas malicious nodes could use jamming
to disrupt all communications. We hope that further efforts in combining
stochastic geometry with information-theoretic principles will lead
to a more comprehensive treatment of wireless security.

\appendices

\section{Proof that Inequality (\ref{eq:pin-isol-strict}) is Strict\label{sec:Deriv-Strict-Ineq}}

Define the event~$F_{i}\triangleq\{\PiE\{\mathcal{B}_{\breve{x}_{i}}(|\breve{x}_{i}|)\}\geq1\}$
and its complementary event~$E_{i}$, which denote \emph{full} and
\emph{empty}, respectively. Using this notation, we can rewrite (\ref{eq:pin-isol-proof})
as\begin{align*}
p_{\mathrm{in-isol}} & =\mathbb{P}\left\{ \bigwedge_{i=1}^{\infty}F_{i}\right\} \\
 & \leq\mathbb{P}\{F_{1}\wedge F_{2}\}.\end{align*}
To prove that $p_{\mathrm{in-isol}}<\mathbb{P}\{F_{1}\}$ as in (\ref{eq:pin-isol-strict}),
it is sufficient to show that $\mathbb{P}\{F_{1}\wedge F_{2}\}<\mathbb{P}\{F_{1}\}$,
or equivalently, $\mathbb{P}\{F_{1}\}-\mathbb{P}\{F_{1}\wedge F_{2}\}=\mathbb{P}\{F_{1}\wedge E_{2}\}>0$.
Define the ball~$\mathcal{B}_{i}\triangleq\mathcal{B}_{\breve{x}_{i}}(|\breve{x}_{i}|)$.
Then, with reference to the auxiliary diagram in Fig.~\ref{fig:aux-in-isol},
we can write\begin{align}
\mathbb{P}\{F_{1}\wedge E_{2}\} & =\mathbb{E}_{\PiM}\{\mathbb{P}\{F_{1}\wedge E_{2}|\PiM\}\}\nonumber \\
 & =\mathbb{E}_{\PiM}\left\{ \left(1-e^{-\LE\mathbb{A}\{\mathcal{B}_{1}\backslash\mathcal{B}_{2}\}}\right)\cdot e^{-\LE\mathbb{A}\{\mathcal{B}_{2}\}}\right\} .\label{eq:P-FE}\end{align}
Since $\mathcal{B}_{1}\nsubseteq\mathcal{B}_{2}$ a.s., then $\mathbb{A}\{\mathcal{B}_{1}\backslash\mathcal{B}_{2}\}>0$
a.s., and the argument inside the expectation in (\ref{eq:P-FE})
is strictly positive, and thus $\mathbb{P}\{F_{1}\wedge E_{2}\}>0$.
This concludes the proof.

\section{Derivation of (\ref{eq:Px-bound})\label{sec:Deriv-Px-bound}}

Because $\PiM$ is a Poisson process, the Palm probability~$\mathbb{P}_{x}\{|x|<\RE 1\}$
in (\ref{eq:ENout-temp}) can be computed using Slivnyak's theorem
by adding a legitimate node at location~$x$ to $\PiM$. For a fixed\textbf{~$x\in\An(\rho,\infty)$},
we can thus write\begin{align}
\mathbb{P}_{x}\{|x|<\RE 1\} & =\mathbb{P}_{\Theta,\PiE}\{\PiE\{\overline{\Theta}\cap\An(\rho,|x|)\backslash\mathcal{B}_{x}(\rho)\}=0\}\label{eq:Px-temp-1}\\
 & \geq\mathbb{P}_{\Theta,\PiE}\{\PiE\{\overline{\Theta}\cap\An(\rho,|x|)\}=0\}\label{eq:Px-temp-2}\\
 & =\mathbb{E}_{\Theta}\{\exp(-\LE\mathbb{A}\{\overline{\Theta}\cap\An(\rho,|x|)\}\}\label{eq:Px-temp-3}\\
 & \geq\exp(-\LE\mathbb{E}_{\Theta}\{\mathbb{A}\{\overline{\Theta}\cap\An(\rho,|x|)\}\}),\label{eq:Px-temp-4}\end{align}
Equation (\ref{eq:Px-temp-3}) follows from conditioning on $\Theta$,
and using the fact that $\PiE$ and $\Theta$ are independent. Equation
(\ref{eq:Px-temp-4}) follows from Jensen's inequality. The term inside
the exponential in (\ref{eq:Px-temp-4}) corresponds to the average
area of a random shape, and can  be computed using Fubini's theorem
as \begin{align}
\mathbb{E}_{\Theta}\{\mathbb{A}\{\overline{\Theta}\cap\An(\rho,|x|)\}\} & =\mathbb{E}_{\Theta}\left\{ {\int\intop}_{\mathbb{R}^{2}}\mathbbm{1}\{y\in\overline{\Theta}\cap\An(\rho,|x|)\}dy\right\} \nonumber \\
 & ={\int\intop}_{\An(\rho,|x|)}\mathbb{P}\{y\in\overline{\Theta}\}dy\nonumber \\
 & ={\int\intop}_{\An(\rho,|x|)}\mathbb{P}\{\PiM\{\mathcal{B}_{y}(\rho)\}=0\}dy\nonumber \\
 & ={\int\intop}_{\An(\rho,|x|)}\underbrace{e^{-\LM\pi\rho^{2}}}_{\triangleq p_{\overline{\Theta}}}dy\nonumber \\
 & =p_{\overline{\Theta}}\pi(|x|^{2}-\rho^{2})\label{eq:EA-temp}\end{align}
Note that $p_{\overline{\Theta}}$ corresponds to the probability
that a fixed point~$y$ is \emph{outside} the total neutralization
region~$\Theta$, and does not depend on the coordinates of $y$
due to the stationarity of the process~$\Theta$. Replacing (\ref{eq:EA-temp})
into (\ref{eq:Px-temp-4}), we obtain the desired inequality in (\ref{eq:Px-bound}).

\section{Derivation of (\ref{eq:EN-colluding-temp3})\label{sec:Deriv-Mellin}}

Let the Mellin transform of a RV~$X$ with PDF~$f_{X}(x)$ be defined
as%
\footnote{In the literature, the Mellin transform is sometimes defined differently
as $\mathcal{M}_{X}(s)\triangleq\int_{0}^{\infty}x^{s-1}f_{X}(x)dx$.
For simplicity, we prefer the definition in (\ref{eq:Mellin-transform}).%
}\begin{equation}
\mathcal{M}_{X}(s)\triangleq\int_{0}^{\infty}x^{s}f_{X}(x)dx.\label{eq:Mellin-transform}\end{equation}
If $X\sim\mathcal{S}\left(\alpha,1,1\right)$ with $0<\alpha<1$,
then~\cite[Eq. (17)]{Zol:57}\begin{equation}
\mathcal{M}_{X}(s)=\left(\cos\left(\frac{\pi\alpha}{2}\right)\right)^{-s/\alpha}\frac{\Gamma\left(1-\frac{s}{\alpha}\right)}{\Gamma(1-s)},\label{eq:Mellin-stable}\end{equation}
for $-1<\textrm{Re}\{s\}<\alpha$. Then, since $\PET\sim\mathcal{S}\left(\alpha,1,1\right)$
with $\alpha=\frac{1}{b}\in(0,1)$, we use (\ref{eq:Mellin-stable})
to write\begin{align}
\mathbb{E}\{\PET^{-\alpha}\} & =\int_{0}^{\infty}x^{-\alpha}f_{\PET}(x)dx\nonumber \\
 & =\mathcal{M}_{\PET}(-\alpha)\nonumber \\
 & =\frac{\cos\left(\frac{\pi\alpha}{2}\right)}{\Gamma(1+\alpha)}.\label{eq:E-Pe-temp}\end{align}
Using (\ref{eq:C-x}) and (\ref{eq:E-Pe-temp}), we expand (\ref{eq:EN-colluding-temp2})
as\begin{align*}
\mathbb{E}\{N_{\mathrm{out}}\} & =\frac{\LM}{\LE}\mathcal{C}_{\alpha}\mathbb{E}\{\PET^{-\alpha}\}\\
 & =\frac{\LM}{\LE}\cdot\frac{1-\alpha}{\Gamma(2-\alpha)\cos\left(\frac{\pi\alpha}{2}\right)}\cdot\frac{\cos\left(\frac{\pi\alpha}{2}\right)}{\Gamma(1+\alpha)}\\
 & =\frac{\LM}{\LE}\cdot\frac{1-\alpha}{\Gamma(2-\alpha)\Gamma(1+\alpha)}\\
 & =\frac{\LM}{\LE}\cdot\frac{\sin(\pi\alpha)}{\pi\alpha},\end{align*}
where we used the following properties of the gamma function: $\Gamma(z+1)=z\Gamma(z)$
and $\Gamma(z)\Gamma(1-z)=\frac{\pi}{\sin(\pi z)}$. Defining $\textrm{sinc}(x)\triangleq\frac{\sin(\pi x)}{\pi x}$
and noting that $\alpha=\frac{1}{b}$, we obtain (\ref{eq:EN-colluding-temp3}).

\section*{Acknowledgements}

\noindent The authors would like to thank L.~A.~Shepp, Y.~Shen,
and W.~Swantantisuk for their helpful suggestions.

\bibliographystyle{bibtex/IEEEtran}
\bibliography{bibtex/IEEEabrv,bibtex/StringDefinitions,bibtex/WGroup,bibtex/BiblioCV}

\begin{table}[p]
\begin{centering}
\begin{tabular}{|c|c|}
\hline 
Symbol & Usage\tabularnewline
\hline
$\mathbb{E}\{\cdot\}$ & Expectation operator\tabularnewline
$\mathbb{P}\{\cdot\}$ & Probability operator\tabularnewline
$*$ & Convolution operator\tabularnewline
$\dagger$ & Conjugate transpose operator\tabularnewline
$f_{X}(x)$ & Probability density function of $X$\tabularnewline
$F_{X}(x)$ & Cumulative distribution function of $X$\tabularnewline
$H(X)$ & Entropy of $X$\tabularnewline
$\PiM=\{x_{i}\},\PiE=\{e_{i}\}$ & Poisson processes of legitimate nodes and eavesdroppers\tabularnewline
$\LM,\LE$ & Spatial densities of legitimate nodes and eavesdroppers\tabularnewline
$\Pi\{\mathcal{R}\}$ & Number of nodes of process~$\Pi$ in region~$\mathcal{R}$ \tabularnewline
$N_{\mathrm{in}},N_{\mathrm{out}}$ & In-degree and out-degree of a node\tabularnewline
$\mathcal{B}_{x}(\rho)$ & Ball centered at $x$ with radius~$\rho$\tabularnewline
$\An(a,b)$ & Annular region between radiuses~$a$ and $b$, centered at the origin\tabularnewline
$\mathbb{A}\{\mathcal{R}\}$ & Area of region~$\mathcal{R}$\tabularnewline
$Z_{x_{i},x_{j}}$ & Random propagation effect between $x_{i}$ and $x_{j}$\tabularnewline
$\RM i$ & Distance between $x_{i}\in\PiM$ and origin\tabularnewline
$\RE i$ & Distance between $e_{i}\in\PiE$ and origin\tabularnewline
$\#S$ & Number of elements in the set~$S$\tabularnewline
$\mathcal{G}(x,\theta)$ & Gamma distribution with mean~$x\theta$ and variance~$x\theta^{2}$\tabularnewline
$\mathcal{N}(\mu,\sigma^{2})$ & Gaussian distribution with mean~$\mu$ and variance~$\sigma^{2}$\tabularnewline
$\mathcal{S}(\alpha,\beta,\gamma)$  & Stable distribution with characteristic exponent~$\mbox{\ensuremath{\alpha}}$,
skewness~$\mbox{\ensuremath{\beta}}$, and dispersion~$\mbox{\ensuremath{\gamma}}$\tabularnewline
\hline
\end{tabular}
\par\end{centering}

\caption{\label{tab:notation}Notation and symbols.}

\end{table}

\begin{figure}[p]
\begin{centering}
\scalebox{1}{\psfrag{Alice}{\footnotesize{\sf{Alice}}}
\psfrag{Bob}{\footnotesize{\sf{Bob}}}
\psfrag{Eve}{\footnotesize{\sf{Eve}}}
\psfrag{main channel}{\hspace{-3mm}\footnotesize{\sf{Legitimate channel}}}
\psfrag{wiretap channel}{\hspace{-2mm}\footnotesize{\sf{Eavesdropper channel}}}
\psfrag{encoder}{\hspace{1mm}\footnotesize{\sf{encoder}}}
\psfrag{decoder}{\hspace{1mm}\footnotesize{\sf{decoder}}}
\psfrag{s}{\normalsize{$s$}}
\psfrag{ssM}{\normalsize{$\hat{s}_{\M}$}}
\psfrag{ssE}{\normalsize{$\hat{s}_{\E}$}}
\psfrag{xM}{\normalsize{$x^{n}$}}
\psfrag{hM}{\normalsize{$h_{\M}^{n}$}}
\psfrag{hE}{\normalsize{$h_{\E}^{n}$}}
\psfrag{wM}{\normalsize{$w_{\M}^{n}$}}
\psfrag{wE}{\normalsize{$w_{\E}^{n}$}}
\psfrag{yM}{\normalsize{$y_{\M}^{n}$}}
\psfrag{yE}{\normalsize{$y_{\E}^{n}$}}\includegraphics{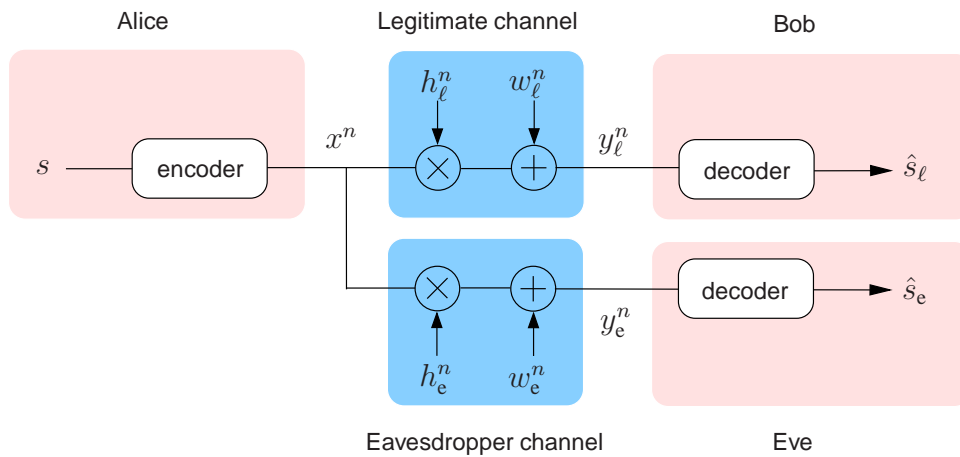}}
\par\end{centering}

\caption{\label{fig:wiretap-channel}Wireless wiretap channel.}

\end{figure}

\begin{figure}[p]
\begin{centering}
\scalebox{1}{\psfrag{Friendly node}{\footnotesize{\sf{Legitimate node}}}
\psfrag{Eavesdropper node}{\footnotesize{\sf{Eavesdropper node}}}\includegraphics{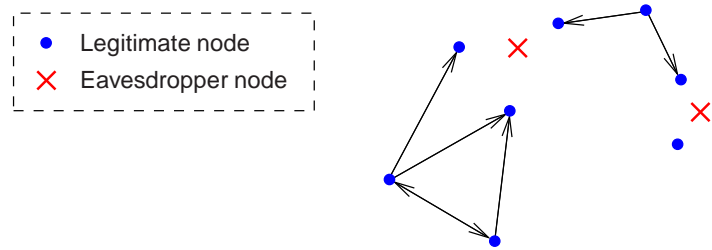}}
\par\end{centering}

\caption{\label{fig:secrecy-graph}Example of an $\is$graph on $\mathbb{R}^{2},$
considering that the secrecy rate threshold is zero, the wireless
environment introduces only path loss, and the noise powers of the
legitimate and eavesdropper nodes are equal. In such scenario, a transmitter~$x_{i}$
is connected to a receiver~$x_{j}$ if and only if $x_{j}$ is closer
to $x_{i}$ than any other eavesdropper, as described in (\ref{eq:edges-pathloss}).}

\end{figure}

\begin{figure}[p]
\begin{centering}
\scalebox{1}{\psfrag{Nout=3}{\normalsize{$N_\textrm{out}=3$}}
\psfrag{Re}{\hspace{-2mm}\normalsize{$\RE{1}$}}
\psfrag{R1}{\hspace{-2mm}\normalsize{$\RM{1}$}}
\psfrag{R2}{\hspace{-2mm}\normalsize{$\RM{2}$}}
\psfrag{R3}{\hspace{-2mm}\normalsize{$\RM{3}$}}\includegraphics{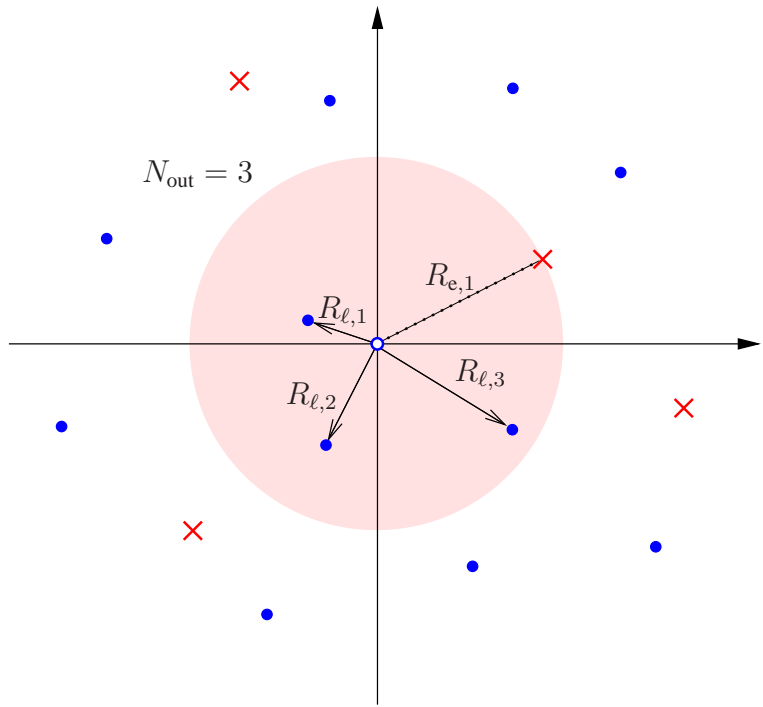}}
\par\end{centering}

\caption{\label{fig:out-degree}Out-degree of a node. In this example, the
node at the origin can transmit messages with information-theoretic
security to $N_{\mathrm{out}}=3$ nodes.}

\end{figure}

\begin{figure}[p]
\begin{centering}
\scalebox{1}{\psfrag{Nin=2}{\normalsize{$N_\textrm{in}=2$}}
\psfrag{R1}{\hspace{-2mm}\normalsize{$\RM{1}$}}
\psfrag{R2}{\hspace{-2mm}\normalsize{$\RM{2}$}}
\psfrag{area A}{\footnotesize{\sf{area $A$}}}\includegraphics{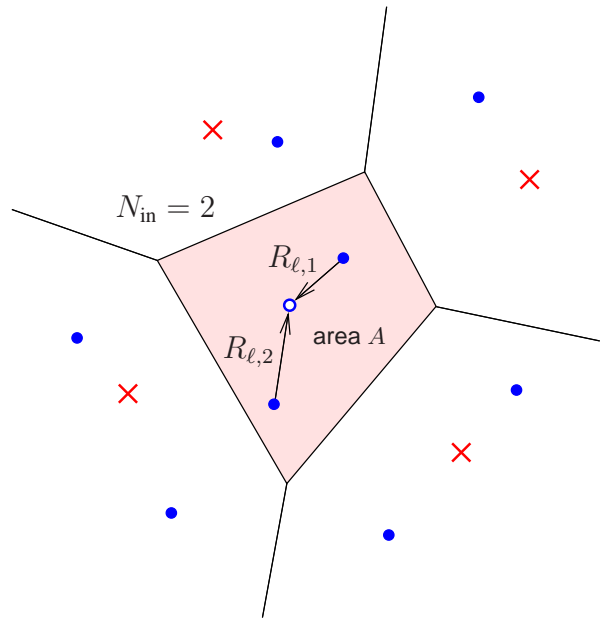}}
\par\end{centering}

\caption{\label{fig:in-degree}In-degree of a node. In this example, the node
at the origin can receive messages with information-theoretic security
from $N_{\mathrm{in}}=2$ nodes. The RV~$A$ is the area of a typical
Voronoi cell, induced by the eavesdropper Poisson process~$\PiE$
with density~$\LE$.}

\end{figure}

\begin{figure}
\centering{}\subfigure[]{
  \label{fig:aux-voronoi}
  \scalebox{0.8}{
   \psfrag{y}{\normalsize{$z$}}
   \psfrag{C0}{\normalsize{$\mathcal{C}_0$}}
   \psfrag{0}{\normalsize{$0$}}
   \psfrag{Pi1}{\normalsize{$\widetilde{\Pi}$}}  

   \includegraphics{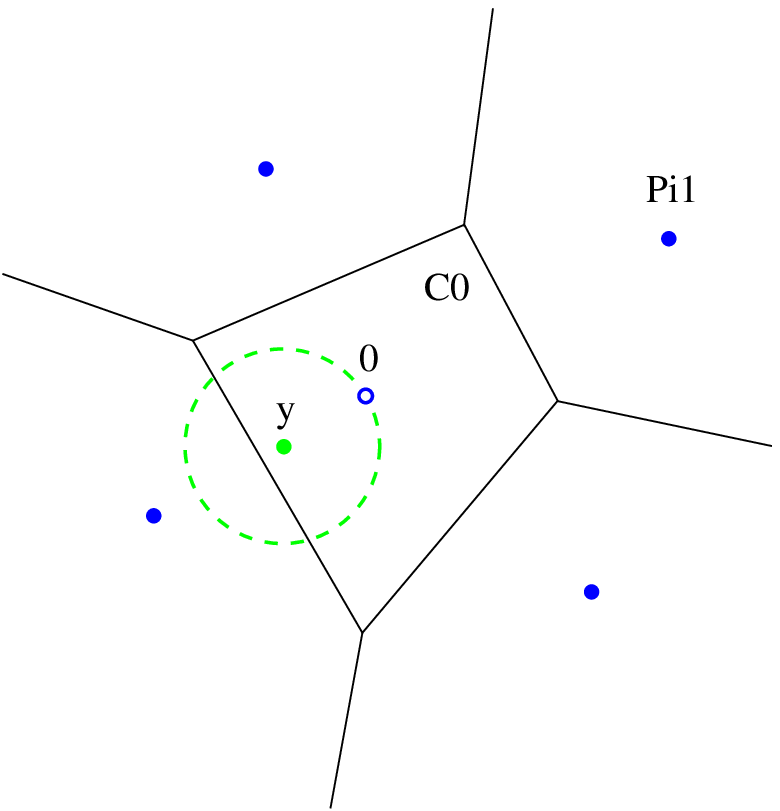}
  }
}\subfigure[]{
  \label{fig:aux-out-isol}
  \scalebox{0.8}{
   \psfrag{x1}{\normalsize{$\breve{x}_{1}$}}
   \psfrag{x2}{\normalsize{$\breve{x}_{2}$}}
   \psfrag{0}{\normalsize{$0$}}

   \includegraphics{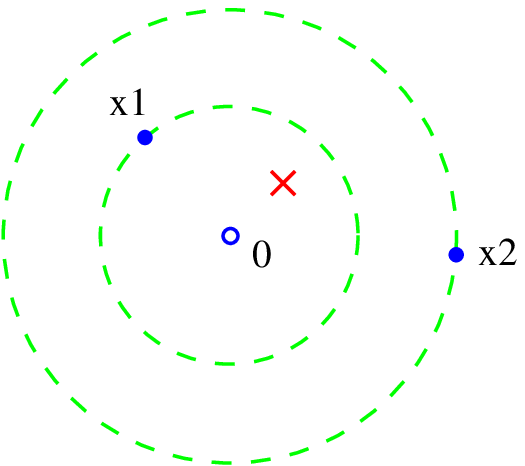}
  }
}\subfigure[]{
  \label{fig:aux-in-isol}
  \scalebox{0.8}{
   \psfrag{x1}{\normalsize{$\breve{x}_{1}$}}
   \psfrag{x2}{\normalsize{$\breve{x}_{2}$}}
   \psfrag{0}{\normalsize{$0$}}

   \includegraphics{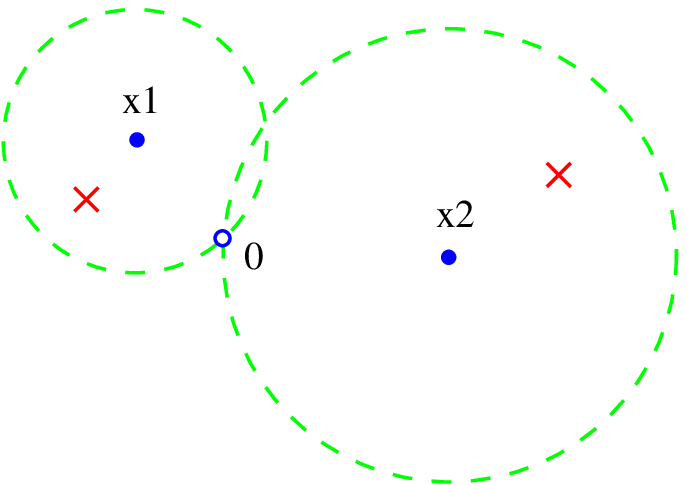}
  }
}\caption{\label{fig:aux-drawings}Auxiliary diagrams.}

\end{figure}

\begin{table}[p]
\begin{centering}
\begin{tabular}{|c|ccccccc|}
\hline 
$n\backslash k$ & 1 & 2 & 3 & 4 & 5 & 6 & 7\tabularnewline
\hline
1 & 1 &  &  &  &  &  & \tabularnewline
2 & 1 & 1 &  &  &  &  & \tabularnewline
3 & 1 & 3 & 1 &  &  &  & \tabularnewline
4 & 1 & 7 & 6 & 1 &  &  & \tabularnewline
5 & 1 & 15 & 25 & 10 & 1 &  & \tabularnewline
6 & 1  & 31  & 90  & 65  & 15  & 1 & \tabularnewline
7 & 1  & 63  & 301  & 350  & 140  & 21  & 1\tabularnewline
\hline
\end{tabular}
\par\end{centering}

\caption{\label{tab:Stirling}Stirling numbers of the second kind.}

\end{table}

\begin{table}[p]
\begin{centering}
\begin{tabular}{|c|cccc|}
\hline 
$k$ & 1 & 2 & 3 & 4\tabularnewline
\hline 
$\mathbb{E}\{\widetilde{A}^{k}\}$ & 1 & 1.280 & 1.993 & 3.650\tabularnewline
\hline
\end{tabular}
\par\end{centering}

\caption{\label{tab:Moments-A}First four moments of the random area $\widetilde{A}$
of a typical Voronoi cell, induced by a unit-density Poisson process~\cite{Bra:86}.}

\end{table}

\begin{figure}[p]
\begin{centering}
\scalebox{0.6}{\psfrag{probability}{\large{\sf{probability}}}
\psfrag{n}{\hspace{-8mm}\large{\sf{degree $n$}}}
\psfrag{PMF Nout}{\Large{$p_{N_{\mathrm{out}}}(n)$}}
\psfrag{PMF Nin}{\Large{$p_{N_{\mathrm{in}}}(n)$}}\includegraphics{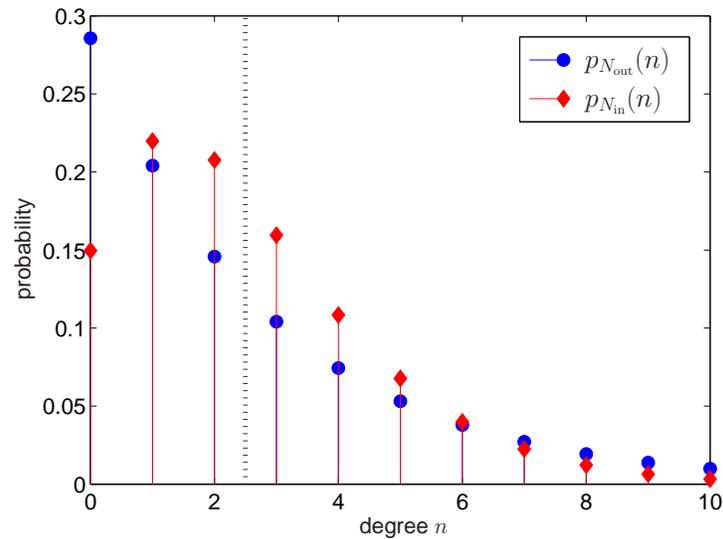}}
\par\end{centering}

\caption{\label{fig:pmf-degrees}PMF of the in- and out-degree of a node ($\frac{\LE}{\LM}=0.4$).
The vertical line marks the average node degrees, $\mathbb{E}\{N_{\mathrm{out}}\}=\mathbb{E}\{N_{\mathrm{in}}\}=\frac{\LM}{\LE}=2.5$,
in accordance with Property~\ref{prop:ENin-ENout}.}

\end{figure}

\begin{figure}
\begin{centering}
\scalebox{0.6}{\psfrag{probability}{\large{\sf{probability}}}
\psfrag{lambda}{\large{$\;\LE/\LM$}}
\psfrag{P-out}{\Large{$p_{\textrm{out-isol}}$}}
\psfrag{P-in}{\Large{$p_{\textrm{in-isol}}$}}\includegraphics{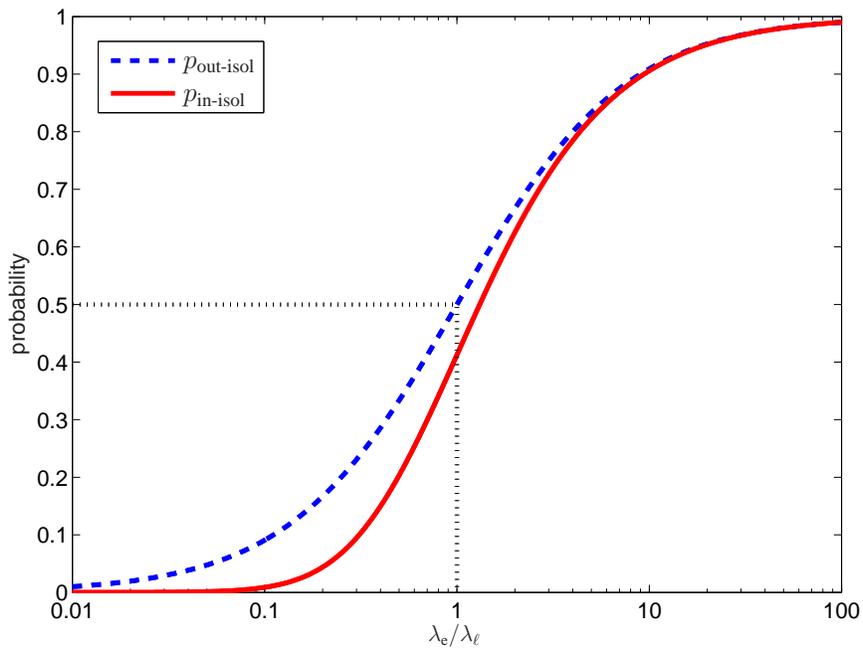}}
\par\end{centering}

\caption{\label{fig:p-isol}Probabilities of in- and out-isolation of a node,
versus the ratio~$\frac{\LE}{\LM}$. Note that $p_{\mathrm{in-isol}}<p_{\mathrm{out-isol}}$
for any fixed~$\frac{\LE}{\LM}$, as proved in Property~\ref{prop:pin-pout-isol}.}

\end{figure}

\begin{figure}
\begin{centering}
\scalebox{1}{\psfrag{R}{\normalsize{$y$}}
\psfrag{RE}{\normalsize{$r$}}
\psfrag{Pi}{\normalsize{$\{\RM{i}\}$}}
\psfrag{PiE}{\normalsize{$\{\RE{i}\}$}}
\psfrag{Nout=3}{\normalsize{$N_\textrm{out}=3$}}
\psfrag{g}{\normalsize{$\left(\frac{\PM}{\WM(2^{\th}-1)}\right)^{1/2b}$}}
\psfrag{f}{\normalsize{$y=\ff(r)$}}\includegraphics{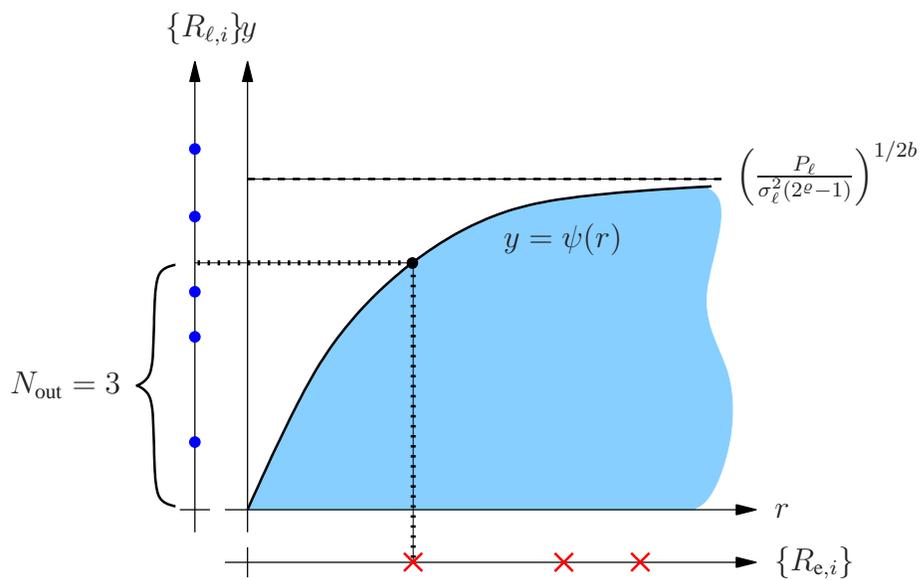}}
\par\end{centering}

\caption{\label{fig:out-degree-theta}The effect of non-zero secrecy rate threshold~$\th$
and unequal noise powers~$\WM,\WE$ on the average node degree, for
the case of $g(r)=\frac{1}{r^{2b}}$. The function~$\ff(r)$ was
defined in (\ref{eq:f-ub}).}

\end{figure}

\begin{figure}
\begin{centering}
\scalebox{0.6}{\psfrag{node degree}{\hspace{-10mm}\large{\sf{average node degree}}}
\psfrag{theta }{\large{\hspace{-3mm}$\th\;\mathsf{(bits)}$}}
\psfrag{PW=0.5}{\Large{$\PM/\W=0.5$}}
\psfrag{PW=5}{\Large{$\PM/\W=5$}}
\psfrag{PW=inf}{\Large{$\PM/\W=\infty$}}
\psfrag{analytical E-Nout}{\large{\sf{analytical $\mathbb{E}\{N_{\mathrm{out}}\}$}}}
\psfrag{analytical upper bound for E-Nout}{\large{\sf{analytical upper bound for $\mathbb{E}\{N_{\mathrm{out}}\}$}}}\includegraphics{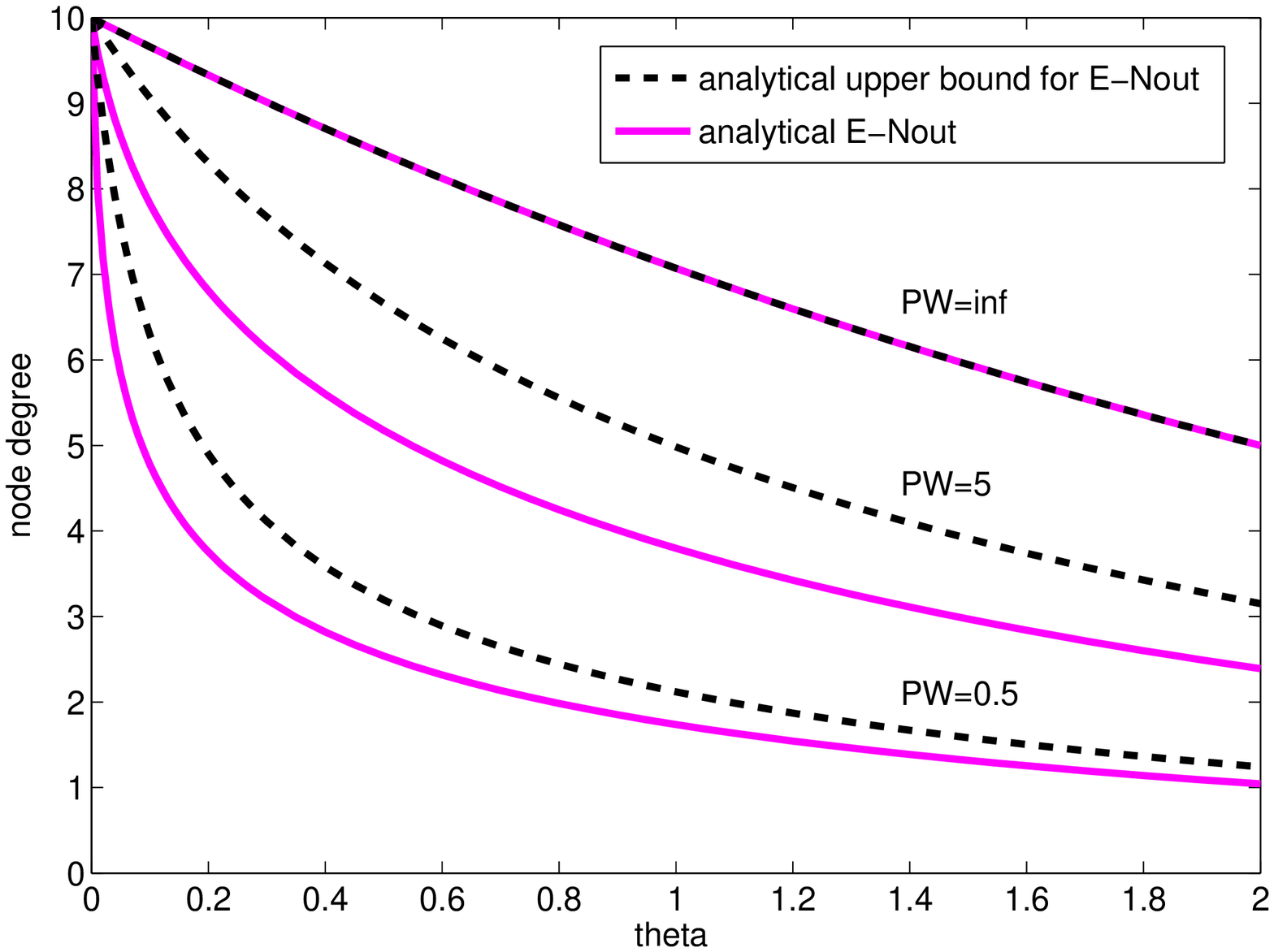}}
\par\end{centering}

\caption{\label{fig:theta-PW-plot}Average node degree versus the secrecy rate
threshold~$\th$, for various values of $\PM/\W$ ($\WM=\WE=\W$,
$g(r)=\frac{1}{r^{2b}}$, $b=2$, $\LM=1\,\textrm{m}^{-2}$, $\LE=0.1\,\textrm{m}^{-2}$). }

\end{figure}

\begin{figure}[p]
\begin{centering}
\scalebox{1}{\psfrag{Nout=5}{\normalsize{$N_\textrm{out}=5$}}
\psfrag{Re}{\normalsize{$\Gamma_\textrm{1}$}}
\psfrag{S1}{\normalsize{$\mathcal{S}^{(1)}$}}
\psfrag{S2}{\normalsize{$\mathcal{S}^{(2)}$}}
\psfrag{S3}{\normalsize{$\mathcal{S}^{(3)}$}}
\psfrag{S4}{\normalsize{$\mathcal{S}^{(4)}$}}
\includegraphics{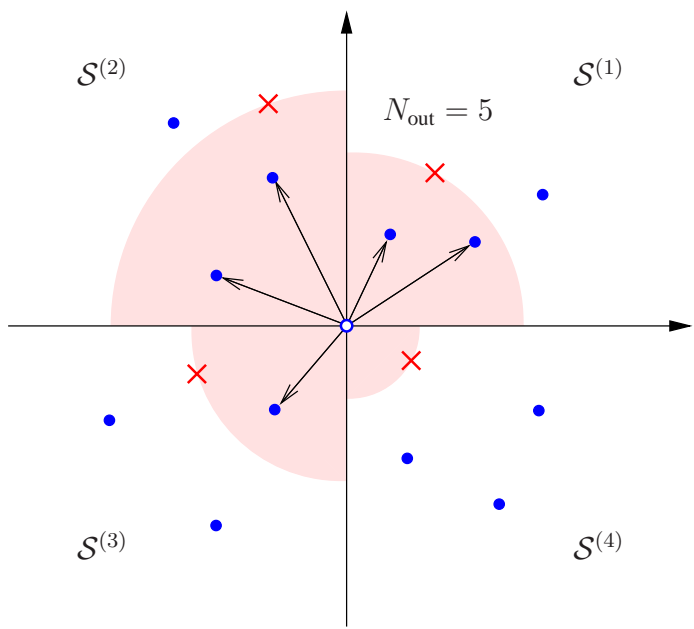}}
\par\end{centering}

\caption{\label{fig:out-degree-sectors}Out-degree of a node with sectorized
transmission. In this example with $L=4$ sectors, the node at the
origin can transmit messages with information-theoretic security to
$N_{\mathrm{out}}=5$ nodes.}

\end{figure}

\begin{figure}[p]
\begin{centering}
\scalebox{1}{\psfrag{Nout=5}{\normalsize{$N_\textrm{out}=5$}}
\psfrag{G1}{\hspace{-2mm}\normalsize{$\RE{1}$}}
\psfrag{p}{\normalsize{$\rho$}}
\psfrag{R}{\normalsize{$\An(0,\RE{1})\cap\overline{\Theta}$}}\includegraphics{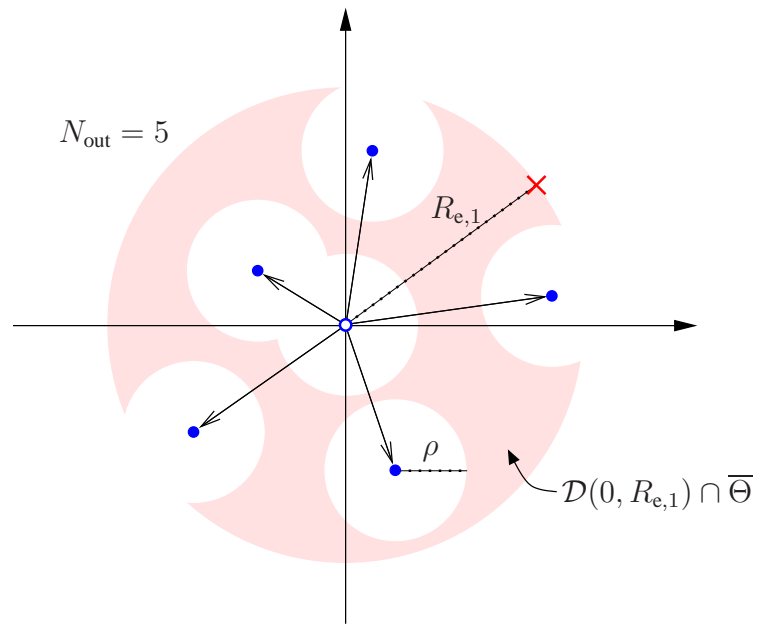}}
\par\end{centering}

\caption{\label{fig:out-degree-exclusion}Out-degree of a node with eavesdropper
neutralization. In this example, the node at the origin can transmit
messages with information-theoretic security to $N_{\mathrm{out}}=5$
nodes.}

\end{figure}

\begin{figure}
\begin{centering}
\scalebox{0.6}{\psfrag{node degree}{\hspace{-10mm}\large{\sf{average node degree}}}
\psfrag{rc (m)}{\large{$\rho$ ($\mathsf{m}$)}}
\psfrag{laE=0.1}{\Large{$\mathsf{\LE=0.1\, m^{-2}}$}}
\psfrag{laE=0.2}{\Large{$\mathsf{\LE=0.2\, m^{-2}}$}}
\psfrag{laE=0.5}{\Large{$\mathsf{\LE=0.5\, m^{-2}}$}}
\psfrag{simulated E-Nout}{\large{\sf{simulated $\mathbb{E}\{N_{\mathrm{out}}\}$}}}
\psfrag{analytical lower bound for E-Nout}{\large{\sf{analytical lower bound for $\mathbb{E}\{N_{\mathrm{out}}\}$}}}\includegraphics{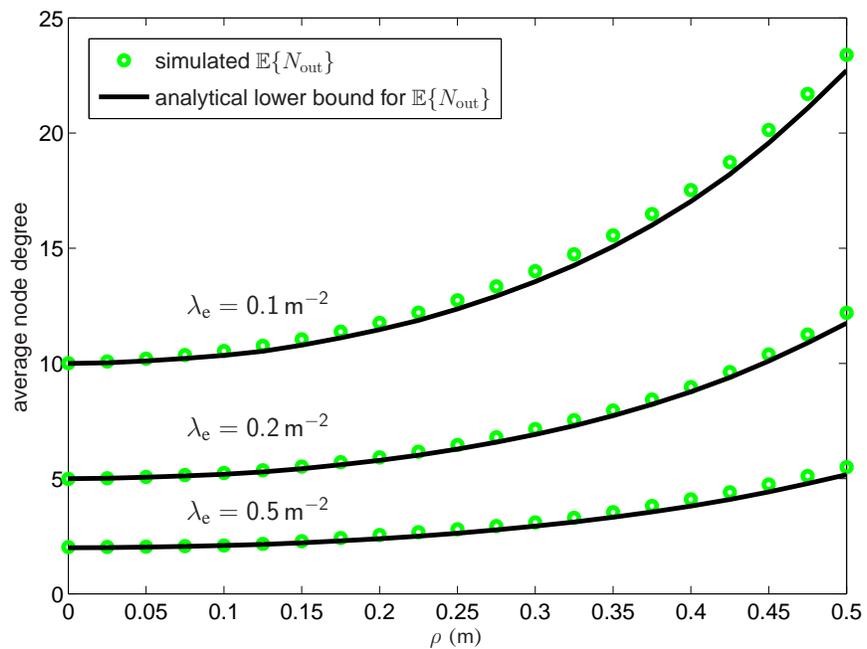}}
\par\end{centering}

\caption{\label{fig:exclusion-lambdaE-plot}Average node degree versus the
neutralization radius~$\rho$, for various values of $\LE$ ($\LM=1\,\textrm{m}^{-2}$).}

\end{figure}

\clearpage

\begin{figure}
\begin{centering}
\scalebox{0.75}{\psfrag{pexist}{\Large{$p_{\mathrm{exist},i}$}}
\psfrag{laE}{\hspace{-5mm}\large{$\LE$ ($\mathsf{m^{-2}}$)}}
\psfrag{i=1,2,4,6}{\large{$i=\mathsf{1,2,4,6}$}}\includegraphics{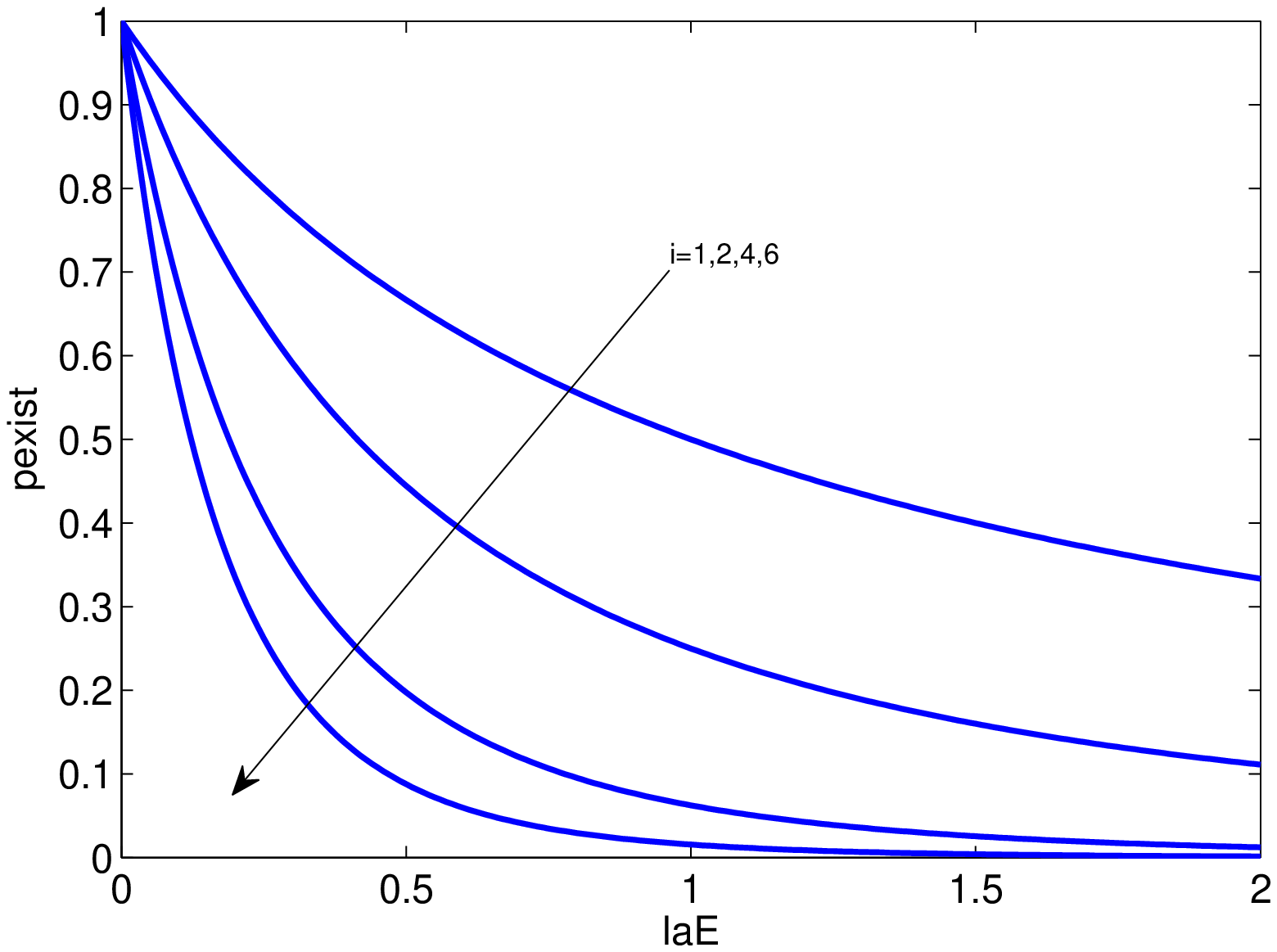}}
\par\end{centering}

\caption{\label{fig:pexist-lambda}Probability~$p_{\mathrm{exist},i}$ of
existence of a non-zero MSR versus the eavesdropper density~$\LE$,
for various values of the neighbour index~$i$ ($\LM=1\,\textrm{m}^{-2}$,
$b=2$, $\PM/\W=10$, $\th=1\,\textrm{bit}$).}

\end{figure}

\begin{figure}
\begin{centering}
\scalebox{0.75}{\psfrag{poutage(Rs)}{\Large{$p_{\mathrm{outage},i}(\th)$}}
\psfrag{Rs (bits/s)}{\hspace{-15mm}\large{\sf{secrecy rate $\th$ (bits)}}}
\psfrag{i=1,2,4,6}{\large{$i=\mathsf{1,2,4,6}$}}\includegraphics{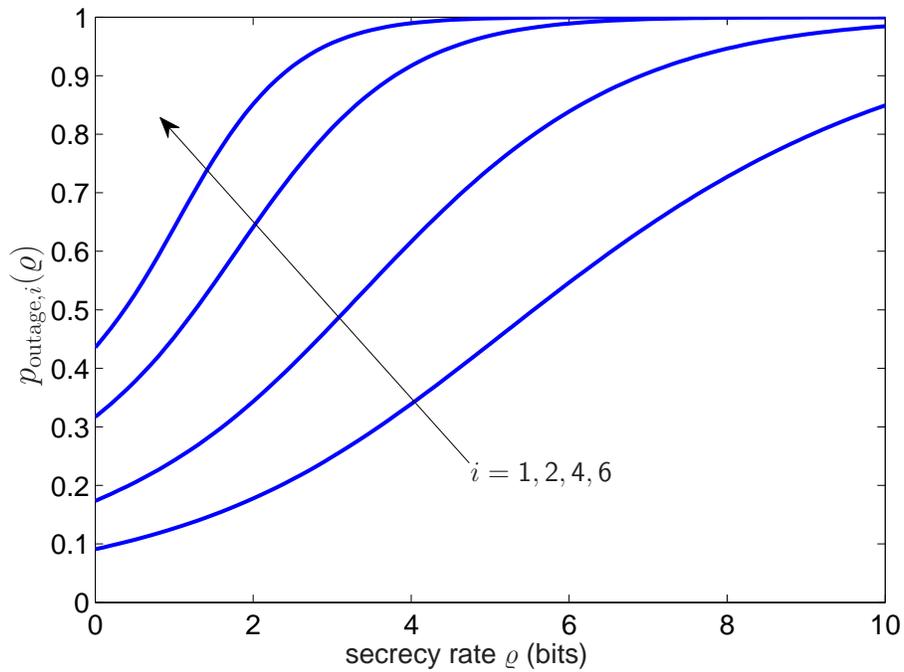}}
\par\end{centering}

\caption{\label{fig:cdf-csi}Probability~$p_{\mathrm{outage},i}$ of secrecy
outage between a node and its $i$-th closest neighbour, for various
values of the neighbour index~$i$ ($\LM=1\,\textrm{m}^{-2}$, $\LE=0.1\,\textrm{m}^{-2}$,
$b=2$, $\PM/\W=10$).}

\end{figure}

\begin{figure}
\begin{centering}
\scalebox{1}{\psfrag{Alice}{\footnotesize{\sf{Alice (probe transmitter)}}}
\psfrag{Bob}{\footnotesize{\sf{Bob (probe receiver)}}}
\psfrag{Colluding eavesdroppers}{\footnotesize{\sf{Colluding eavesdroppers}}}
\psfrag{r0}{\hspace{1mm}\normalsize{$\rM$}}
\psfrag{R1}{\normalsize{$\RE{1}$}}
\psfrag{R2}{\hspace{-2mm}\normalsize{$\RE{2}$}}
\psfrag{R3}{\hspace{-2mm}\normalsize{$\RE{3}$}}\includegraphics{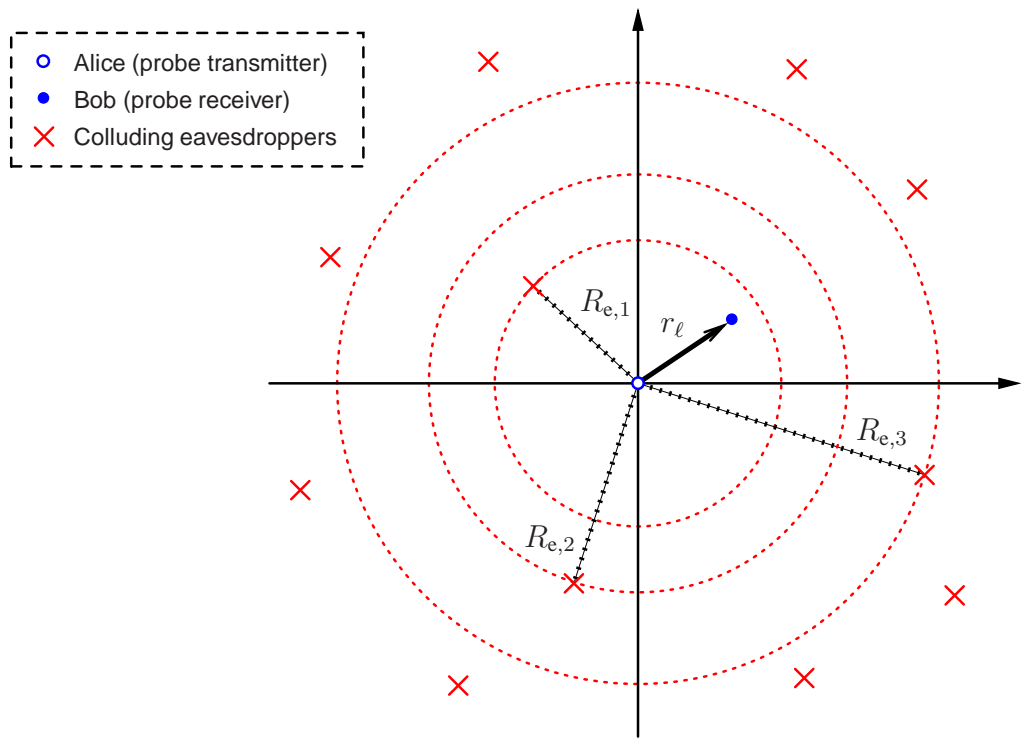}}
\par\end{centering}

\caption{\label{fig:collusion}Communication in the presence of colluding eavesdroppers.}

\end{figure}

\begin{figure}
\begin{centering}
\scalebox{1}{\psfrag{Alice}{\normalsize{\sf{Alice}}}
\psfrag{Bob}{\normalsize{\sf{Bob}}}
\psfrag{Eve}{\normalsize{\sf{Eve}}}
\psfrag{main channel}{\hspace{-5mm}\small{\sf{Legitimate channel}}}
\psfrag{wiretap channel}{\hspace{-5mm}\small{\sf{Eavesdropper channel}}}
\psfrag{xM}{\large{$x$}}
\psfrag{hM}{\large{$\mathbf{h}_{\M}$}}
\psfrag{hE}{\large{$\mathbf{h}_{\E}$}}
\psfrag{wM}{\large{$\mathbf{w}_{\M}$}}
\psfrag{wE}{\large{$\mathbf{w}_{\E}$}}
\psfrag{yM}{\large{$\mathbf{y}_{\M}$}}
\psfrag{yE}{\large{$\mathbf{y}_{\E}$}}\includegraphics{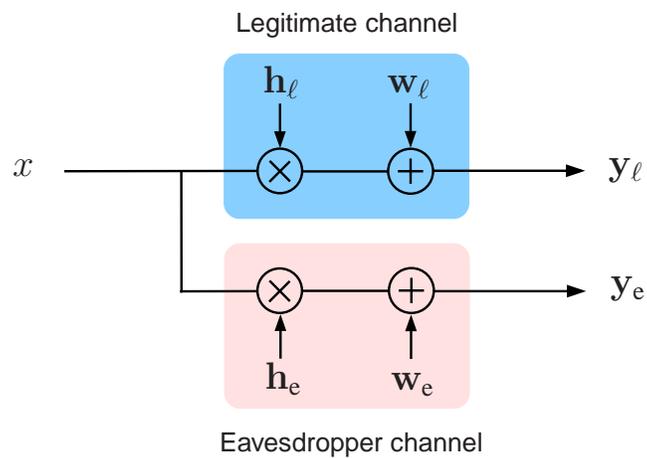}}
\par\end{centering}

\caption{\label{fig:SIMO}SIMO Gaussian wiretap channel, which can be used
to analyze the scenario of colluding eavesdroppers depicted in Fig.~\ref{fig:collusion}.}

\end{figure}

\begin{table}[p]
\noindent \begin{centering}
\begin{tabular}{|c|c|}
\hline 
Non-colluding & Colluding\tabularnewline
\hline 
$\PE=\frac{P_{\mathrm{l}}}{\RE 1^{2b}}$ & $\PE=\sum_{i=1}^{\infty}\frac{\PM}{\RE i^{2b}}$\tabularnewline
\hline 
$f_{\PE}(x)=\frac{\pi\LE}{bx}\left(\frac{\PM}{x}\right)^{1/b}\exp\left(-\pi\LE\left(\frac{\PM}{x}\right)^{1/b}\right),x\geq0$ & $\PE\sim\mathcal{S}\left(\alpha=\frac{1}{b},\:\beta=1,\:\gamma=\pi\LE\mathcal{C}_{1/b}^{-1}\PM^{1/b}\right)$\tabularnewline
\hline 
$F_{\Cs}(c)=1-\exp\left(-\pi\LE\left(\frac{\frac{\PM}{\WE}}{\left(1+\frac{\PM}{\rM^{2b}\WM}\right)2^{-\th}-1}\right)^{1/b}\right),0\leq\th<\CM$ & $F_{\Cs}(c)=1-F_{\PET}\left(\frac{\left(1+\frac{\PM}{\rM^{2b}\WM}\right)2^{-\th}-1}{(\pi\LE\mathcal{C}_{1/b}^{-1})^{b}\frac{\PM}{\WE}}\right),0\leq\th<\CM$\tabularnewline
 & with $\PET\sim\mathcal{S}\left(\alpha=\frac{1}{b},\:\beta=1,\:\gamma=1\right)$\tabularnewline
\hline
$p_{\mathrm{exist}}=\exp\left(-\pi\LE\rM^{2}\left(\frac{\WM}{\WE}\right)^{1/b}\right)$ & $p_{\mathrm{exist}}=F_{\PET}\left(\frac{\WE}{(\pi\LE\rM^{2}\mathcal{C}_{1/b}^{-1})^{b}\WM}\right)$\tabularnewline
\hline
$\mathbb{E}\{N_{\mathrm{out}}\}=\frac{\LM}{\LE}$ & $\mathbb{E}\{N_{\mathrm{out}}\}=\frac{\LM}{\LE}\,\textrm{sinc}\!\left(\frac{1}{b}\right)$\tabularnewline
\hline
\end{tabular}
\par\end{centering}

\caption{\label{tab:c-vs-nc}Comparison between the cases of non-colluding
and colluding eavesdroppers, considering a single legitimate link,
and a channel gain of the form~$g(r)=\frac{1}{r^{2b}}$.}

\end{table}

\begin{figure}
\begin{centering}
\scalebox{0.6}{\psfrag{pdf}{\Large{\hspace{-5mm}$f_{\PE/\PM}(x)$}}
\psfrag{x}{\hspace{-20mm}\large{\sf{normalized power $x$}}}

\psfrag{Colluding}{\large{\sf{Colluding}}}
\psfrag{Non-colluding}{\large{\sf{Non-colluding}}}\includegraphics{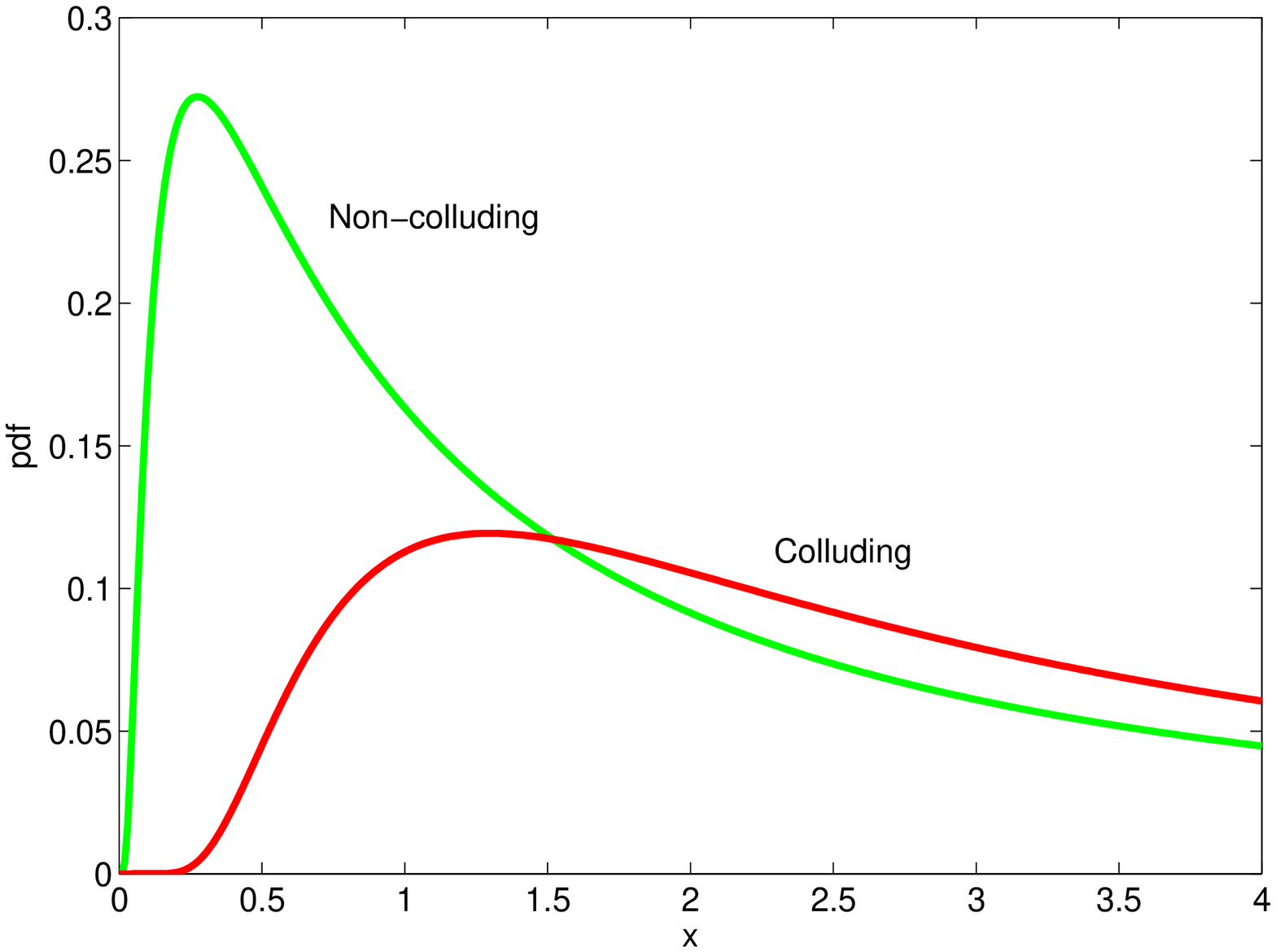}}
\par\end{centering}

\caption{\label{fig:pdf-pe-c-nc}PDF~$f_{\PE/\PM}(x)$ of the (normalized)
received eavesdropper power~$\PE/\PM$, for the cases of colluding
and non-colluding eavesdroppers ($b=2$, $\LE=0.5\,\textrm{m}^{-2}$).}

\end{figure}

\begin{figure}
\begin{centering}
\scalebox{0.6}{\psfrag{pexist}{\Large{$p_{\mathrm{exist}}$}}
\psfrag{lambda}{\large{$\LE$ ($\mathsf{m^{-2}}$)}}

\psfrag{Non-colluding,r0=1}{\large{\sf{Non-colluding, $\rM=\mathsf{1\, m}$}}}
\psfrag{Non-colluding,r0=2}{\large{\sf{Non-colluding, $\rM=\mathsf{2\, m}$}}}
\psfrag{Colluding,r0=1}{\large{\sf{Colluding, $\rM=\mathsf{1\, m}$}}}
\psfrag{Colluding,r0=2}{\large{\sf{Colluding, $\rM=\mathsf{2\, m}$}}}\includegraphics{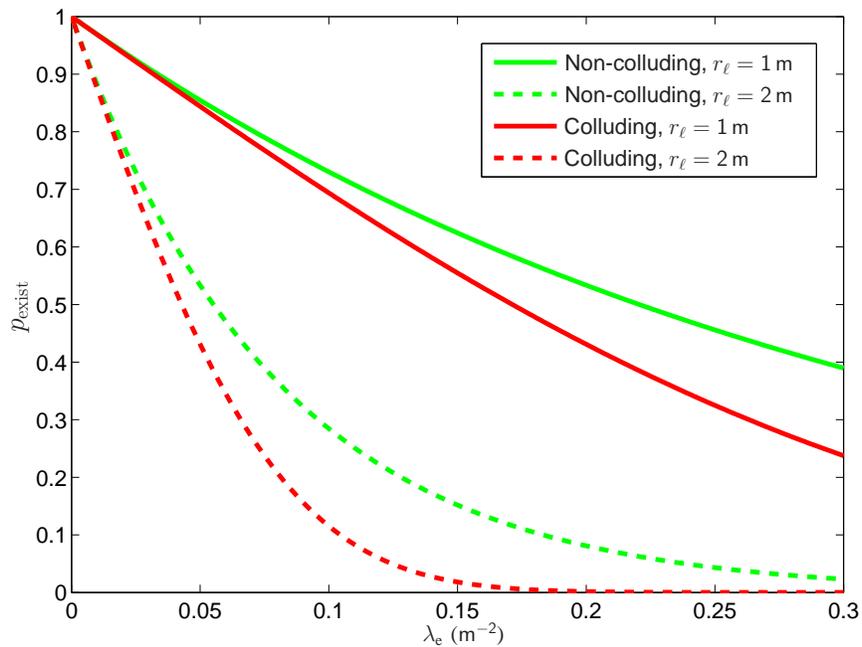}}
\par\end{centering}

\caption{\label{fig:pexist-lambda-collusion}Probability~$p_{\mathrm{exist}}$
of existence of a non-zero MSR versus the eavesdropper density~$\LE$,
for the cases of colluding and non-colluding eavesdroppers, and various
values of $\rM$ ($b=2$).}

\end{figure}

\begin{figure}
\begin{centering}
\scalebox{0.55}{\psfrag{Fc(c)}{\Large{$p_{\mathrm{outage}}(\th)$}}
\psfrag{c (bits/complex dimension)}{\hspace{3mm}\large{\sf{secrecy rate $\th$ (bits)}}}

\psfrag{Non-colluding,lambda=0.1}{\large{\sf{Non-colluding, $\mathsf{\LE=0.1\, m^{-2}}$}}}
\psfrag{Non-colluding,lambda=0.2}{\large{\sf{Non-colluding, $\mathsf{\LE=0.2\, m^{-2}}$}}}
\psfrag{Colluding,lambda=0.1}{\large{\sf{Colluding, $\mathsf{\LE=0.1\, m^{-2}}$}}}
\psfrag{Colluding,lambda=0.2}{\large{\sf{Colluding, $\mathsf{\LE=0.2\, m^{-2}}$}}}\includegraphics{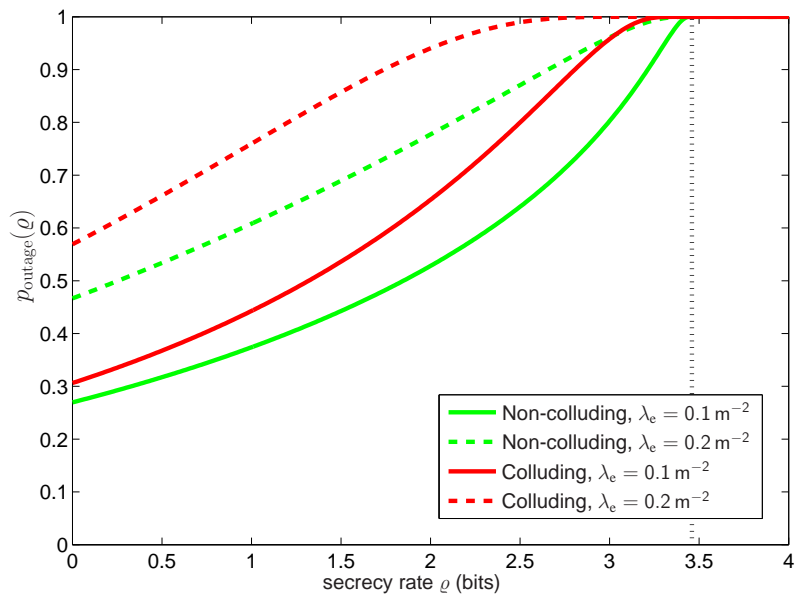}}
\par\end{centering}

\caption{\label{fig:cdf-cs-collusion}Probability~$p_{\mathrm{outage}}$ of
secrecy outage for the cases of colluding and non-colluding eavesdroppers,
and various densities~$\LE$ of eavesdroppers ($b=2$, $\PM/\W=10$,
$\rM=1\,\textrm{m}$). The vertical line marks the capacity of the
legitimate link, which for these system parameters is $\CM=3.46$
bits/complex dimension. }

\end{figure}

\begin{figure}
\begin{centering}
\scalebox{0.6}{\psfrag{b}{\large{\hspace{-0mm}$b$}}
\psfrag{average node degree}{\hspace{-13mm}\large{\sf{normalized average node degree}}}

\psfrag{Colluding}{\large{\sf{Colluding}}}
\psfrag{Non-colluding}{\large{\sf{Non-colluding}}}\includegraphics{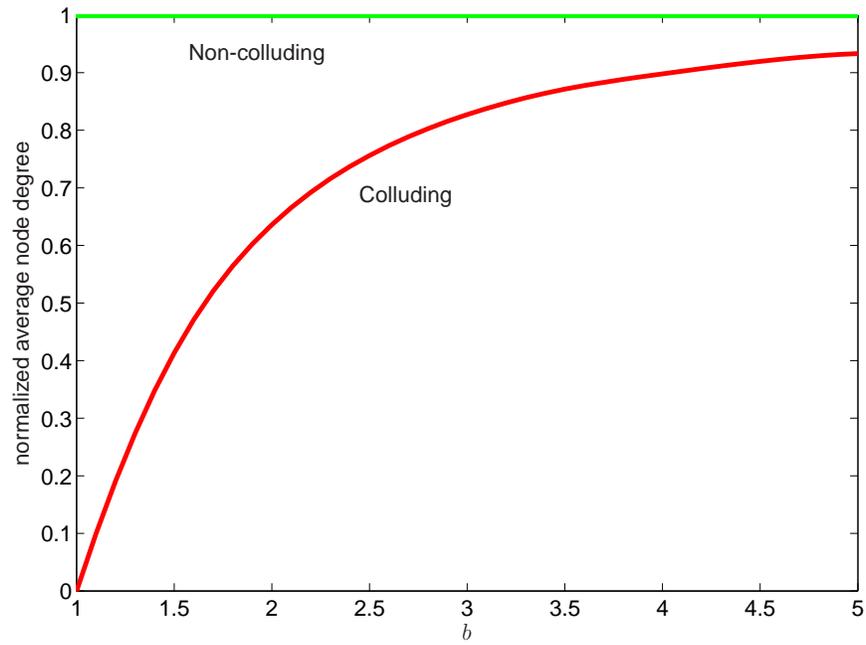}}
\par\end{centering}

\caption{\label{fig:out-degree-collusion}Normalized average node degree of
the $\is$graph, $\frac{\mathbb{E}\{N_{\mathrm{out}}\}}{\LM/\LE}$,
versus the amplitude loss exponent~$b$, for the cases of colluding
and non-colluding eavesdroppers.}

\end{figure}

\end{document}